\newcommand{\CC}{\mathbb{C}}
\newcommand{\RR}{\mathbb{R}} 
\newcommand{\RRp}{\RR_+}
\renewcommand{\SS}{\mathbb{S}}
\newcommand{\ZZ}{\mathbb{Z}}
\newcommand{\cB}{\mathcal{B}} 
\newcommand{\cI}{\mathcal{I}}
\newcommand{\cF}{\mathcal{F}}
\newcommand{\cE}{\mathcal{E}}
\newcommand{\cH}{\mathcal{H}}
\newcommand{\cO}{\mathcal{O}}
\newcommand{\cQ}{\mathcal{Q}}
\newcommand{\oD}{D}
\newcommand{\tc}{\mathrm{c}}
\newcommand{\oper}{\nu}
\newcommand{\wt}{w}
\newcommand{\Qab}{\cQ^{A,B}}
\newcommand{\Qabxy}{\cQ^{A\sd xy,B\sd xy}}
\newcommand{\fr}{\phi_{\rho}}
\newcommand{\frr}{\phi_{\rho'}}
\renewcommand{\a}{\alpha}
\newcommand{\D}{\Delta} 
\renewcommand{\d}{\delta} 
\newcommand{\G}{\Gamma}
\newcommand{\Gh}{\G}
\newcommand{\g}{\gamma} 
\renewcommand{\L}{\Lambda} 
\renewcommand{\l}{\lambda}
\renewcommand{\b}{\beta}
\renewcommand{\S}{\Sigma} 
\newcommand{\s}{\sigma}
\newcommand{\eps}{\varepsilon}
\renewcommand{\max}{\mathrm{max}}
\newcommand{\el}{\langle} 
\newcommand{\er}{\rangle}
\newcommand{\tr}{\mathrm{tr}}
\newcommand{\ev}{\mathrm{ev}}
\newcommand{\odd}{\mathrm{odd}}
\newcommand{\rf}{\mathrm{f}}
\newcommand{\sd}{\bigtriangleup}
\newcommand{\fk}{{\sc fk}}
\newcommand{\fkg}{{\sc fkg}}
\newcommand{\gks}{{\sc gks}}
\newcommand{\ghs}{{\sc ghs}}
\newcommand{\lra}{\leftrightarrow}
\newcommand{\nlra}{\nleftrightarrow}
\newcommand{\bc}{\rho_{\mathrm{c}}}
\newcommand{\bs}{\rho_{\mathrm{s}}}
\renewcommand{\b}{\beta}
\newcommand{\oo}{\infty}
\newcommand{\qq}{\quad\quad}
\newcommand{\rc}{random-cluster}
\newcommand{\bra}[1]{\langle#1|}
\newcommand{\ket}[1]{|#1\rangle}
\newcommand{\Si}{\Sigma}
\newcommand{\sm}{\setminus}
\newcommand{\resp}{respectively}
\newcommand{\even}{\mathrm{even}}
\renewcommand{\odd}{\mathrm{odd}}
\renewcommand{\o}{\mathrm{o}}
\newcommand{\wtilde}{\widetilde}
\newcommand{\what}{\widehat}
\newcommand{\pd}{\partial}
\newcommand{\es}{\varnothing}
\newcommand{\od}{d}
\newcommand{\ol}{\overline}
\newcommand{\lrao}[1]{\overset{#1}{\lra}}
\newcommand{\bigmid}{\,\big|\,}
\newcommand{\Bigmid}{\,\Big|\,}
\theoremstyle{definition} 
\newtheorem{definition}{Definition}[section]
\newtheorem{theorem}[definition]{Theorem}
\newtheorem{lemma}[definition]{Lemma}
\newtheorem{remark}[definition]{Remark}
\newtheorem{assumption}[definition]{Assumption}
\newcommand\urladdrx[1]{{\urladdr{\def~{{\tiny$\sim$}}#1}}}
\newcounter{mycount}
\newenvironment{romlist}{\begin{list}{\rm(\roman{mycount})}%
   {\usecounter{mycount}\labelwidth=1cm\itemsep 0pt}}{\end{list}}
\newenvironment{letlist}{\begin{list}{(\alph{mycount})}%
   {\usecounter{mycount}\labelwidth=1cm\itemsep 0pt}}{\end{list}}
\numberwithin{equation}{section}
\begin{document}
\title[Phase transition of the quantum Ising model]
{The phase transition of the\\quantum Ising model is sharp}

\author{J. E. Bj\"ornberg}
\author{G. R. Grimmett}
\address{Statistical Laboratory,
Centre for Mathematical Sciences,
University of Cambridge,
Wilberforce Road, Cambridge CB3 0WB, U.K.}
\email{jeb76@cam.ac.uk}
\email{g.r.grimmett@statslab.cam.ac.uk}
\urladdrx{http://www.statslab.cam.ac.uk/~grg/}

\date{2 January 2009, revised 20 June 2009}

\begin{abstract}
An analysis is presented of the phase transition of the quantum Ising
model with transverse field on the $d$-dimensional hypercubic lattice.
It is shown that there is a unique sharp transition.
The value of the critical point is calculated rigorously in one dimension.
The first step is to express the quantum Ising model in terms of a 
(continuous) classical
Ising model in $d+1$ dimensions. A so-called `random-parity'
representation is developed for the latter model, similar to 
the random-current representation for the classical Ising model on a discrete lattice.
Certain differential inequalities are proved. 
Integration of these inequalities yields the
sharpness of the phase transition, and also a number of other facts concerning
the critical and near-critical behaviour of the model under study.
\end{abstract}

\keywords{Quantum Ising model, Ising model, random-parity representation,
random-current representation, random-cluster model,
differential inequality, phase transition}
\subjclass[2000]{82B20, 60K35} 

\maketitle

\section{Introduction}\label{sec-intro}
Geometric or `graphical' methods have been very useful in the rigorous study of
lattice models in classical statistical mechanics.  Of the many examples, we mention the
use of the random-cluster (or `\fk') representation to prove the existence of 
non-translation-invariant `Dobrushin' states in the $q$-state Potts
model~\cite{gielis_grimmett};  the use of the related `loop' representation
to prove conformal invariance for the two-dimensional Ising
model~\cite{smirnov_ising};  the use of the random-current 
representation to prove the sharpness of the phase transition in
classical Ising models~\cite{abf}.  In contrast, 
graphical methods for \emph{quantum} lattice models have 
received less attention.  
We shall formulate a so-called `random-parity representation' for the quantum
Ising model on a graph $G$ (or, more precisely, for the corresponding
`continuous Ising model' on $G \times \RR$, \cite{akn,aizenman_nacht}), 
and shall use it to prove the sharpness of the phase
transition for this model in a general number of dimensions. 
The random-parity representation is a cousin of the random-current
representation in \cite{aiz82,abf}.

Let $L=(V,E)$ be a finite graph.
The Hamiltonian of the quantum Ising
model with transverse field on $L$ is the matrix (or `operator')
\begin{equation}\label{qi_ham_eq}
H=-\tfrac{1}{2}\l\sum_{e=uv\in E}\s_u^{(3)}\s_v^{(3)}-\d\sum_{v\in V}\s_v^{(1)},
\end{equation}
acting on the Hilbert space $\cH=\bigotimes_{v\in V}\CC^2$.  Here, the Pauli
spin-$\frac12$ matrices are given as 
\begin{equation}
\s_v^{(3)}=
\begin{pmatrix} 
1 & 0 \\
0 & -1
\end{pmatrix},\qquad
\s_v^{(1)}=
\begin{pmatrix} 
0 & 1 \\
1 & 0
\end{pmatrix}.
\end{equation}
The constants $\l,\d>0$ in \eqref{qi_ham_eq} are the spin-coupling and transverse-field intensities,
respectively.   The basic operator of the quantum Ising model is $e^{-\b H}$ where
$\b>0$. The model was introduced in \cite{lieb},
and has been widely studied since. See, for example, the references in \cite{GOS}.

It is standard (see \cite{akn,aizenman_nacht} for example) 
that the quantum Ising model on $L$ possesses
a  type of `path integral representation', which expresses it as a 
type of classical Ising model (or equivalently as
a continuum \rc\ model with $q=2$) on the continuous space $V \times [0,\b]$.  
This representation permits the use of geometrical methods in studying
the behaviour of the original quantum model. In particular, it
is a useful way of establishing the existence of the infinite-volume limits as $\b\to\oo$ 
and $|V|\to\oo$, and of relating the phase transition of the quantum model
to that of the continuous classical model. 
 
The main technique of this article is a type of random-current representation,
called the `random-parity' representation, for the 
Ising model on $V\times \RR$. This enables a detailed analysis of the phase
transition of the latter model, and hence of the related quantum model. 
Further details and
references will be provided in the next section.  

The quantum model is said to be in the `ground state' when the limit $\b\to\oo$ is taken.
The value of $\b$ appears in the superscript of quantities that follow; when 
the superscript is $\oo$, this is to be interpreted as the relevant ground-state quantity. 

Our main choice for $L$ is a box in the $d$-dimensional cubic
lattice $\ZZ^d$ where $d \ge 1$, with a periodic boundary condition, and we shall
pass to the infinite-volume limit as $L\uparrow \ZZ^d$. (Similar results
hold for other lattices, and for summable translation-invariant interactions.)
The model is
over-parametrized. We shall normally assume $\d = 1$, and write $\rho=\l/\d$, while
noting that the same analysis holds for $\d\in(0,\oo)$.
As remarked above,
one may study the quantum phase transition via that of the Ising model
on the continuum $\ZZ^d \times [0,\b]$ and, in the latter case, one may introduce the notions of magnetization
$M=M^\b(\rho,\g)$ and (magnetic) susceptibility $\chi=\chi^\b(\rho,\g)$,
where $\g$ denotes external field. The
\emph{critical point} $\bc=\bc^\b$ is given by
\begin{equation}
\bc^\b:=\inf\{\rho: M^\b_+(\rho)>0\},
\label{o1}
\end{equation}
where
\begin{equation}
M^\b_+(\rho):=\lim_{\g\downarrow 0}M^\b(\rho,\g),
\label{o2}
\end{equation}
is the magnetization in the limiting state $\el\cdot\er^\b_+$ as
$\g\downarrow 0$.
It may be proved by standard methods that: 
\begin{equation}
\begin{split}
\text{if $d \ge 2$}&:\quad 0<\bc^\b<\oo\text{ for $\b\in(0,\oo]$},\\ 
\text{if $d=1$}&:\quad \bc^\b=\oo \text{ for $\b\in(0,\oo)$}, \ 0<\bc^\oo<\oo. 
\end{split}
\label{critvals}
\end{equation}
When $\b<\oo$, the magnetization, susceptibility, and critical values depend also
on the parameter $\l$, but we suppress this for brevity of notation.

Complete statements of our main results are deferred until Sections \ref{cons_sec}
and \ref{sec_1d}.
Here are two examples of what can be proved.

\begin{theorem}\label{ed_thm}
Let $u,v\in \ZZ^d$ where $d \ge 1$, and $s,t\in\RR$. For $\b\in(0,\oo]$:
\begin{romlist}
\item
if $0 < \rho < \bc^\b$, the two-point correlation function 
$\langle \s_{(u,s)}\s_{(v,t)}\rangle^\b_+$ of the
Ising model on $\ZZ^d\times \RR$ decays exponentially to $0$ 
as $|u-v|+|s-t|\to\oo$,
\item
if $\rho > \bc^\b$,
 $\langle \s_{(u,s)}\s_{(v,t)}\rangle^\b_+ \ge M^\b_+(\rho)^2>0$.
\end{romlist}
\end{theorem}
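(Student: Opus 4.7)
The plan is to handle part (ii) quickly via correlation inequalities, and to derive part (i) from a bundle of differential inequalities obtained via the random-parity representation.

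For (ii), I would work directly in the limiting plus state $\el\cdot\er^\b_+$. This state is a monotone limit of finite-volume states with $+$-boundary conditions in the $\g\downarrow 0$ sense, and for the continuum Ising model on $\ZZ^d\times\RR$ it satisfies \fkg{} (equivalently the second \gks{} inequality in the spin formulation). Thus
\[
\el \s_{(u,s)}\s_{(v,t)}\er^\b_+ \;\ge\; \el \s_{(u,s)}\er^\b_+ \el \s_{(v,t)}\er^\b_+.
\]
By translation invariance in both the spatial and time coordinates (using periodic boundary conditions in $V$ and the invariance of the plus state), each one-point function equals $M^\b_+(\rho)$, which is strictly positive when $\rho>\bc^\b$ by the definition \eqref{o1}. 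This gives (ii).

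For (i), the real work, the strategy follows the Aizenman--Barsky--Fern\'andez scheme adapted to the continuum. I would first establish, using the random-parity representation and its switching-type identities, a system of differential inequalities relating the magnetization $M=M^\b(\rho,\g)$, the susceptibility $\chi=\chi^\b(\rho,\g)$, and their partial derivatives in $\rho$ and $\g$. The prototype inequalities to aim for are of the form
\[
M \;\le\; \g\chi + M\, \rho\, \frac{\pd M}{\pd \rho} + M^3,
\qquad
\frac{\pd M}{\pd \g} \;\le\; \chi,
\]
analogous to the discrete ABF inequalities. Integration of such inequalities along characteristics in the $(\rho,\g)$-plane forces the dichotomy: either $M^\b_+(\rho)>0$, or the zero-field susceptibility $\chi^\b(\rho,0)$ is finite. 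Combined with the definition of $\bc^\b$ in \eqref{o1}, this shows that $\chi^\b(\rho,0)<\oo$ whenever $\rho<\bc^\b$.

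The final step is to upgrade finiteness of $\chi^\b$ to exponential decay. For this I would prove a continuum analogue of the Simon--Lieb inequality for the two-point function $\tau(x,y)=\el\s_x\s_y\er^\b_+$ (or rather for its free-boundary analogue), again via the random-parity representation: for a suitable separating surface $S$ between $x$ and $y$,
\[
\tau(x,y) \;\le\; \sum_{z\in S} \tau(x,z)\,\tau(z,y),
\]
with an appropriate interpretation of the sum as a spatial--temporal integral over the boundary of a finite region. A standard subadditivity/iteration argument then promotes finite susceptibility to exponential decay in $|u-v|+|s-t|$. The main obstacle is the first step: rigorously transporting the combinatorial machinery of random currents (switching lemma, source reversal) into the continuum setting, where ``currents'' become marked Poisson point processes on $V\times[0,\b]$ whose parities along vertical lines encode the sources. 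Once the switching lemma is available in this form, the differential inequalities and the Simon--Lieb bound follow in a largely routine way.
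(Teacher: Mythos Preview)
Your proposal is correct and follows essentially the same route as the paper: part (ii) via a correlation inequality giving $\el\s_x\s_y\er^\b_+\ge M^\b_+(\rho)^2$, and part (i) via the random-parity representation, switching lemma, ABF-type differential inequalities (here Theorem~\ref{main_pdi_thm} together with Theorem~\ref{three_ineq_lem}) to identify $\bc^\b$ with the susceptibility-divergence point $\bs^\b$, followed by the Simon--Lieb inequality to convert finite susceptibility into exponential decay. The only notable difference is in (ii): you use the second \gks{} inequality directly on spins, whereas the paper routes through the \fk{} representation and uniqueness of the infinite cluster (Remark~\ref{rc_unique}, equation~\eqref{o50}); your argument is slightly more elementary here and equally valid.
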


\begin{theorem}\label{mf_thm} Let $\b\in(0,\oo]$. In the notation of Theorem \ref{ed_thm},
there exists $c = c(d)>0$ such that 
$$
M^\b_+(\rho) \geq c(\rho-\bc^\b)^{1/2}\qq\text{for }
\rho>\bc^\b.
$$ 
\end{theorem}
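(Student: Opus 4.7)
The bound $M^\b_+(\rho) \ge c(\rho - \bc^\b)^{1/2}$ is the mean-field lower bound on the magnetization, and the natural route is the Aizenman--Barsky strategy, transferred from the discrete classical Ising model to the continuum setting on $\ZZ^d \times [0,\b]$ by means of the random-parity representation. I would derive an Aizenman--Barsky-type differential inequality for the finite-volume magnetization with positive external field, integrate it, and then take the limits in the correct order.

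\emph{Step 1: differential inequality.} In finite volume, with external field $\g > 0$, the target is an inequality of the shape
\begin{equation*}
M \le \g\chi + M^3 + \rho\, M^2\, \frac{\pd M}{\pd \rho},
\end{equation*}
where $\chi = \pd M/\pd \g$. To obtain it, $M$ would be written as a weighted sum over single-source random-parity configurations on $V \times [0,\b]$. A continuum switching lemma would then decompose this sum into contributions in which: (i) the source is joined to the ghost point, yielding the $\g\chi$ term; (ii) a pair of sources is not connected, yielding the $M^3$ term; and (iii) a single bridge insertion decouples a pair of sources, yielding the term proportional to $\pd M/\pd\rho$.

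\emph{Step 2: integration and limits.} Given the inequality in Step~1, the classical AB scheme integrates it, essentially by separation of variables in $\rho$, to conclude that above $\bc^\b$ the magnetization $M_+(\rho)$ cannot vanish faster than $(\rho - \bc^\b)^{1/2}$. The order of limits has to be respected: one proves the inequality first in finite volume with $\g > 0$, then takes $L \uparrow \ZZ^d$ using FKG-type monotonicity of the $+$-state to preserve the inequality, then lets $\g \downarrow 0$ to pass from $M$ to $M_+$, and finally sends $\b \to \oo$ in the ground-state case.

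\emph{Main obstacle.} The principal difficulty is the combinatorial content of Step~1: one must formulate the random-parity switching lemma correctly in the continuum, where ``currents'' are Poisson ensembles of bridges and deaths on $V \times \RR$ subject to parity constraints at sites, rather than integer-valued functions on edges. The derivative $\pd/\pd\rho$ has to be identified with the infinitesimal insertion of a single bridge at a prescribed Poisson rate, and the switching lemma must then recombine this insertion with the existing source structure so as to produce precisely the three contributions listed above. Once this continuum identity is in hand, the remainder of the proof follows the established AB template.
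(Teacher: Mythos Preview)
Your overall plan is the paper's plan: an Aizenman--Barsky differential inequality proved via a continuum switching lemma in the random-parity representation, followed by integration and the passage to infinite volume and $\g\downarrow 0$. However, the inequality you propose in Step~1 is incomplete, and the missing piece is precisely the continuum novelty.

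The paper's decomposition of $M$ has \emph{four} cases, not three. One writes $M$ using a triple $(\psi_1^{0\Gh},\psi_2^\es,\psi_3^\es)$ and examines how the backbone $\xi$ of $\psi_1^{0\Gh}$ first enters the open cluster $C_\Gh$ of $\Gh$ in $(\psi_2^\es,\psi_3^\es,\D)$, where $\D$ is an independent rate-$4\d$ Poisson process of \emph{cuts}. The cases are: (a) $\xi\cap C_\Gh=\es$, giving $\g\chi$; (b) $0\in C_\Gh$, giving $M^3$; (c) $\xi$ first meets $C_\Gh$ immediately after a bridge, giving $2\l M^2\,\pd M/\pd\l$; and (d) $\xi$ first meets $C_\Gh$ at a cut lying in $\ev(\psi_2^\es)\cap\ev(\psi_3^\es)$, giving $-2\d M^2\,\pd M/\pd\d$. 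Your items (i)--(iii) are (roughly) (a)--(c); case (d) has no discrete analogue and is the one the paper singles out as the main new ingredient. Without it the decomposition does not close, because in the continuum the backbone can enter $C_\Gh$ ``vertically'' without crossing a bridge. The resulting inequality is
\[
M \le \g\chi + M^3 + 2\l M^2\,\frac{\pd M}{\pd\l} - 2\d M^2\,\frac{\pd M}{\pd\d}.
\]

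To feed this into the AB integration you also need the auxiliary bounds of Theorem~\ref{three_ineq_lem}, in particular $-\pd M/\pd\d \le \tfrac{2M}{1-M^2}\,\chi$ and $\pd M/\pd\l \le 2dM\chi$, so that the last two terms can be controlled by $\chi$. Combined, one obtains $M \le M^3 + \chi\bigl(\g + cM^3/(1-M^2)\bigr)$, which is what the AB scheme integrates. Finally, the integration naturally yields $M_+^\b(\rho)\ge c_2(\rho-\bs^\b)^{1/2}$ for the \emph{susceptibility} critical point $\bs^\b$; the identification $\bs^\b=\bc^\b$ is a separate (short) step using the mean-field bound itself together with \eqref{mel64}.
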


These and other facts will be stated and proved in Section \ref{cons_sec}.
Their implications for the infinite-volume quantum model will be elaborated in the next section, 
see in particular \eqref{o4}--\eqref{o5}. Roughly speaking, they imply that
the two-point function of the quantum model decays exponentially when $\rho<\bc^\b$,
and is uniformly bounded below by $c(\rho-\bc^\b)$ when $\rho>\bc^\b$.

The approach used here is to prove a family of differential
inequalities for the magnetization $M^\b(\rho,\g)$. This parallels the methods
established in \cite{ab,abf} for the analysis of the phase transitions in
percolation and Ising models on discrete lattices, and indeed our arguments are closely
related to those of \cite{abf}. Whereas new problems arise in
the current context and require treatment, certain aspects of the analysis
presented here are simpler than the corresponding steps of \cite{abf}. 
The application to the quantum model imposes a periodic boundary condition in
the $\b$ direction; the same conclusions are valid
for the space--time Ising model with a free boundary condition.

The critical value $\bc^\b$ depends of course on the number of dimensions.
We shall use planar duality to show that $\bc^\oo=2$ when $d=1$,
and in addition that the transition is of second order in that $M^\oo_+(2)=0$. See Theorem
\ref{crit_val_cor}. The one-dimensional
critical point has been calculated by other means in the quantum case,
but we believe that the current proof is valuable. Two applications
to the work of \cite{bjo0,GOS} are summarized in Section \ref{sec_1d}.

Here is a brief outline of the contents of this article.  Formal definitions are presented
in Section \ref{sec-backg}. The random-parity 
representation of the quantum Ising model is described in Section \ref{rcr_sec}.
This representation may at first sight seem quite different from 
the random-current representation of the classical Ising model on a discrete lattice.
It requires more work to set up than does its discrete cousin, 
but once in place it works in a very similar, and sometimes simpler, manner.  
We then state and prove, in Section~\ref{ssec-switching}, the fundamental
`switching lemma'.  In
Section~\ref{sw_appl_sec} are presented a number of important consequences
of the switching lemma, including {\ghs} and
Simon--Lieb 
inequalities, as well as other useful inequalities and identities.
In Section~\ref{pf_sec}, we prove the somewhat more involved differential inequality
of the forthcoming Theorem \ref{main_pdi_thm}, 
which is similar to the main inequality of \cite{abf}. Our main results  
follow from Theorem \ref{main_pdi_thm} in conjunction with the results of
Section \ref{sw_appl_sec}.
Finally, in Sections \ref{cons_sec} and \ref{sec_1d}, we give rigorous 
formulations and proofs of our main results.

We mention that the continuous Ising model possesses a representation
of \rc-type; see, for example, \cite{akn,grimmett_stp,GOS}. This is convenient
for proving various facts including the existence of infinite-volume limits.
Only occasional use is made of the \rc\ representation here, and full details are omitted.
See Remark \ref{rc_unique}.

\begin{remark}\label{crawford}
There is a very substantial overlap between the results reported
here and those of the independent and contemporaneous article
\cite{craw-i}. The basic differential inequalities of Theorems
\ref{main_pdi_thm} and \ref{three_ineq_lem} appear in both articles. The proofs are 
in essence the same despite some differences of presentation. We are grateful to
the authors of \cite{craw-i} for explaining the relationship between the random-parity
representation of Section \ref{rcr_sec} and the random-current 
representation of \cite[Sect.\ 2.2]{ioffe_geom}. As pointed out in \cite{craw-i}, 
the appendix of \cite{chayes_ioffe_curie-weiss} contains a type of switching
argument for the mean-field model. A principal difference between that
argument and those of \cite{craw-i,ioffe_geom} and the current work is that it uses 
the classical switching lemma developed in \cite{aiz82}, applied to a discretized version
of the mean-field system.
\end{remark}

\section{Classical and quantum Ising models}\label{sec-backg}

Let $L=(V,E)$ be a finite, connected graph, which (for simplicity only) we assume
to have neither loops nor multiple edges. An edge of $L$
with endpoints $u$, $v$ is denoted by $uv$. We write $u \sim v$ if $uv \in E$.

\subsection{Quantum Ising model with transverse field}\label{ssec-qIm}
As basis for each copy of $\CC^2$ in the Hilbert space 
$\cH = \bigotimes_{x\in V} \CC^2 $, we take the vectors 
$|+_v\er=\big(\begin{smallmatrix} 1 \\ 0\end{smallmatrix}\big)$ and 
$|-_v\er=\big(\begin{smallmatrix} 0 \\ 1\end{smallmatrix}\big)$.
Let $D$ be the set of $2^{|V|}$ basis vectors of $\cH$ of the form
$\ket{\s}=\bigotimes_{v\in V}\ket{\pm_v}$. There is a natural one--one
correspondence between $D$ and the space $\Si=\{-1,+1\}^V$, and we shall speak of
$\cH$ as being generated by $\Si$. The \emph{trace} of the Hermitian matrix $A$ is
defined as
$$
\tr(A) = \sum_{\s\in\Si} \bra{\s}A\ket{\s}.
$$
Here, $\bra{\psi}$ is the adjoint, or complex transpose, of 
the vector $\ket{\psi}$.

The Hamiltonian of the quantum
Ising model with transverse field is given in~\eqref{qi_ham_eq}.  
Let $\b>0$ be a fixed real number (known as the `inverse temperature'), and 
define the positive temperature states
\begin{equation}\label{qi_states_eq}
\oper_{L,\b}(Q)=\frac{1}{Z_L(\b)}\tr(e^{-\b H}Q),
\end{equation}
where $Z_L(\b)=\tr(e^{-\b H})$ and $Q$ is a suitable matrix.  The
\emph{ground state} is defined as the limit $\oper_L$ of $\oper_{L,\b}$ as
$\b\rightarrow\infty$.  If $(L_n: n \ge 1)$ is an increasing sequence of graphs
tending to an infinite
graph $L$, then we may also make use of the \emph{infinite-volume} limits
$$
\oper_{L,\b}=\lim_{n\rightarrow\infty}\oper_{L_n,\b},\qq
\oper_L=\lim_{n\rightarrow\infty}\oper_{L_n}.
$$
The existence of such limits is
discussed in~\cite{akn}.

\subsection{Space--time Ising model}\label{ssec-stI}
A number of authors have developed and utilized the following `path integral
representation' of the 
quantum Ising model, see for
example~\cite{akn,aizenman_nacht,campanino_klein_perez,chayes_ioffe_curie-weiss,GOS,nachtergaele93} 
and the recent surveys to be found in \cite{G-pgs,ioffe_geom}.
Let $\SS=\SS_\b$ be the circle of circumference $\b$, which we 
think of as the interval $[0,\b]$ with its two 
endpoints identified.    Let $\l,\d,\g$ be non-negative constants, and let
$\mu_\l$, $\mu_\d$, $\mu_\g$ 
be the probability measures associated with independent Poisson processes on
$E\times\SS$, $V\times\SS$, and  $V\times\SS$ with respective intensities $\l,\d,\g$.
Elements sampled from these measures
will typically be denoted by $B$, $\oD$, $G$, and their members
will be called \emph{bridges}, 
\emph{deaths} and \emph{ghost-bonds} respectively. 

\begin{remark}\label{rem-as}
For simplicity of notation in this article, we shall frequently overlook
events with zero probability. 
\end{remark}

Thus, for example, we shall assume without more ado that the $\SS$ coordinates 
of the points of $B \cup \oD \cup G$
are distinct. Furthermore,
we shall take as sample space for $B$ (\resp, $\oD$, $G$) 
the set $\cB$ (\resp, $\cF$) of
finite subsets of $E\times\SS$ (\resp, $V\times\SS$).  

For $\oD \in \cF$,    
write $V(\oD)$ for the collection of maximal intervals of
$(V\times\SS)\setminus \oD$, and let $\S(\oD)=\{-1,+1\}^{V(\oD)}$.  
Each $\s\in\S(\oD)$ should be viewed as a spin-configuration on $(V\times\SS)\sm \oD$ using
local spins $\pm 1$: for $x =(v,t) \in (V\times\SS)\sm \oD$, write $\s_x=\s_{(v,t)}$ for the 
local state of $x$ under $\s$, that is, the $\s$-value of the interval
in $V(D)$ containing $x$.  Note that $\s_x$ is undefined for $x \notin \oD$, but, since $D$
is almost surely finite, this is no bar to the following definition.

The \emph{space--time Ising measure} on the domain 
$$
\L:= L\times \SS=(V\times\SS,E\times\SS)
$$ 
is defined to be
the probability measure on the space 
$$
\Si=\bigcup_{D \in \cF} \Si(D),
$$
with partition function 
\begin{equation}
Z'=\int_\cF d\mu_\d(\oD)\sum_{\s\in\S(\oD)}\exp\left\{\l\int_{E\times\SS}\s_e\, de+
\g\int_{V\times\SS}\s_x\, dx\right\}
\label{ihp6}
\end{equation}
where $\s_e=\s_{(u,t)}\s_{(v,t)}$ if $e=(uv,t)$.
The two integrals in \eqref{ihp6} are to be interpreted, respectively, as
$$
\sum_{e=uv\in E} \int_\SS \s_{(u,t)}\s_{(v,t)}\,dt,\qq \sum_{v\in V}\int_\SS \s_{(v,t)}\,dt.
$$
Note that the use of the \emph{circle} $\SS$ amounts to a periodic boundary
condition in the $\b$ direction. We shall generally suppress reference to $\b$ in the
following.

Here is a word of motivation for \eqref{ihp6}; see also \cite{bjo_phd,GOS}. 
Let $D\in\cF$, and think of $V(D)$ as the set of vertices of a graph with edges
given as follows.  We
augment $V(D)$ with an auxiliary vertex, called the \emph{ghost-vertex}
and denoted $\Gh$, to which we assign spin $\s_\Gh=1$.
An edge is placed between $\Gh$ and each $\bar v\in V(D)$. For $\bar u, \bar v\in V(D)$,
with $\bar u=u\times I_1$ and $\bar v=v\times I_2$ say, we place an edge 
between $\bar u$ and $\bar v$ if and only if: (i) $uv$ is an edge of $L$, and
(ii) $I_1\cap I_2\neq\es$.  Under the measure
with partition function \eqref{ihp6}, and conditional on $D$, a spin-configuration
$\s\in \Si(D)$ on this graph
receives an Ising weight
\begin{equation}
\exp\left\{\sum_{\bar u\bar v}J_{\bar u\bar v}\s_{\bar u}\s_{\bar v}  
+ \sum_{\bar v} h_{\bar v}\s_{\bar v}\right\},
\end{equation}
where $\s_{\bar v}$ denotes the common value of $\s$ along $\bar v$, and with
$J_{\bar u\bar v}=\l|I_1\cap I_2|$ and $h_{\bar v}=\g |\bar v|$.
Here, $|J|$ denotes the Lebesgue measure of the interval $J$.
This observation will be pursued further in Section~\ref{ssec-rpr}.

We will use angle brackets $\el\cdot\er$ for the expectation operator under
the measure given by \eqref{ihp6}. Thus, for example,
\begin{equation}\label{st_Ising_eq}
\el\s_A\er=\frac{1}{Z'}
\int d\mu_\d(\oD)\sum_{\s\in\S(\oD)}\s_A\exp\left\{\l\int_{E\times\SS}\s_e\, de+
\g\int_{V\times\SS}\s_x\, dx\right\},
\end{equation}
where $A\subseteq V\times\SS$ is a finite set, and 
\begin{equation}
\s_A:=\prod_{y\in A}\s_y.
\end{equation}

Let $0$ be a given point of $V\times \SS$.
We will be particularly concerned with the
\emph{magnetization}  and \emph{susceptibility}
of the space--time Ising model on $\L=L\times \SS$, given respectively by
\begin{align}
M=M_\L(\l,\d,\g) &:=\el\s_0\er,\label{def-mag}\\ 
\chi=\chi_\L(\l,\d,\g) &:=\frac{\partial M}{\partial \g}
=\int_\L\el\s_0;\s_x\er\,dx,
\label{def-susc}
\end{align}
where the \emph{truncated} two-point function $\el\s_0;\s_x\er$ is given by
\begin{equation}
\el\s_A;\s_B\er := \el \s_A\s_B\er - \el\s_A\er\el\s_B\er.
\label{o20}
\end{equation}

We will derive a number of differential inequalities for $M$
and $\chi$, of which the following is the principal one. In writing 
$L=[-n,n]^d$, we mean that $L$ is the box $[-n,n]^d$ of $\ZZ^d$ with
`periodic boundary conditions', which is to say that
two vertices $u$, $v$ are joined by an edge whenever there exists
$i\in\{1,2,\dots,d\}$ 
such that: $u$ and $v$ differ by exactly $2n$ in the $i$th 
coordinate, and the other coordinates are equal. 
(Our results are in fact valid in greater generality,
see the statement before Assumption \ref{periodic_assump}.) Subject to this boundary condition, $M$
and $\chi$ do not depend on the choice of origin $0$.

\begin{theorem}\label{main_pdi_thm}
Let $d\ge 1$ and let
$L = [-n,n]^d$.  Then
\begin{equation}\label{ihp18}
M\leq \g\chi+M^3+2\l M^2\frac{\partial M}{\partial \l}
-2\d M^2\frac{\partial M}{\partial \d}.
\end{equation}
\end{theorem}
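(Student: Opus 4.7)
The plan is to adapt the Aizenman--Barsky--Fern\'andez argument of \cite{abf} to the random-parity setting. I would work in $\L=L\times\SS$ with periodic boundary, using the switching lemma of Section~\ref{ssec-switching} together with the consequences (in particular the \ghs\ and Simon--Lieb inequalities) established in Section~\ref{sw_appl_sec}.

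The starting point is to express $M=\el\s_0\er$ through the random-parity representation as a connectivity probability: in a sourced configuration with source set $\{0,\Gh\}$, $M$ is essentially the probability that $0$ is connected to the ghost $\Gh$. Introducing further independent copies of the random-parity system and applying the switching lemma, I would recast combinations such as $M^3$ and $M^2(\pd M/\pd\l)$ as probabilities of geometric events in a joint system of two or three independent sourced configurations.

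The heart of the argument is to condition on the cluster $C$ of the origin in the merged configuration and to decompose according to how $C$ meets the rest. (i) If $C$ meets a ghost-bond, the contribution is bounded by $\g\chi$ directly from the definition \eqref{def-susc} of $\chi$. (ii) If $C$ meets another source vertex of the joint system, a second application of switching produces a bound of the form $M^3$ via the \ghs-type inequality from Section~\ref{sw_appl_sec}. (iii) If $C$ is ``isolated'' --- reaching neither a ghost nor another source --- then conditional on $C$, the complement is an independent random-parity system, and the boundary of $C$ is characterised by bridges that fail to cross $\pd C$ and by deaths that sit on $\pd C$. Applying Campbell--Mecke formulae to the underlying Poisson processes of bridges and deaths, these boundary contributions convert into derivatives of $M$: adding a bridge would create a connection (yielding $+2\l M^2\,\pd M/\pd\l$), whereas adding a death would break one (yielding $-2\d M^2\,\pd M/\pd\d$). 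The prefactors $2\l M^2$ and $2\d M^2$ arise from the combinatorics of the duplicated system, the intensities of the Poisson processes, and the switching identities used to reinstate the second copy after the decomposition.

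The main technical obstacle will be the continuum bookkeeping. Whereas in the discrete ABF argument derivatives with respect to couplings bring down finite sums over edges, here the derivatives act on Poisson intensities and must be matched, via Campbell's theorem, to integrals over $E\times\SS$ and $V\times\SS$ of suitable conditional probabilities. Measurably identifying $C$ and its boundary in the continuum, and correctly handling the dual role of deaths (both defining the spin intervals on which $\s$ is constant and contributing boundary events), will be the delicate steps. Once this bookkeeping is set up, summing the three contributions yields \eqref{ihp18}.
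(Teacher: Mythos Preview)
Your proposal captures the general philosophy but the concrete decomposition you describe is that of the Aizenman--Barsky percolation argument \cite{ab}, not the Aizenman--Barsky--Fern\'andez Ising argument \cite{abf} that the paper actually follows. In the random-parity setting $M$ is not a connectivity probability of a single system, and the cluster of the origin in a merged configuration is not the object on which to condition.

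The paper writes $M = Z^{-3} E(\pd\psi_1^{0\Gh}\pd\psi_2^\es\pd\psi_3^\es)$ with three independent colourings. The key geometric object is the \emph{backbone} $\xi = \xi(\psi_1^{0\Gh})$, a single path from $0$ to $\Gh$, together with the open cluster $C_\Gh$ of the ghost in the pair $(\psi_2^\es,\psi_3^\es,\D)$. One decomposes $M = T + R_0 + R_h + R_v$ according to where $\xi$, traced from $0$, \emph{first} meets $C_\Gh$: never ($T$, bounded by $\g\chi$); at $0$ itself ($R_0 = M^3$, by switching); immediately after a bridge of $\xi$ ($R_h \le 2\l M^2\,\pd M/\pd\l$); or at a cut of the auxiliary process $\D$ ($R_v \le -2\d M^2\,\pd M/\pd\d$). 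The derivative terms are not produced by Campbell--Mecke on a cluster boundary, but by cutting the backbone at the first-entry point, applying Lemma~\ref{backb2} and Lemma~\ref{rw_mon_lem}, and then matching the resulting expressions to the \emph{identities} for $\pd M/\pd\l$ and $\pd M/\pd\d$ already established in Theorem~\ref{three_ineq_lem}. The genuinely new continuum case is $R_v$: the backbone enters $C_\Gh$ at a point of $\D\cap\ev(\psi_2^\es)\cap\ev(\psi_3^\es)$, and removing that cut manufactures the event $\{0\lrao{x}\Gh\}$ appearing in \eqref{dd_bound_eq}.

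Your case (iii), in which the cluster of $0$ is ``isolated'' and one differentiates via boundary bridges and deaths, has no analogue here. In the sourced copy $\psi_1^{0\Gh}$ the backbone always connects $0$ to $\Gh$, so there is no isolated case to analyse; and the $\d$-derivative enters through the cut process $\D$ and the pivotality event $\{0\lrao{x}\Gh\}$ in the \emph{sourceless} pair, not through deaths sitting on a cluster boundary. The missing idea is the backbone and its first-entry decomposition relative to $C_\Gh$, together with the use of Theorem~\ref{three_ineq_lem} rather than direct Poisson calculus to identify the derivative terms.
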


A similar inequality was derived in~\cite{abf} for the classical Ising
model, and our method of proof is closely related to that used there. Other such inequalities
have been proved for percolation in \cite{ab} 
(see also \cite{grimmett_perc}), and for the contact model
in \cite{aizenman_jung,bezuidenhout_grimmett}.
As observed in \cite{ab,abf}, the powers of $M$ on the right side 
of \eqref{ihp18} determine the bounds of Theorems \ref{ed_thm}(ii) and \ref{mf_thm}  on
the critical exponents.
The cornerstone of our proof
is a random-parity representation of the space--time Ising model.

In the ground-state limit as $\b,n\to\oo$ and $\g > 0$, 
the two quantities $M$, $\chi$ have well-defined limits denoted $M_\oo$ and $\chi_\oo$.
By a re-scaling argument, $M_\oo$ depends on the parameters through the ratios $\l/\d$, $\g/\d$.  
Thus we may take as `order parameter' the function
$$
M(\rho,\g):= M_\oo(\rho,1,\g).
$$
More generally, let $M^\b(\rho,\g) = M_\oo^\b(\rho,1,\g)$ where
$M_\oo^\b = \lim_{n\to\oo} M^\b$,
and define 
the critical value $\bc^\b$ by \eqref{o1}.

The analysis of the differential inequalities, following \cite{ab,abf}, reveals a
number of facts about the behaviour of the model.  In
particular, we will show the exponential
decay of the correlations $\el\s_0\s_x\er_+^\b$
when $\rho<\bc^\b$ and $\g=0$, 
as asserted in Theorem \ref{ed_thm}, and in addition
certain bounds on two critical exponents of the model.  See
Section~\ref{cons_sec} for further details. 

We shall on occasion write $\mu(f)$ for the expectation of a random variable $f$ under 
the probability measure $\mu$.
The indicator function of an event $H$ is written either $1_H$ or $1\{H\}$. The complement
of $H$ is written $H^\tc$.

\subsection{Classical/quantum relationship}\label{ssec-cq}
The space--time Ising model 
is closely related to the quantum Ising model, one
manifestation of this being the following.  As indicated at the start of this section,
a classical spin configuration
$\s\in\Si=\{-1,+1\}^V$ may be identified with the basis vector 
$|\s\er=\bigotimes_{v\in V}|\s_v\er$ of $\cH$.  The state $\oper_{L,\b}$
of \eqref{qi_states_eq} gives rise thereby to a probability measure $\mu$ on
$\Si$ by 
\begin{equation}
\mu(\s)=\frac{\el\s|e^{-\b H}|\s\er}{\tr(e^{-\b H})},\qq \s\in\Si.
\end{equation}
When $\g=0$, it turns out that $\mu$ is the
law of the vector $(\s_{(v,0)}: v\in V)$ under the space--time Ising measure
of \eqref{ihp6} (see \cite{akn} and the references therein).
It therefore makes sense to study the phase diagram of the quantum Ising model
via its representation in the space--time Ising model.
Note, however, that in our analysis it is crucial to work with $\g>0$,
and to take the limit $\g \downarrow 0$ later.  
The role played in the classical model by the external
field will in our analysis be played by the `ghost-field' $\g$ rather than
the `physical' transverse field $\d$. 

We draw from \cite{akn,aizenman_nacht} in the following summary of the relationship between the 
phase transitions of the quantum and space--time Ising models. Let $u,v\in V$, and 
$$
\tau^\b_{L}(u,v) := \tr\bigl(\oper_{L,\b}(Q_{u,v})\bigr),\qq
Q_{u,v} = \s^{(3)}_u\s^{(3)}_v.
$$
It is the case that
\begin{equation}
\tau^\b_{L}(u,v) = \el \s_A \er^\b_L
\label{o3}
\end{equation}
where $A=\{(u,0),(v,0)\}$, and the role of $\b$ is emphasized
in the superscript. Let $\tau_L^\oo$ denote the limit
of $\tau^\b_{L}$ as $\b\to\oo$. For $\b\in(0,\oo]$, let $\tau^\b$
be the limit of $\tau^\b_L$ as $L\uparrow \ZZ^d$.
(The existence of this limit may depend on the choice of boundary condition on $L$,
and we return to this at the end of Section \ref{cons_sec}.)
By Theorem \ref{ed_thm},
\begin{equation}
\tau^\b(u,v) \le c'e^{-c|u-v|},
\label{o4}
\end{equation}
where $c'$, $c$ depend on $\rho$, and $c>0$ for $\rho<\bc^\b$ and $\b\in(0,\oo]$.
Here, $|u-v|$ denotes the $L^1$ distance from $u$ to $v$. The situation when $\rho=\bc^\b$ is
more obscure, but one has that
\begin{equation}
\limsup_{|v|\to\oo}\tau^\b(u,v) \le M^\b_+(\rho),
\label{o4a}
\end{equation}
so that $\tau^\b(u,v) \to 0$ as $|v|\to\oo$, whenever $M^\b_+(\rho)=0$. It 
is proved at Theorem \ref{crit_val_cor} that $\bc^\oo=2$ and $M^\oo_+(2)=0$ when $d=1$.

By the {\fkg} inequality, and the uniqueness
of infinite clusters in the continuum \rc\ model (see \cite{akn,grimmett_stp},
for example), 
\begin{equation}
\tau^\b(u,v) \ge M^\b_+(\rho-)^2 > 0,
\label{o5}
\end{equation}
when $\rho > \bc^\b$ and $\b\in(0,\oo]$, where $f(x-):= \lim_{y\uparrow x}f(y)$.
The proof is discussed at the end of Section \ref{cons_sec}.

The quantum
mean-field, or Curie--Weiss, model has been studied using 
large-deviation techniques in \cite{chayes_ioffe_curie-weiss}, see also
\cite{grimmett_stp}.  A 
random-current representation of the quantum Ising model 
may be found in~\cite{ioffe_geom}, and, as explained in
Remark \ref{crawford} and \cite{craw-i}, this is intimately
related to that discussed and exploited in the next section. 

\section{The random-parity representation}\label{rcr_sec}

The Ising model on a discrete graph $L$ is a `site model', in the sense that
configurations comprise spins assigned to the vertices (or `sites') of
$L$.  The classical random-current representation maps this into a bond-model,
in which the 
sites no longer carry random values, but instead the \emph{edges} $e$ (or `bonds')  of
the graph are replaced by a random number $N_e$ of parallel edges. The
bond $e$ is called \emph{even} (\resp, \emph{odd}) if $N_e$ is
even (\resp, odd). The odd bonds may be arranged into paths and
cycles.  One cannot proceed
in the same way in the above space--time
Ising model.  

There are two 
possible alternative
approaches.  The first uses the fact that,
conditional on the set $\oD$ of deaths, $\L$ may be viewed as a discrete structure 
with finitely many components, to which the 
random-current representation of \cite{aiz82} may be applied;  this is explained in
detail around \eqref{rcr_step1_eq} below. Another approach is to forget about `bonds', and
instead to concentrate on the parity configuration associated with a current-configuration, as
follows.  The relationship with the random-current representation of \cite{ioffe_geom}
is discussed in Remark \ref{crawford}.  

The circle $\SS$ may be viewed as a continuous limit of a ring of 
equally spaced points.  If we apply the random-current
representation to the discretized system, but only record whether a bond is
even or odd, the representation has a well-defined limit as a
partition of $\SS$ into even and odd sub-intervals.  In the limiting
picture, even and odd intervals carry different weights, and it is the
properties of these weights that render the representation useful.  This is
the essence of the main result in this section, 
Theorem~\ref{rcr_thm}. We will prove this result without recourse to
discretization.  

\subsection{Colourings}\label{ssec-col}
We first generalize the set-up of Section \ref{sec-backg}.
For $v\in V$, let $K_v\subseteq\SS$ be a finite union of (maximal)
disjoint
intervals, 
say $K_v=\bigcup_{i=1}^{m(v)} I^v_i$.
No assumption is made at this stage on whether the $I^v_i$ are
open, closed, or half-open.
For $e=uv\in E$, let $K_e=K_u\cap K_v$.  With the $K_v$ given, we define
\begin{gather}
K:=\bigcup_{v\in V} v\times K_v,\qq
F:=\bigcup_{e\in E} e\times K_e,
\label{o8}\\
\L:=(K,F),
\label{def-newL}
\end{gather}
where these sets are considered as unions of real intervals.
We shall soon introduce an auxiliary `ghost-vertex', denoted
$\Gh$, and shall write 
\begin{equation}
K^\Gh := K \cup \{\Gh\}.
\label{ihp0}
\end{equation}
In Section \ref{sec-backg}, we treated only the case when each 
$K_v$ comprises the single interval $\SS:=[0,\b]$.
We continue to use the notation $\cB$ (\resp, $\cF$) for the set of finite subsets of
$F$ (\resp, $K$). The closure of a Borel subset $J$ of $\ZZ\times \RR$ is written $\ol J$.

Much of the following analysis is valid with the constants  $\l$, $\d$, $\g$ 
replaced by (possibly non-constant) functions.
Specifically, let
$\l:E\times\SS\rightarrow\RRp$, $\d:V\times\SS\rightarrow\RRp$, and 
$\g:V\times\SS\rightarrow\RRp$ be
bounded, measurable functions, where $\RRp=[0,\oo)$.  We retain the notation $\l$, $\d$, $\g$
for the restrictions of these functions to $\L$, given in \eqref{def-newL},
and let $\mu_\l$, $\mu_\d$, $\mu_\g$
be the probability measures associated with independent Poisson processes with
respective intensities $\l$, $\d$, $\g$ on the respective subsets of $\L$.
For $\oD\in\cF$, the set $(v\times K_v)\sm\oD$ is
a union of maximal death-free intervals $v\times J_v^k$,
where $k=1,2,\dotsc,n$ and $n=n(v,\oD)$ is the number of such intervals.  With
$V(\oD)$ the collection of all such intervals, and 
$\S(\oD)=\{-1,+1\}^{V(\oD)}$ as before, we may define the space--time Ising measure
on the $\L$ of \eqref{def-newL} as that with partition function
\begin{equation}
Z'_K=\int_\cF d\mu_\d(\oD)\sum_{\s\in\S(\oD)}\exp\left\{\int\l(e)\s_e\, de+
\int\g(x)\s_x\, dx\right\}.
\label{o12}
\end{equation}
As in \eqref{st_Ising_eq}, we write $\el \s_A\er_K$, abbreviated to $\el\s_A\er$ when the
context is obvious,
for the mean of $\s_A$ under this measure.

It is essential for our method that we work on general
domains of the form given in \eqref{def-newL}. The reason for this
is that, in the geometrical analysis of currents, we shall at times
remove from $K$  a random subset called the `backbone', and the ensuing
domain has the form of \eqref{def-newL}.
This generalization also allows us to work with a `free'
rather than a `vertically periodic' boundary condition.  
That is, by setting $K_v =[0,\b)$ for all $v\in V$, rather than
$K_v = [0,\b]$, we effectively remove the restriction that the
`top' and `bottom' of each $v\times\SS$ have the same spin.  

Whenever we wish to
emphasize the roles of particular $K$, $\l$, $\d$, $\g$, we include them as subscripts.  For
example, we may 
write $\el\s_A\er_K$ or $\el\s_A\er_{K,\g}$ or $Z'_\g$, and so on.

We now define two additional random processes associated with the
space--time Ising measure on $\L$.  The first is a random colouring of $K$,
and the second is a random (finite) weighted graph.  These two objects
will be the main components of the random-parity representation.

Let $\ol K$ be the closure of $K$. A set of \emph{sources} is a finite set
$A \subseteq \ol K$ such that: each $a \in A$ is the endpoint of at most one maximal sub-interval
$I_i^v$ of $K$.
(This last condition is for simplicity later.)
Let $B\in\cB$ and $G\in\cF$.  
Let $S=A \cup G \cup V(B)$, where $V(B)$ is the set of
endpoints of bridges of $B$,
and call members of $S$ \emph{switching points}. As in Remark \ref{rem-as},
we shall assume that $A$, $G$ and $V(B)$ are disjoint.

We
shall define a colouring $\psi^A=\psi^A(B,G)$ of $K\sm S$ using the two colours (or
labels) `even' and `odd'.  This colouring is constrained to be `valid',
where
a valid colouring is defined to be a mapping $\psi:K\sm S \to\{\even, \odd\}$ such that: 
\begin{romlist}
\item the label is
constant between 
two neighbouring switching points, that is, $\psi$ is constant on any sub-interval
of $K$ containing no members of $S$,
\item  the label always switches at each switching point, which is to say that,
for $(u,t) \in S$, $\psi(u,t-) \ne \psi(u,t+)$, whenever these two values are defined,

\item for any pair $v$, $k$ such that $I^v_k\neq\SS$, in the limit as we move
along $v\times I^v_k$ towards an endpoint $a$ of $v\times I^v_k$, 
the colour converges to `even' if $a \notin A$, and to `odd' if $a\in A$.
\end{romlist} 

If there exists $v\in V$ and
$1\leq k\leq m(v)$ such that $v\times \ol{I^v_k}$
contains an
\emph{odd} number of switching points, then conditions (i)--(iii) cannot be
satisfied;  in this case we set the colouring $\psi^A$ to
a default value denoted $\#$.  

Suppose that (i)--(iii) \emph{can} be satisfied, and let 
\begin{equation*}
W=W(K):=\{v\in V: K_v=\SS\}. 
\end{equation*}
If $W = \es$ (in which case we speak of a `free'
boundary condition), then there exists a unique valid colouring, 
denoted $\psi^A$.  If $r=|W|\ge 1$, there are exactly $2^r$ valid
colourings, one for each of the two possible colours assignable to the sites
$(w,0)$, $w \in W$; in this case we let $\psi^A$ be chosen uniformly at random
from this set, independently of all other choices. (If $(w,0)\in S$, we work instead
with the colour of $(w,\eps)$ in the limit as $\eps\downarrow 0$.)

Let $M_{B,G}$ be the probability measure (or expectation when
appropriate) governing the randomization in the definition of $\psi^A$: $M_{B,G}$ is
the uniform (product) measure on the set of valid colourings, and it is a point
mass if and only if $W=\es$.  See Figure~\ref{colouring_fig}.  

Fix the set $A$ of sources.
For (almost every) pair $B$, $G$, one may construct as above
a (possibly random) colouring $\psi^A$. Conversely, it is easily seen that
the pair $B$, $G$ may (almost surely) be reconstructed from 
knowledge of the colouring $\psi^A$.
For given $A$, we may thus speak of a configuration as being either a pair $B$, $G$, or 
a colouring $\psi^A$. 
While $\psi^A(B,G)$ is a colouring of $K \sm S$ only, we shall
sometimes refer to it as a colouring of $K$.

\begin{figure}[tbp]
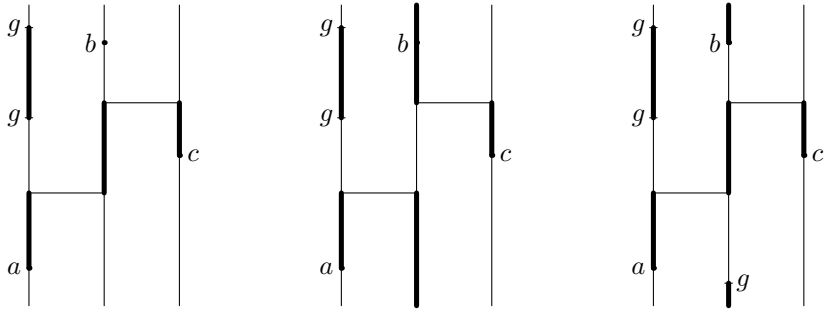

\includegraphics{sharptransition.1}
\hspace{1.3cm} 
\includegraphics{sharptransition.2} 
\hspace{1.3cm} 
\includegraphics{sharptransition.3} 
\caption{Three examples of colourings for given $B \in\cB$, $G \in \cF$.
Points in $G$ are written $g$. Thick line
segments are `odd' and thin segments `even'.  In this illustration we have
taken $K_v=\SS$ for all $v$.  \emph{Left and middle}: two of the eight
possible colourings when the sources are $a$, $c$.  \emph{Right}: one of the possible
colourings when the sources are $a$, $b$, $c$.}
\label{colouring_fig}
\end{figure}

The next step is to assign weights $\pd\psi$ to colourings $\psi$. The
`failed' colouring $\#$ is assigned weight $\pd\# =0$.  
For every valid colouring $\psi$, let $\ev(\psi)$ (\resp, $\odd(\psi)$)
denote the subset of $K$ that is labelled even (\resp, odd), 
and let
\begin{equation}
\partial\psi :=\exp\bigl\{2\d(\ev(\psi))\bigr\},
\label{def-wt}
\end{equation}
where
$$
\d(U):=\int_{U}\d(x)\,dx, \qq U \subseteq V\times \SS.
$$
Up to a multiplicative constant depending on $\d(K)$ only,
$\pd\psi$ equals the square of the probability that the odd part of $\psi$ is
death-free.

\subsection{Random-parity representation}\label{ssec-rpr}
The expectation $E(\pd\psi^A)$ is taken over the sets $B$, $G$, and over
the randomization that takes place when $W \ne \es$, 
that is, $E$ denotes expectation with respect to the
measure $d\mu_\l(B) d\mu_\g(G) dM_{B,G}$. The notation has been chosen to harmonize with that
used in \cite{abf} in the discrete case:
the expectation $E(\partial\psi^A)$ will play the role of the probability
$P(\partial\underline n=A)$ of \cite{abf}.
The main result of this section now follows.

\begin{theorem}[Random-parity representation]\label{rcr_thm}
For any finite set $A \subseteq \ol K$ of sources,
\begin{equation}
\el\s_A\er=\frac{E(\partial\psi^A)}{E(\partial\psi^\es)}.
\label{ihp8}
\end{equation}
\end{theorem}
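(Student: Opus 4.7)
The plan is to follow the first of the two strategies sketched just before Section~\ref{ssec-col}: condition on the deaths $\oD$, apply the classical random-current representation of \cite{aiz82} to the discrete Ising model on the intervals $V(\oD)$, and then recognize the continuum structure after taking Poisson expectations over the bridges $B$ and ghost-bonds $G$.

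For fixed $\oD\in\cF$, the measure \eqref{o12} conditional on $\oD$ is a classical Ising model on the graph with vertex set $V(\oD)\cup\{\Gh\}$ and couplings $J_{\bar u\bar v}=\l|I_1\cap I_2|$, $h_{\bar v}=\g|\bar v|$, as already described near \eqref{ihp6}. The classical representation gives
\begin{equation*}
\sum_{\s\in\S(\oD)}\s_A\exp\biggl\{\int\l\s_e\,de+\int\g\s_x\,dx\biggr\}=2^{|V(\oD)|}\,T_D^{\hat A_D},
\end{equation*}
where $T_D^{X}:=\sum_{n:\,\pd n=X}\prod_e J_e^{n_e}/n_e!$ and $\hat A_D\subseteq V(\oD)$ is the set of intervals containing an odd number of sources of $A$. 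Using $\sum_n J^n/n!=e^J$, and identifying a Poisson$(J_{\bar u\bar v})$ count with $|B\cap(\bar u\bar v)|$ (and similarly for ghost-bonds), one rewrites $T_D^{X}=e^{\l|F|+\g|K|}\pr_{B,G}\{\pd N^{B,G}=X\}$, where $N^{B,G}$ is the discrete current configuration induced by $(B,G)$. The factor $e^{\l|F|+\g|K|}$ is independent of both $\oD$ and $A$, and the event $\pd N^{B,G}=\hat A_D$ is precisely ``$|S\cap\bar v|$ is even for every $\bar v\in V(\oD)$'' with $S=A\cup V(B)\cup G$. Exchanging the $\oD$ and $(B,G)$ integrations yields
\begin{equation*}
\el\s_A\er=\frac{\int d\mu_\l(B)\,d\mu_\g(G)\,I(S^A)}{\int d\mu_\l(B)\,d\mu_\g(G)\,I(S^\es)},\qq I(S):=\int d\mu_\d(\oD)\,2^{|V(\oD)|}\mathbf{1}\{|S\cap\bar v|\text{ even}\,\forall\bar v\in V(\oD)\}.
\end{equation*}

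The crux is the identity $I(S^A)=C_0\cdot M_{B,G}(\pd\psi^A)$ for a constant $C_0$ depending only on $K$ and $\d$, from which \eqref{ihp8} follows immediately by cancellation. To prove it, I would factorize over the sub-intervals $I_k^v$ using the independence of $\oD\cap I_k^v$ across $(v,k)$. On a non-cyclic $I_k^v=[a,b]$, unpacking the parity condition on the two extremal sub-intervals of $V(\oD)\cap I_k^v$ and using $\psi(a^+)=\mathbf{1}\{a\in A\}$, $\psi(b^-)=\mathbf{1}\{b\in A\}$ from condition (iii) shows by a short case check --- valid whether or not $a,b$ lie in $A$ --- that the admissible configurations of $\oD\cap I_k^v$ are precisely those with $\oD\cap I_k^v\subseteq\ev(\psi^A)\cap I_k^v$; Campbell's formula for the Poisson process then produces a contribution $2\,e^{-\d(I_k^v)}e^{2\d(\ev(\psi^A)\cap I_k^v)}$. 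On a cyclic $\SS_v$ ($v\in W$) the two possible initial colours give two disjoint even-regions $E_0,E_1\subseteq\SS_v$ with $E_0\cup E_1=\SS_v$; an inclusion-exclusion on the events $\{\oD\cap\SS_v\subseteq E_c\}$, together with the cyclic formula $|V(\oD)\cap\SS_v|=\max(|\oD\cap\SS_v|,1)$, produces $e^{-\d(\SS_v)}(e^{2\d(E_0)}+e^{2\d(E_1)})$, which equals $2\,e^{-\d(\SS_v)}$ times the uniform $M_{B,G}$-average of $e^{2\d(\ev(\psi^A)\cap\SS_v)}$. Multiplying the contributions over $(v,k)$ gives $C_0=2^{n_{\mathrm{nc}}+|W|}e^{-\d(K)}$, where $n_{\mathrm{nc}}$ is the number of non-cyclic sub-intervals.

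The principal technical obstacle is the cyclic case: the uniform randomization over the $2^{|W|}$ valid colourings built into $\psi^A$ must be matched exactly with the inclusion-exclusion used to evaluate the $\oD$-integral, and the clean cancellation of the $\mathbf{1}\{\oD\cap\SS_v=\es\}$ correction --- which is what produces the pure sum $e^{2\d(E_0)}+e^{2\d(E_1)}$ --- relies specifically on $|V(\oD)\cap\SS_v|$ being $|\oD\cap\SS_v|$ rather than $|\oD\cap\SS_v|+1$ when the latter is nonzero. The accompanying case check for non-cyclic intervals --- that the admissible death region is always $\ev(\psi^A)$, and not ``$\ev$ or $\odd$ depending on which endpoints lie in $A$'' --- is short but essential for the factorization to go through.
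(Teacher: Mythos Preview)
Your proof is correct and follows essentially the same route as the paper's. Both arguments start from the classical random-current representation conditional on $\oD$, exchange the order of integration to put the $\oD$-integral inside, and then evaluate that integral by factorizing over vertices and treating the cyclic and non-cyclic cases separately. The paper organizes the bookkeeping slightly differently, introducing a tilted measure $\wtilde\mu$ on deaths with $d\wtilde\mu/d\mu_\d\propto 2^{|V(\oD)|}$, so that on non-cyclic pieces $\wtilde\mu=\mu_{2\d}$ outright while on cyclic pieces it is $\mu_{2\d}$ skewed by the factor $2^{1\{\od(w)=0\}}$; this is exactly the change of measure and the inclusion--exclusion on $\{\oD_w=\es\}$ that you perform by hand. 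The identification of ``admissible'' death configurations with $\oD\subseteq\ev(\psi^A)$ on non-cyclic intervals, and with $\oD\subseteq E_0$ or $\oD\subseteq E_1$ on cyclic ones, likewise matches the paper's equations for $\wtilde\mu(\forall k:|S\cap\ol{J_k^v}|\text{ even})$.
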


We introduce a second random object in advance of proving this.
Let $D \in \cF$, the set of finite subsets of $K$,
and recall that $K\sm \oD$ is a disjoint union of intervals
of the form $v\times J_v^k$.
For each $e=uv\in E$, and each $1\leq k\leq n(u)$ and $1\leq l\leq n(v)$, let 
\begin{equation}
J^e_{k,l}:=J^u_k\cap J^v_l,
\end{equation}
and 
\begin{equation}
E(\oD)=\bigl\{e\times J^e_{k,l}:e\in E,\ 1\leq k\leq n(u),\ 1\leq l\leq n(v),
\,J^e_{k,l}\neq\es\bigr\}.
\end{equation}
Up to a finite set of points, $E(\oD)$  forms a partition of the set
$F$ induced by the `deaths' in $\oD$.  

\begin{figure}[tbp]
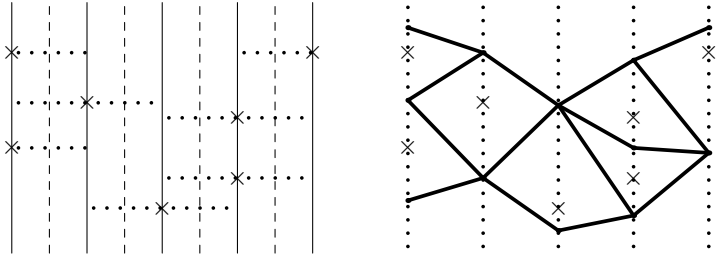

\includegraphics{sharptransition.4}
\qquad
\includegraphics{sharptransition.5} 
\caption{\emph{Left}: 
  The partition $E(\oD)$.  We have: $K_v=\SS$ for $v \in V$, the lines $v\times K_v$
  are drawn as solid, the lines $e\times K_e$ as dashed, and elements of $\oD$ are
  marked as crosses.  The endpoints of the $e\times J^e_{k,l}$ are the points
  where the dotted lines meet the dashed lines.
  \emph{Right}:  
  The graph $G(\oD)$.  In this illustration, the dotted lines are the
  $v\times K_v$, and the solid lines are the edges of $G(\oD)$.}
\label{bridge_part_fig}
\end{figure}

The pair
\begin{equation}
G(\oD):=(V(\oD),E(\oD))
\end{equation}
may be viewed as a graph, illustrated in Figure~\ref{bridge_part_fig}.
We will use the symbols $\bar v$ and $\bar e$ for
typical elements of $V(\oD)$ and $E(\oD)$, respectively.
There are natural weights on the edges and vertices of $G(\oD)$: for
$\bar e=e\times J^e_{k,l}\in E(\oD)$ and
$\bar v=v\times J^v_k\in V(\oD)$, let
\begin{equation}
J_{\bar e}:= \int_{J^e_{k,l}}\l(e,t)\,dt,
\qquad 
h_{\bar v}:= \int_{J^v_k}\g(v,t)\,dt.
\label{o10}
\end{equation}
Thus the weight of a vertex or edge is its measure, calculated according to
$\l$ or $\g$, respectively. By \eqref{o10},
\begin{equation}
\sum_{\bar e\in E(\oD)} J_{\bar e} + \sum_{\bar v\in V(\oD)}h_{\bar v}
= \int_{F}\l(e)\,de + \int_{K}\g(x)\,dx,\qq D\in \cF.
\label{o11}
\end{equation}

\begin{proof}[Proof of Theorem~\ref{rcr_thm}]
With $\L=(K,F)$ as in \eqref{def-newL}, we consider the partition function
$Z'=Z'_K$ given in \eqref{o12}.
For each $\bar v\in V(\oD)$,
$\bar e\in E(\oD)$, the spins $\s_v$ and $\s_e$ are constant for $x\in\bar v$
and $e\in\bar e$, respectively.  Denoting their common values by 
$\s_{\bar v}$ and $\s_{\bar e}$ respectively, the summation in \eqref{o12} equals
\begin{multline}
\sum_{\s\in\S(\oD)}\exp\left\{
\sum_{\bar e\in E(\oD)}\s_{\bar e}\int_{\bar e}\l(e)\, de+
\sum_{\bar v\in V(\oD)}\s_{\bar v}\int_{\bar v}\g(x)\, dx\right\}\\
=\sum_{\s\in\S(\oD)}\exp\left\{
\sum_{\bar e\in E(\oD)}J_{\bar e}\s_{\bar e}+
\sum_{\bar v\in V(\oD)}h_{\bar v}\s_{\bar v}\right\}.
\label{ihp2}
\end{multline}
The right side of \eqref{ihp2} is the partition function of the discrete Ising model on
the graph $G(\oD)$, with pair couplings $J_{\bar e}$ and external fields
$h_{\bar v}$.  We shall apply the random-current expansion of \cite{abf} to this
model.  

For convenience of exposition, we introduce the extended graph
\begin{align}
\wtilde G(\oD)&=(\wtilde V(\oD),\wtilde E(\oD))\label{ihp3}\\
&:=
\bigl(V(\oD)\cup\{\Gh\},E(\oD)\cup\{\bar v \Gh: \bar v \in V(\oD)\}\bigr)
\nonumber
\end{align}
where $\Gh$ is the
ghost-site of \eqref{ihp0}.  We call members of $E(\oD)$ \emph{lattice-bonds},
and those of $\wtilde E(\oD)\sm E(\oD)$ \emph{ghost-bonds}.  
Let $\Psi(\oD)$ be the random multigraph with vertex set $\wtilde V(\oD)$ and
with each edge of $\wtilde E(\oD)$ replaced by a random number of parallel edges,
these numbers being independent and having the Poisson distribution, with
parameter $J_{\bar e}$ for lattice-bonds $\bar e$, and parameter $h_{\bar v}$
for ghost-bonds $\bar v\Gh$.  

Let $\{\partial\Psi(\oD)=A\}$ denote the event
that, for each $\bar v\in V(\oD)$, the total degree of $\bar v$ in $\Psi(\oD)$
\emph{plus} the number of elements of $A$ inside the closure
of $\bar v$ (when regarded as an
interval) is even.  
(There is $\mu_\d$-probability $0$ that $A\cap D \ne\es$,
and thus we may overlook this possibility.) 
Applying the discrete random-current expansion, and
in particular~\cite[eqn (9.24)]{grimmett_RCM}, we obtain by \eqref{o11} that
\begin{equation}
\sum_{\s\in\S(\oD)}\exp\left\{
\sum_{\bar e\in E(\oD)}J_{\bar e}\s_{\bar e}+
\sum_{\bar v\in V(\oD)}h_{\bar v}\s_{\bar v}\right\}=
c 2^{|V(\oD)|}P_D(\partial\Psi(\oD)=\es),
\end{equation}
where $P_D$ is the law of the edge-counts, and 
\begin{equation}
c=\exp\left\{\int_F \l(e)\,de + \int_K\g(x)\,dx\right\}.
\label{o21}
\end{equation}

By the same argument applied to the numerator in \eqref{st_Ising_eq}
(adapted to the measure on $\L$, see the remark after \eqref{o12}),
\begin{equation}
\label{rcr_step1_eq}
\el\s_A\er=
\frac{E(2^{|V(\oD)|}1\{\partial\Psi(\oD)=A\})}
{E(2^{|V(\oD)|}1\{\partial\Psi(\oD)=\es\})},
\end{equation}
where the expectation is with respect to $\mu_\d \times P_D$. 
The claim
of the theorem will follow by an appropriate manipulation of \eqref{rcr_step1_eq}. 

Here is another way to sample $\Psi(\oD)$ which allows us to couple it with
the random colouring $\psi^A$.  Let $B\in\cB$ and $D,G\in\cF$.
The number of points of $G$ lying
in the interval $\bar v=v\times J^v_k$ has the Poisson distribution with
parameter $h_{\bar v}$, and similarly the number of elements of $B$ lying
in $\bar e=e\times J^e_{k,l}\in E(\oD)$ has the Poisson distribution with
parameter $J_{\bar e}$.  Thus, for given $\oD$, the multigraph $\Psi(B,G,\oD)$,
obtained by replacing an edge of $\wtilde E(\oD)$ by parallel edges
equal in number to the corresponding number
of points from $B$ or $G$, respectively, has the same law as $\Psi(\oD)$.  Using the
\emph{same} sets $B$, $G$ we may form the random colouring $\psi^A$.  

The
numerator of~\eqref{rcr_step1_eq} satisfies
\begin{align}
&E(2^{|V(\oD)|}1\{\partial\Psi(\oD)=A\})\label{ihp5}\\
&\hskip1cm=\iint d\mu_\l(B)\, d\mu_\g(G)\,\int d\mu_\d(\oD)\, 
2^{|V(\oD)|}1\{\partial\Psi(B,G,\oD)=A\}\nonumber\\
&\hskip1cm= \mu_\d(2^{|V(D)|}) 
\iint d\mu_\l(B)\, d\mu_\g(G)\,\wtilde\mu(\partial\Psi(B,G,\oD)=A),
\nonumber
\end{align}
where $\wtilde\mu$ is the probability measure on $\cF$ satisfying
\begin{equation}
\frac{d\wtilde\mu}{d\mu_\d}(D) \propto 2^{|V(\oD)|}.
\label{ihp10}
\end{equation}
Therefore, by \eqref{rcr_step1_eq},
\begin{equation}
\el\s_A\er=
\frac{\wtilde P(\partial\Psi(B,G,\oD)=A)}
{\wtilde P(\partial\Psi(B,G,\oD)=\es)},
\label{ihp9}
\end{equation}
where $\wtilde P$ denotes the probability under $\mu_\l\times\mu_\g\times\wtilde \mu$.
We claim that
\begin{equation}
\wtilde \mu(\pd\Psi(B,G,\oD)=A) = s M_{B,G}(\pd\psi^A(B,G)),
\label{ihp13}
\end{equation}
for all $B$, $G$, where $s$ is a constant,
and the expectation $M_{B,G}$ is over the uniform measure on the set of valid colourings. 
Claim \eqref{ihp8} follows from this,
and the remainder of the proof is to show \eqref{ihp13}.
The constants $s$, $s_j$ are permitted in the following to depend only on
$\L$ and $\d$.  

Here is a special case. For $B\in\cB$, $G\in\cF$, 
\begin{equation}
\wtilde\mu(\partial\Psi(B,G,\oD)=A)=0
\end{equation}
if and only if some interval $\ol{I^v_k}$
 contains an odd number of switching
points, if and only if $\psi^A(B,G) =\#$ and $\partial\psi^A(B,G)=0$.  
Thus \eqref{ihp13} holds in this case.

Another special case arises 
when $K_v=[0,\b)$ for all $v\in V$, that is, the `free boundary' case.  
Assume that each $\ol{K_v}$ contains an even number of switching points.
As remarked earlier,
there is a unique valid colouring $\psi^A=\psi^A(B,G)$.
Moreover, $|V(\oD)|=|\oD|+|V|$, whence from standard properties of Poisson
processes, $\wtilde\mu=\mu_{2\d}$.  It may be seen after some thought 
(possibly with the aid of a diagram) that, for given $B$, $G$,
the events $\{\partial\Psi(B,G,\oD)=A\}$ and $\{\oD\cap\odd(\psi^A)=\es\}$
differ by an event of $\mu_{2\d}$-probability $0$. Therefore,
\begin{align}
\wtilde\mu(\partial\Psi(B,G,\oD)=A)&=
\mu_{2\d}( \oD\cap\odd(\psi^A)=\es )\label{ihp7}\\
&=\exp\{-2\d(\odd(\psi^A))\}\nonumber\\
&= s_1\exp\{2\d(\ev(\psi^A))\}=s_1 \partial\psi^A,
\nonumber
\end{align}
with $s_1=e^{-2\d(K)}$. In this special case, \eqref{ihp13} holds.

For the general case, we first note some properties of
$\wtilde\mu$.  By the above, we may assume that
$B$, $G$ are such that $\wtilde\mu(\partial\Psi(B,G,\oD)=A)>0$,
which is to say that each $\ol{I_k^v}$
contains an even number of switching points. 
Let $W = \{v\in V: K_v=\SS\}$ and,  for
$v\in V$, let $\oD_v=D\cap(v\times K_v)$ and $\od(v)=|\oD_v|$.  By \eqref{ihp10}, 
\begin{align*}
\frac{d\wtilde\mu}{d\mu_\d}(D) \propto 2^{|V(\oD)|}&=
\prod_{w\in W}2^{1\vee \od(w)}\prod_{v\in V\setminus W}2^{m(v)+\od(v)}\\
&\propto 2^{|\oD|} \prod_{w\in W} 2^{1\{\od(w)=0\}},
\end{align*}
where $a\vee b = \max\{a,b\}$, and we recall the number $m(v)$ of intervals $I^v_k$ that constitute
$K_v$.  Therefore,
\begin{equation}
\frac{d\wtilde\mu}{d\mu_{2\d}}(D) \propto \prod_{w\in W} 2^{1\{\od(w)=0\}}.
\end{equation}
Three facts follow.
\begin{letlist}
\item The sets $\oD_v$, $v\in V$ are independent under $\wtilde\mu$.
\item  For $v\in V\setminus W$, the law of $\oD_v$ under $\wtilde \mu$ is $\mu_{2\d}$.  
\item For $w\in W$, the law $\mu_w$ of 
$\oD_w$ is that of $\mu_{2\d}$ skewed by the Radon--Nikodym factor 
$2^{1\{\od(w)=0\}}$, which is to say that
\begin{align}
\mu_w(\oD_w \in H) 
&= \frac1{\a_w}\Bigl[2\mu_{2\d}(\oD_w\in H,\,\od(w)=0) \label{ihp15}\\
&\hskip3cm +
\mu_{2\d}(\oD_w\in H,\,\od(w)\ge 1)\bigr],
\nonumber
\end{align}
for appropriate sets $H$, where 
$$
\a_w=\mu_{2\d}(\od(w)=0)+1.
$$
\end{letlist}

Recall the set $S=A\cup G\cup V(B)$ of switching points.
By (a) above, 
\begin{align}
\wtilde\mu(\partial\Psi(B,G,\oD)=A)&=
\wtilde\mu(\forall v,k:\, |S\cap \ol{J^v_k}|\mbox{ is even})\label{ihp11}\\
&= \prod_{v\in V} \wtilde\mu(\forall k:\, |S\cap \ol{J^v_k}|\mbox{ is even}).
\nonumber
\end{align}
We claim that
\begin{equation}
\wtilde\mu(\forall k:\, |S\cap \ol{J^v_k}|\mbox{ is even})= 
s_2(v)M_{B,G}\Bigl(\exp\bigl\{2\d\bigl(\ev(\psi^A)\cap(v\times K_v)\bigr)\bigr\}\Bigr),
\label{ihp12}
\end{equation}
where $M_{B,G}$ is as before. Recall that $M_{B,G}$ is a product measure.
Once \eqref{ihp12} is proved, \eqref{ihp13} follows by \eqref{def-wt} and \eqref{ihp11}.

For $v\in V\sm W$, the restriction of $\psi^A$ to $v\times K_v$
is determined given $B$
and $G$, whence by (b) above, and the remark prior to \eqref{ihp7},
\begin{align}
\wtilde\mu(\forall k:\, |S\cap \ol{J^v_k}|\mbox{ is even})&=
\mu_{2\d}(\forall k:\, |S\cap \ol{J^v_k}|\mbox{ is even})
\label{ihp14}\\
&= \exp\bigl\{-2\d\bigl(\odd(\psi^A)\cap(v\times K_v)\bigr)\bigr\}.
\nonumber
\end{align}
Equation \eqref{ihp12} follows with $s_2(v) =\exp\{-2\d(v\times K_v)\}$. 

For $w\in W$, by \eqref{ihp15},
\begin{align*}
&\wtilde\mu(\forall k:\, |S\cap \ol{J^w_k}|\mbox{ is even})\\
&\hskip1cm =\frac1{\a_w}\Bigl[2\mu_{2\d}(\oD_w=\es)
+\mu_{2\d}(\oD_w\ne\es,\,\forall k:\, |S\cap \ol{J^w_k}|\mbox{ is even})\Bigr]\\
&\hskip1cm=\frac1{\a_w}\Bigl[\mu_{2\d}(\oD_w=\es)+
\mu_{2\d}(\forall k:\, |S\cap \ol{J^w_k}|\mbox{ is even})\Bigr].
\end{align*}
Let $\psi=\psi^A(B,G)$ be a valid colouring with $\psi(w,0) = \even$.  (If
$(w,0) \in A$, we take $\psi(w,0+)=\even$.)
The colouring $\ol\psi$, obtained from $\psi$ by flipping
all colours on $w \times K_w$, is valid also. 
Taking into account the periodic boundary condition,
\begin{align*}
&\mu_{2\d}(\forall k:\,|S\cap \ol{J^w_k}|\mbox{ is even})\\
&\quad =\mu_{2\d}\bigl(\{\oD_w\cap\odd(\psi)=\es\}\cup
\{\oD_w\cap\ev(\psi)=\es\}\bigr)\\
&\quad =\mu_{2\d}(\oD_w\cap\odd(\psi)=\es)+
\mu_{2\d}(\oD_w\cap\ev(\psi)=\es)
-\mu_{2\d}(\oD_w=\es),
\end{align*}
whence
\begin{align}
&\a_w \wtilde\mu(\forall k:\,|S\cap \ol{J^w_k}|\mbox{ is even})\\
&\hskip1cm =\mu_{2\d}(\oD_w\cap\odd(\psi)=\es)
+\mu_{2\d}(\oD_w\cap\ev(\psi)=\es)\nonumber\\
&\hskip1cm =2M_{B,G}\Bigl(\exp\bigl\{-2\d\bigl(\odd(\psi^A)\cap(w\times K_w)\bigr)\bigr\}\Bigr),
\nonumber
\end{align}
since $\odd(\psi^A) = \odd(\psi)$ with $M_{B,G}$-probability $\frac12$, and
equals $\ev(\psi)$ otherwise.
This proves \eqref{ihp12} with $s_2(w) = 2\exp\{-2\d(w\times K_w)\}/\a_w$. 
\end{proof}

By keeping track of the constants in the above proof, we arrive at the
following statement, which will be useful later.

\begin{lemma}\label{Z'}
The partition function $Z'=Z_K'$ of \eqref{o12} satisfies
$$
Z' = 2^N e^{\l(F)+\g(K)-\d(K)} E(\pd\psi^\es),
$$
where  $N=\sum_{v\in V}m(v)$
is the total number of intervals comprising $K$.
\end{lemma}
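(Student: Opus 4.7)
The plan is to extract the multiplicative constant from the proof of Theorem~\ref{rcr_thm}, applied with $A=\es$. Starting from \eqref{o12}, the discrete random-current expansion used there (yielding \eqref{rcr_step1_eq}) gives
$$
Z'_K=e^{\l(F)+\g(K)}\int d\mu_\d(\oD)\,2^{|V(\oD)|}P_D(\pd\Psi(\oD)=\es),
$$
since the constant $c$ of \eqref{o21} factors out of the $D$-integral. Coupling $\Psi(\oD)$ with $\psi^A$ through a common pair $(B,G)\in\cB\times\cF$ as in \eqref{ihp5}, and using \eqref{ihp13} with $A=\es$, this becomes
$$
Z'_K=s\cdot e^{\l(F)+\g(K)}\cdot\mu_\d(2^{|V(\oD)|})\cdot E(\pd\psi^\es),
$$
where $s=\prod_{v\in V} s_2(v)$ with $s_2(v)$ the per-vertex constants identified at the end of the proof of Theorem~\ref{rcr_thm}. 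So the lemma reduces to showing
$$
s\cdot\mu_\d(2^{|V(\oD)|})=2^N e^{-\d(K)}.
$$

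Next I would exploit that $\oD_v:=\oD\cap(v\times K_v)$, $v\in V$, are independent under $\mu_\d$, so both factors are products over $v$. Writing $\d_v:=\d(v\times K_v)$ and using the Poisson generating function $\mu_\d(z^{\od(v)})=e^{\d_v(z-1)}$ with $z=2$, I would compute $\mu_\d(2^{n(v,\oD)})$ in two cases. For $v\in V\sm W$ one has $n(v,\oD)=m(v)+\od(v)$, giving
$$
\mu_\d(2^{n(v,\oD)})=2^{m(v)}e^{\d_v},\qq s_2(v)=e^{-2\d_v},
$$
so the product is $2^{m(v)}e^{-\d_v}$. For $w\in W$ one has $n(w,\oD)=1\vee\od(w)$ and $m(w)=1$; a short computation splitting on $\od(w)=0$ gives
$$
\mu_\d(2^{n(w,\oD)})=e^{\d_w}+e^{-\d_w},\qq s_2(w)=\frac{2e^{-2\d_w}}{1+e^{-2\d_w}},
$$
and the product again telescopes to $2 e^{-\d_w}=2^{m(w)}e^{-\d_w}$. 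Taking the product over $v\in V$ yields $2^N e^{-\d(K)}$, as required.

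The only real obstacle is the bookkeeping in the periodic case $w\in W$: one has to verify that the awkward denominator $\a_w=1+e^{-2\d_w}$ in $s_2(w)$ cancels precisely with the numerator $e^{\d_w}+e^{-\d_w}=e^{\d_w}(1+e^{-2\d_w})$ arising from $\mu_\d(2^{1\vee\od(w)})$. Once this cancellation is observed, the two cases $W$ and $V\sm W$ produce the \emph{same} factor $2^{m(v)}e^{-\d_v}$ per vertex, and the stated formula follows from $\d(K)=\sum_v\d_v$ and $N=\sum_v m(v)$.
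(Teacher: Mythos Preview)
Your proposal is correct and is precisely what the paper intends: the lemma is stated immediately after the proof of Theorem~\ref{rcr_thm} with the remark ``by keeping track of the constants in the above proof,'' and you have done exactly that, correctly computing the per-vertex factors $s_2(v)\cdot\mu_\d(2^{n(v,\oD)})=2^{m(v)}e^{-\d_v}$ in both the free and periodic cases and verifying the cancellation of $\a_w$.
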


\subsection{The backbone}

The concept of the backbone is key to the analysis of \cite{abf}, and
its definition there has a certain complexity. The corresponding
definition is easier in the current setting, because of the fact that
bridges, deaths, and sources have (almost surely) no common point. 

We construct a total order on $K$ by: first ordering the
vertices of $L$, and then using the natural order on $[0,\b)$.
Let $A\subseteq \ol K$ 
be a finite set of sources, and let $B\in\cB$, $G \in \cF$.
Let $\psi$ be a valid colouring.  We will
define a sequence of directed odd paths called the \emph{backbone} and denoted
$\xi=\xi(\psi)$.  Suppose $A=(a_1,a_2,\dotsc,a_n)$ in the above ordering.  
Starting at $a_1$, follow the odd interval (in $\psi$) until you
reach an element of $S=A\cup G \cup V(B)$.  If the first
such point thus encountered is the endpoint of a
bridge, cross it, and continue along the odd interval;  continue likewise
until we first reach a point $t_1\in A\cup G$, at which point we stop.
Note, by the validity of $\psi$, that $a_1\ne t_1$.
The odd path thus traversed is denoted $\zeta^1$;  
we take $\zeta^1$ to be closed (when viewed as a subset of $\ZZ^d\times\RR$).  Repeat
the same procedure with $A$ replaced by $A\setminus\{a_1,t_1\}$, and iterate
until no sources remain.  The
resulting (unordered) set of paths $\xi=(\zeta^1,\dotsc,\zeta^k)$ is called the \emph{backbone} of
$\psi$.  The backbone will also be denoted at times as
$\xi=\zeta^1\circ\dotsb\circ\zeta^k$.  We define $\xi(\#)=\es$.
Note that, apart from the backbone, the remaining odd segments of
$\psi$ form disjoint self-avoiding cycles (or  `eddies').  
Unlike the discrete setting of \cite{abf}, there is a (a.s.) unique way of specifying
the backbone from knowledge of $A$, $B$, $G$ and the valid colouring $\psi$.
See Figure~\ref{backbone_fig}.

The backbone contains all the sources $A$ as endpoints, and the configuration outside $\xi$
may be any sourceless configuration.  Moreover, since $\xi$ is entirely odd,
it does not contribute to the weight $\partial\psi$ in \eqref{def-wt}.  
It follows, using properties of Poisson processes,  that the
conditional expectation $E(\partial\psi^A\mid\xi)$ equals the expected weight
of any sourceless colouring of $K\sm\xi$, which is to say that,
with $\xi:= \xi(\psi^A)$,
\begin{equation}\label{backb_cond_eq}
E(\partial\psi^A\mid\xi)=E_{K\setminus\xi}(\partial\psi^\es)
=: Z_{K\setminus\xi}.
\end{equation}
Cf.\ \eqref{o12} and \eqref{ihp8}, and recall Remark \ref{rem-as}.
We abbreviate $Z_K$ to $Z$, and recall
from Lemma \ref{Z'} that the $Z_R$ differ from the partition
functions $Z_R'$ by certain multiplicative constants.

\begin{figure}[tbp]
\includegraphics{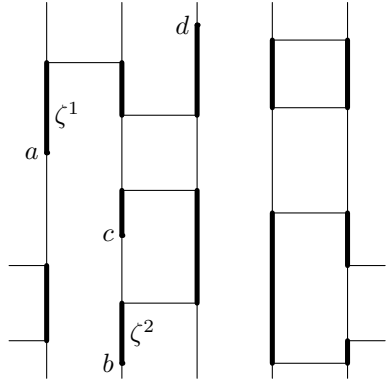}
\caption{A valid colouring configuration $\psi$ with sources $A=\{a,b,c,d\}$, and its backbone 
$\xi=\zeta^1\circ\zeta^2$.  Note that, in this illustration, bridges protruding from
the sides `wrap around', and that there are no ghost-bonds.}
\label{backbone_fig}
\end{figure}

Let $\Xi$ be the set of all possible backbones as $A$, $B$, and $G$ vary, 
regarded as sequences of
directed paths in $K$; these paths may, if required, be ordered by their
starting points.   
For a source-set $A \subseteq\ol K$ and a backbone $\nu\in \Xi$, we write
$A\sim\nu$ if there exists $B \in \cB$ and $G \in \cF$ such
that $M_{B,G}(\xi(\psi^A)=\nu)>0$.
We define the \emph{weight}
$\wt^A(\nu)$ by 
\begin{equation}
\wt^A(\nu) = \wt^A_K(\nu):=
\begin{cases} \dfrac{Z_{K\sm \nu}}{Z} &\text{if } A \sim \nu,\\
0 &\text{otherwise}.
\end{cases}
\label{ihp16}
\end{equation}
By \eqref{backb_cond_eq} and Theorem \ref{rcr_thm},
with $\xi=\xi(\psi^A)$,
\begin{equation}\label{backbone_rep_eq}
E(\wt^A(\xi))=\frac{E(E(\pd\psi^A\mid\xi))}{Z} = 
\frac{E(\pd\psi^A)}{E(\pd\psi^\es)} =\el\s_A\er.
\end{equation}

For $\nu^1,\nu^2\in \Xi$ with $\nu^1\cap \nu^2=\es$ (that is, no point
lies in paths of both $\nu^1$ and $\nu^2$), 
we write $\nu^1\circ \nu^2$ for the element of $\Xi$
comprising the union of $\nu^1$ and $\nu^2$.

Let $\nu =\zeta^1\circ\dotsb\circ\zeta^k\in\Xi$ where $k \ge 1$.  If $\zeta^i$
has starting point $a_i$ 
and endpoint $b_i$, we write $\zeta^i:a_i\rightarrow b_i$, and also
$\nu:a_1\rightarrow b_1,\dotsc,a_k\rightarrow b_k$.  
If $b_i \in G$, we write $\zeta^i: a_i \rightarrow \Gh$.
There is a natural way to `cut' $\nu$ at points $x$ lying on
$\zeta^i$, say, where $x\ne a_i, b_i$: let $\bar\nu^1=\bar\nu^1(\nu,x)=
\zeta^1\circ\cdots\circ\zeta^{i-1}\circ \zeta^i_{\le x}$ and 
$\bar\nu^2=\bar\nu^2(\nu,x)=\zeta^i_{\ge x}\circ\zeta^{i+1}\circ\dots\circ\zeta^k$,
where $\zeta^i_{\le x}$ (\resp, $\zeta^i_{\ge x})$ is the closed sub-path of
$\zeta^i$ from $a_i$ to $x$ (\resp, $x$ to $b_i$).
We express this decomposition as $\nu=\bar\nu^1\circ\bar\nu^2$ where, this
time, each $\bar\nu^i$ may comprise a number of disjoint paths.
The notation $\ol\nu$ will be used only in a situation where there has been a
cut. 

We note two special cases. If $A=\{a\}$, then necessarily
$\xi(\psi^A):a\rightarrow \Gh$, so
\begin{equation}
\el\s_a\er=E\bigl(\wt^a(\xi)\cdot1\{\xi:a\rightarrow \Gh\}\bigr).
\label{special1}
\end{equation}
If $A=\{a,b\}$ where $a<b$ in the ordering of $K$, then 
\begin{equation}
\el\s_a\s_b\er=E\bigl(\wt^{ab}(\xi)\cdot1\{\xi:a\rightarrow b\}\bigr)
+E\bigl(\wt^{ab}(\xi)\cdot1\{\xi:a\rightarrow \Gh,\,b\rightarrow \Gh\}\bigr).
\label{special2}
\end{equation}
The last term equals $0$ when $\g\equiv0$.

Finally, here is a lemma for computing the weight of $\nu$ in terms of its
constituent parts. 
The claim of the lemma is, as usual, valid only `almost surely'.

\begin{lemma}\label{backb2}
(a) Let $\nu^1,\nu^2 \in \Xi$ be disjoint, and $\nu=\nu^1\circ\nu^2$, 
$A\sim\nu$.  Writing $A^i=A\cap\nu^i$,
we have that 
\begin{equation}
\wt^A(\nu)=\wt^{A^1}(\nu^1)\wt^{A^2}_{K\setminus\nu^1}(\nu^2).
\end{equation}
(b) Let $\nu = \ol\nu^1\circ \ol\nu^2$ be a cut of the backbone $\nu$ at
the point $x$, and $A \sim \nu$. Then
\begin{equation}
\wt^A(\nu)=\wt^{B^1}(\ol\nu^1)\wt^{B^2}_{K\setminus\ol\nu^1}(\ol\nu^2).
\end{equation}
where $B^i=A^i\cup\{x\}$.
\end{lemma}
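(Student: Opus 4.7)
The plan is to reduce both parts to the trivial telescoping identity
\[
\frac{Z_{K\setminus \nu}}{Z_K} \;=\; \frac{Z_{K\setminus \nu^1}}{Z_K} \cdot \frac{Z_{(K\setminus \nu^1)\setminus \nu^2}}{Z_{K\setminus \nu^1}},
\]
valid whenever $\nu=\nu^1\cup\nu^2$ (so that $K\setminus\nu=(K\setminus\nu^1)\setminus\nu^2$), and then to verify that the compatibility conditions $A\sim\nu$ propagate correctly to the two factors. Given the definition \eqref{ihp16} of $\wt$, each ratio on the right of the telescoping identity is, up to recognising its factored form, the weight of the corresponding sub-backbone on the appropriate restricted domain, so the algebra is immediate once compatibility is matched.

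For part (a), disjointness of $\nu^1$ and $\nu^2$ gives $K\setminus\nu=(K\setminus\nu^1)\setminus\nu^2$ directly, and the sources partition as $A=A^1\sqcup A^2$ since every element of $A$ lies as an endpoint of some directed sub-path of $\nu$. The key claim is that $A\sim\nu$ in $K$ holds if and only if $A^1\sim\nu^1$ in $K$ and $A^2\sim\nu^2$ in $K\setminus\nu^1$: given $B$, $G$ realising $A\sim\nu$, the restrictions $B^i=B\cap\nu^i$ and $G^i=G\cap\nu^i$ witness the two factored relations, and conversely one can paste witnessing choices on the disjoint supports $\nu^1$ and $\nu^2$. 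With this compatibility in hand, the telescoping identity yields the claim at once.

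For part (b), set-theoretically $\ol\nu^1\cup\ol\nu^2=\nu$ up to the single point $x$, which is null for all relevant Poisson measures, so the same telescoping applies. The new subtlety is that cutting $\nu$ at an interior point $x$ of some $\zeta^i$ creates, in each sub-path, an artificial endpoint at $x$ that is not a source of the original configuration. A backbone-endpoint of a valid sub-backbone must either lie in the ghost-set or be an element of its source set; hence, to realise $\ol\nu^1$ and $\ol\nu^2$ as valid backbones on their respective domains one must augment each source set by $x$, which is precisely the prescription $B^i=A^i\cup\{x\}$. Once the correct source sets are identified, the argument of part (a) applies verbatim.

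The main obstacle is the careful verification of this compatibility bijection: the triples $(B,G,\psi^A)$ realising $A\sim\nu$ in $K$ must correspond to pairs of triples realising the factored relations on the two pieces. This relies on the almost-sure unique determination of the backbone from $(A,B,G,\psi)$ noted just before Figure~\ref{backbone_fig}, which ensures that the operations of restricting (or cutting) and backbone-extraction commute up to null events.
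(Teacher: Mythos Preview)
Your proposal is correct and follows essentially the same approach as the paper: both reduce the claim to the telescoping identity $Z_{K\setminus\nu}/Z_K = (Z_{K\setminus\nu^1}/Z_K)\cdot(Z_{(K\setminus\nu^1)\setminus\nu^2}/Z_{K\setminus\nu^1})$ together with the observation that the compatibility indicators match on both sides, and handle part~(b) by adjoining $x$ to the source sets. The paper is considerably terser --- it simply asserts that ``the right side vanishes if and only if the left side vanishes'' and that part~(b) ``follows similarly, on adding $x$ to the set of sources'' --- whereas you spell out why the compatibility relations factor; your extra discussion is sound but not strictly needed.
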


\begin{proof}
By \eqref{ihp16}, the first claim is equivalent to
\begin{equation}
\frac{Z_{K\setminus\nu}}{Z}1\{A\sim\nu\}=
\frac{Z_{K\setminus\nu^1}}{Z}1\{A^1\sim\nu^1\}
\frac{Z_{K\setminus(\nu^1\cup\nu^2)}}{Z_{K\setminus\nu^1}}1\{A^2\sim\nu^2\}.
\end{equation}
The right side vanishes if and only if the left side vanishes. When
both sides are non-zero, their equality follows from the fact that
$Z_{K\setminus\nu}=Z_{K\setminus(\nu^1\cup\nu^2)}$. The second claim
follows similarly, on adding $x$ to the set of sources.
\end{proof}

\section{The switching lemma}\label{sw_sec}

We state and prove next the
principal tool in the random-parity representation, namely
the so-called `switching lemma'. In brief, this
allows us to take two independent colourings, with different sources, and to
`switch' the sources from one to the other in a measure-preserving way.
In so doing,  
the backbone will generally change.  In order to
preserve the measure, the \emph{connectivities}
inherent in the backbone must be retained. We begin by defining
two notions of connectivity in colourings. We work throughout this
section in the general set-up of Section \ref{ssec-col}.

\subsection{Connectivity and switching}\label{ssec-switching}

Let $B\in \cB$, $G \in \cF$, 
let $A \subseteq\ol K$ be a finite set of sources, and write $\psi^A=\psi^A(B,G)$ for the colouring
given in the last section.  In what follows we think of the ghost-bonds
as bridges to the ghost-site $\Gh$. 

Let 
$x,y\in K^\Gh:= K \cup\{\Gh\}$. A \emph{path} from $x$ to $y$ in the configuration $(B,G)$ is a 
self-avoiding path with endpoints $x$, $y$,
traversing intervals of $K^\Gh$, and possibly bridges in $B$ and/or ghost-bonds 
joining $G$ to $\Gh$.  
Similarly, a \emph{cycle} is a self-avoiding cycle in the
above graph. A \emph{route} is a path or a cycle. A route containing no ghost-bonds is called a
\emph{lattice-route}.  A route is called \emph{odd} (in the colouring $\psi^A$)
if $\psi^A$, when restricted to the route, takes only the value `odd'. 
The failed colouring $\psi^A=\#$ is deemed to contain no odd
routes.  

Let $B_1,B_2\in\cB$, $G_1,G_2\in\cF$, and
let $\psi_1^A=\psi_1^A(B_1,G_1)$ and $\psi_2^B=\psi_2^B(B_2,G_2)$
be the associated colourings.
Let $\D$ be an auxiliary Poisson process on $K$, with
intensity function $4\d(\cdot)$, that is independent of all other random
variables so far. We call points of
$\D$ \emph{cuts}.  A route of $(B_1\cup B_2, G_1\cup G_2)$
is said to be \emph{open} in the triple
$(\psi_1^A,\psi_2^B,\D)$ if it includes no 
sub-interval of $\ev(\psi_1^A)\cap\ev(\psi_2^B)$ containing
one or more  elements of $\D$.
In other words, the cuts break paths, but only when they
belong to intervals labelled `even' in \emph{both} colourings.  See Figure~\ref{connectivity_fig}.
In particular, if there is an odd path $\pi$ from $x$
to $y$ in $\psi_1^A$, then $\pi$ constitutes an open path in
$(\psi_1^A,\psi_2^B,\D)$ irrespective of $\psi_2^B$ and $\D$.  We let
\begin{equation}
\{x\lra y\mbox{ in }\psi_1^A,\psi_2^B,\D\}
\end{equation}
be the event that there exists an open path from $x$ to $y$ in 
$(\psi_1^A,\psi_2^B,\D)$. We may abbreviate this to
$\{x\lra y\}$ when there is no ambiguity. 

\begin{figure}[tbp]
\includegraphics{sharptransition.7}
\hspace{1.3cm} 
\includegraphics{sharptransition.8} 
\hspace{1.3cm} 
\includegraphics{sharptransition.9} 
\caption{Connectivity in pairs of colourings.
\emph{Left}:  $\psi_1^{ac}$.  \emph{Middle}:  $\psi_2^\es$.
\emph{Right}:  the triple $\psi_1^{ac},\psi_2^\es,\D$. 
Crosses are elements of $\D$ and grey lines are
where either $\psi_1^{ac}$ or $\psi_2^\es$ is odd.  In
$(\psi_1^{ac},\psi_2^\es,\D)$ the following connectivities hold:
$a\nlra b$, $a\lra c$, $a\lra d$,
$b\nlra c$, $b\nlra d$,
$c\lra d$.  The dotted line marks $\pi$, one of the open paths from
$a$ to $c$.} 
\label{connectivity_fig}
\end{figure}

There is an analogy between open paths in the above construction
and the notion of connectivity in the
random-current representation of the discrete Ising model. Points
labelled `odd' or
`even' above may be considered as collections of infinitesimal parallel edges,
being odd or even in number, respectively.  If a point is `even', the
corresponding
number of edges may be $2,4,6,\dotsc$ \emph{or} it may be 0;  in the `union'
of $\psi_1^A$ and $\psi_2^B$, connectivity is broken at a point if and only if
both the corresponding numbers equal 0.  It turns out that the correct law for
the set of such points is that of $\D$.

Here is some notation.  For any finite sequence $(a,b,c,\dots)$ of elements in $K$, 
the string  
$abc\dotsc$ will denote the subset of elements
that appear an odd number of times in the sequence.  
If $A\subseteq \ol K$ 
is a finite source-set with odd
cardinality, then for any pair $(B,G)$ for which there exists a valid 
colouring $\psi^A(B,G)$, the
number of ghost-bonds must be odd.  Thinking of these as bridges to $\Gh$, $\Gh$ may thus 
be viewed as an element of $A$, and we make the following remark.

\begin{remark}\label{gGremark}
For a source-set $A \subseteq \ol K$ with $|A|$ odd,
we shall use the expressions $\psi^A$ and $\psi^{A\cup\{\Gh\}}$ interchangeably.
\end{remark}
 
We call a function $F$, acting on $(\psi_1^A,\psi_2^B,\D)$, a
\emph{connectivity function} if it depends only on the connectivity
properties using open paths of $(\psi_1^A,\psi_2^B,\D)$, that is, the value
of $F$ depends only on the set $\{(x,y)\in (K^\Gh)^2: x \lra y\}$.
In the following, $E$ denotes expectation with respect to $d\mu_\l\, d\mu_\g\, dM_{B,G}\,dP$
where $P$ is the law of $\Delta$.

\begin{theorem}[Switching lemma]\label{sl}
Let $F$ be a connectivity function and $A,B\subseteq \ol K$ finite source-sets.
For $x,y\in \ol K\cup\{\Gh\}$ such that $A\sd xy$ and $B\sd xy$ are source-sets,
\begin{align}
\label{sw_eq_1}
&E\bigl(\partial\psi_1^A\partial\psi_2^B\cdot F(\psi_1^A,\psi_2^B,\D)
\cdot 1\{x\lra y\mbox{ in }\psi_1^A,\psi_2^B,\D\}\bigr)\\
&\hskip1cm =E\Big(\partial\psi_1^{A\sd xy}\partial\psi_2^{B\sd xy}
\cdot F(\psi_1^{A\sd xy},\psi_2^{B\sd xy},\D)\cdot \nonumber\\ 
&\hskip4cm \cdot 1\{x\lra y\mbox{ in }\psi_1^{A\sd xy},\psi_2^{B\sd xy},\D\}\Big).
\nonumber
\end{align}
In particular,
\begin{equation}\label{sw_eq_2}
E(\partial\psi_1^{xy}\partial\psi_2^B)
=E\bigl(\partial\psi_1^\es\partial\psi_2^{B\sd xy}
\cdot1\{x\lra y\mbox{ in }
\psi_1^\es,\psi_2^{B\sd xy},\D\}\bigr).
\end{equation}
\end{theorem}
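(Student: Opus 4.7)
The plan is to establish \eqref{sw_eq_1} by reduction to the classical random-current switching lemma of \cite{aiz82} (see also \cite{grimmett_RCM}) through a discretization of $\SS$, and then to derive \eqref{sw_eq_2} as an essentially immediate consequence of \eqref{sw_eq_1}.

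First, for a mesh size $\eps>0$, I would replace each fibre $v\times K_v$ by a cyclic chain of cells of length $\eps$, producing a finite graph $\wtilde G_\eps$ augmented with the ghost-site $\Gh$. On $\wtilde G_\eps$ one introduces two independent discrete random currents $n_1, n_2$ with source-sets $A$ and $B$, whose edge-weights are calibrated so that the joint law of the induced discrete parity configurations converges, as $\eps\downarrow 0$, to the joint law of $(\psi_1^A, \psi_2^B, \D)$ under $d\mu_\l\, d\mu_\g\, dM_{B,G}\, dP$. Bridges in $B_i$ become horizontal edges of $\wtilde G_\eps$ of positive multiplicity, ghost-bonds in $G_i$ become edges to $\Gh$ of positive multiplicity, and vertical edges between adjacent cells in the same fibre carry Poisson edge-multiplicities whose parity tracks the colour of $\psi_i$. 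The weight $\pd\psi_i^{A_i}=\exp\{2\d(\ev(\psi_i^{A_i}))\}$ emerges in the limit from the factor $\prod_v 2^{|V_v(D)|}$ of \eqref{ihp10} appearing in the proof of Theorem \ref{rcr_thm}, while the independent cut process $\D$ of intensity $4\d$ arises as the $\eps\downarrow 0$ limit of the rescaled count of vertical cells on which both $n_1$ and $n_2$ place exactly zero edges, the factor $4=2^2$ coming from the doubling over two independent copies.

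Once this matching is secured, I would invoke the classical switching lemma for the pair $(n_1, n_2)$ on $\wtilde G_\eps$. The crucial observation is that $\{x\lra y\text{ in }\psi_1^A,\psi_2^B,\D\}$ corresponds, for each $\eps$, to ordinary connectivity in the multigraph $n_1+n_2$, and any connectivity function $F$ of the continuum triple pulls back to a connectivity function of $n_1+n_2$; hence the hypotheses of the classical identity are met. Taking $\eps\downarrow 0$ in the resulting discrete identity, via dominated convergence and standard Poisson approximation, yields \eqref{sw_eq_1}.

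For \eqref{sw_eq_2}, I would apply \eqref{sw_eq_1} with $A=\es$ and $F\equiv 1$: whenever $\pd\psi_1^{xy}\ne 0$ there must exist an odd path in $\psi_1^{xy}$ from $x$ to $y$, and such a path is automatically open in $(\psi_1^{xy}, \psi_2^B, \D)$ irrespective of the other data, so the indicator $1\{x\lra y\}$ may be inserted on the left-hand side of \eqref{sw_eq_1} in this case without changing its value. I expect the main obstacle to be the second paragraph, namely the careful verification that the continuum triple $(\psi_1^A, \psi_2^B, \D)$ really is the $\eps\downarrow 0$ limit of a genuine pair of discrete currents with the correct sources and weights, together with the identification of the factor $4$ in the intensity of $\D$; once that correspondence is in place, the switching identity itself is essentially a transcription of the classical discrete result.
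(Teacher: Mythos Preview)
Your approach is genuinely different from the paper's. The paper proves \eqref{sw_eq_1} directly in the continuum, with no discretization. It conditions on the union $Q=(B_1\cup B_2, G_1\cup G_2)$ of all bridges and ghost-bonds; given $Q$, each of the finitely many bridges/ghost-bonds is allocated independently to $\psi_1$ or $\psi_2$ with probability $\tfrac12$, and (when $W\ne\es$) the colours at the $(w,0)$ are chosen uniformly. For each path $\pi$ in $Q$ from $x$ to $y$ the paper constructs an explicit involution $f_\pi$ on the set of allocations, sending source-sets $(A,B)$ to $(A\sd xy,B\sd xy)$ by adding $\pi$ modulo $2$ to both allocations. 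The key computation is that the product $\partial Q_1\,\partial Q_2$ picks up exactly the factor $\exp\{4\d(\ev(Q_1)\cap\ev(Q_2)\cap\pi)-4\d(\ev(R_1)\cap\ev(R_2)\cap\pi)\}$ under $f_\pi$, and this factor is precisely cancelled by the $P$-probability that $\pi$ is $\D$-free (this is where the intensity $4\d$ of $\D$ is used). Summing over an ordered enumeration of the paths $\pi_k$ via events ``$\pi_k$ is the first open path'' closes the identity. The extension to general connectivity $F$ follows because $f_{\pi_k}$ leaves the configuration off $\pi_k$ unchanged and all of $\pi_k$ is connected in both triples.

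Your discretization route is a legitimate alternative and is in the spirit of the argument alluded to in Remark~\ref{crawford} for the mean-field case. What it buys is that the switching step itself becomes a quotation of \cite{aiz82}; what it costs is exactly the part you flag as the obstacle: a rigorous identification of the continuum triple $(\psi_1^A,\psi_2^B,\D)$ as a scaling limit of a pair of discrete currents, together with convergence of the weights $\pd\psi$ and of the connectivity events (the latter being a non-trivial topological statement, since open connectivity is neither open nor closed in any obvious topology on configurations). The paper's direct argument avoids all of this and makes the role of the $4\d$ intensity completely transparent.

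One small slip: in your derivation of \eqref{sw_eq_2} you should take $A=\{x,y\}$, not $A=\es$. With $A=\{x,y\}$ the left side of \eqref{sw_eq_1} is $E(\partial\psi_1^{xy}\partial\psi_2^B\cdot 1\{x\lra y\})$, and your odd-path observation shows the indicator is redundant there; the right side is then exactly the right side of \eqref{sw_eq_2}. Your subsequent reasoning (about $\psi_1^{xy}$ forcing an odd path) already presumes this choice, so the error is in the labelling rather than the idea.
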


\begin{proof}
Equation \eqref{sw_eq_2} follows from \eqref{sw_eq_1} with $A=\{x,y\}$
and $F\equiv 1$, and so it suffices to prove \eqref{sw_eq_1}.
This is trivial if $x=y$, and we assume henceforth that $x\neq y$.
Recall that $W=\{v\in V:K_v=\SS\}$ and $|W|=r$.  

We prove \eqref{sw_eq_1} first for the special case when $F\equiv 1$, that is,
\begin{multline}\label{sw_eq_3}
E\bigl(\partial\psi_1^A\partial\psi_2^B
\cdot 1\{x\lra y\mbox{ in }\psi_1^A,\psi_2^B,\D\}\bigr)\\
=E\bigl(\partial\psi_1^{A\sd xy}\partial\psi_2^{B\sd xy}\cdot
1\{x\lra y\mbox{ in }\psi_1^{A\sd xy},\psi_2^{B\sd xy},\D\}\bigr),
\end{multline}
and this will follow by conditioning on the pair $Q=(B_1\cup B_2,G_1\cup G_2)$.
 
Let $Q\in \cB\times \cF$ be given. Conditional on 
$Q$, the law of $(\psi_1^A,\psi_2^B)$ is given as follows.
First, we
allocate each bridge and each ghost-bond to either $\psi_1^A$ or
$\psi_2^B$ with equal probability (independently of one another).
If $W \ne \es$, then we must also allocate (uniform) random colours
to the points $(w,0)$, $w \in W$, for each of $\psi_1^A$, $\psi_2^B$.
If $(w,0)$ is itself a source, we work with $(w,0+)$.
(Recall that the pair $(B',G')$ may be reconstructed from
knowledge of a valid colouring $\psi^{A'}(B',G')$.)
There are $2^{|Q|+2r}$ possible outcomes of the above choices, and each is
equally likely.

The process $\D$ is independent of
all random variables used above.  Therefore, the conditional expectation,
given $Q$, of the random variable on the left side of \eqref{sw_eq_3} equals  
\begin{equation}\label{sw_cond_eq}
\frac{1}{2^{|Q|+2r}}\sum _{\Qab}\partial Q_1\partial Q_2\,
P(x\lra y\mbox{ in }Q_1,Q_2,\D),
\end{equation}
where the sum is over the set $\Qab=\Qab(Q)$ of all possible pairs $(Q_1,Q_2)$ of
values of $(\psi_1^A,\psi_2^B)$. 
The measure $P$ is that of $\D$. 

We shall define an invertible (and therefore measure-preserving) map 
from $\Qab$ to $\Qabxy$. Let $\pi$ be a path of $Q$ with endpoints $x$ and $y$
(if such a path $\pi$ exists), and let $f_\pi:\Qab\to\Qabxy$
be given as follows.
Let $(Q_1,Q_2)\in\Qab$, say $Q_1=Q_1^A(B_1,G_1)$
and $Q_2=Q_2^B(B_2,G_2)$ where $Q=(B_1\cup B_2, G_1\cup G_2)$.
For $i=1,2$, let $B_i'$ (\resp, $G_i'$) be the set of bridges (\resp, ghost-bonds)
in $Q$ lying in exactly one
of $B_i$, $\pi$ (\resp, $G_i$, $\pi$). Otherwise expressed, $(B_i',G_i')$ is obtained
from $(B_i,G_i)$ by adding the bridges/ghost-bonds of $\pi$ `modulo 2'.
Note that $(B_1'\cup B_2', G_1' \cup G_2') = Q$.

If $W = \es$, we let $R_1=R_1^{A \sd xy}$ (\resp, $R_2^{B\sd xy}$) be the
unique valid colouring of $(B_1',G_1')$ with sources $A\sd xy$
(\resp, $(B_2',G_2')$ with sources $B\sd xy$), so
$R_1=\psi^{A\sd xy}(B_1',G_1')$, and similarly for $R_2$.
When $W\ne \es$ and $i=1,2$, we choose the colours of the $(w,0)$, $w\in W$, 
(or $(w,0+)$ if $(w,0)$ is a source)
 in $R_i$ in such a way
that $R_i \equiv Q_i$ on $K\sm\pi$. 

It is easily seen that the map $f_\pi:(Q_1,Q_2) \mapsto (R_1,R_2)$
is invertible, indeed its inverse is given by the same mechanism.
See Figure~\ref{connectivity_after_switch_fig}.

\begin{figure}[tbp]
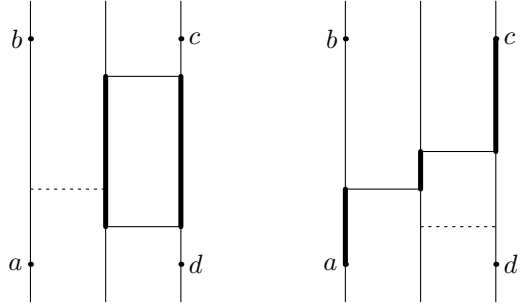

\includegraphics{sharptransition.10}
\hspace{1.3cm} 
\includegraphics{sharptransition.11} 
\caption{Switched configurations.  Taking $Q_1^{ac}$, $Q_2^\es$ and
  $\pi$ to be $\psi_1^{ac}$, $\psi_2^\es$ and $\pi$ of
  Figure~\ref{connectivity_fig}, respectively, this figure illustrates the
  `switched' configurations $R_1^\es$ and $R_2^{ac}$ (left and
  right, respectively).}
\label{connectivity_after_switch_fig}
\end{figure}

By \eqref{def-wt}, 
\begin{equation}
\partial Q_1\partial Q_2=
\exp\bigl\{2\d(\ev(Q_1))+2\d(\ev(Q_2))\bigr\}.
\label{o14}
\end{equation}
Now,
\begin{align}\label{o15}
\d(\ev(Q_i))
&=\d(\ev(Q_i)\cap\pi)+\d(\ev(Q_i)\setminus\pi)\\
&=\d(\ev(Q_i)\cap\pi)+\d(\ev(R_i)\setminus\pi), 
\nonumber
\end{align}
and
\begin{align*}
&\d(\ev(Q_1)\cap\pi)+\d(\ev(Q_2)\cap\pi) - 2\d\bigl(\ev(Q_1)\cap\ev(Q_2)\cap\pi\bigr) \\
&\hskip1cm =\d\bigl(\ev(Q_1)\cap\odd(Q_2)\cap\pi\bigr)+\d\bigl(\odd(Q_1)\cap\ev(Q_2)\cap\pi\bigr)\\
&\hskip1cm=\d\bigl(\odd(R_1)\cap\ev(R_2)\cap\pi\bigr)+\d\bigl(\ev(R_1)\cap\odd(R_2)\cap\pi\bigr)\\
&\hskip1cm =\d(\ev(R_1)\cap\pi)+\d(\ev(R_2)\cap\pi) -2\d\bigl(\ev(R_1)\cap\ev(R_2)\cap\pi\bigr),
\end{align*}
whence, by \eqref{o14}--\eqref{o15},
\begin{align}\label{switched_weights_eq}
\partial Q_1\partial Q_2=
\partial R_1\partial R_2&
\exp\bigl\{-4\d\bigl(\ev(R_1)\cap\ev(R_2)\cap\pi\bigr)\bigr\}\\ 
&\times \exp\bigl\{4\d\bigl(\ev(Q_1)\cap\ev(Q_2)\cap\pi\bigr)\bigr\}.
\nonumber
\end{align}

The next step is to choose a suitable path $\pi$.
Consider the final term in \eqref{sw_cond_eq}, namely 
\begin{equation}
P(x\lra y\mbox{ in }Q_1,Q_2,\D).
\end{equation}
There are finitely many paths in $Q$ from $x$ to $y$, let these paths be
$\pi_1,\pi_2,\dotsc,\pi_n$. Let 
$\cO_k=\cO_k(Q_1,Q_2,\D)$ be the event that
$\pi_k$ is the earliest such path that is open in $(Q_1,Q_2,\D)$.  Then 
\begin{align}\label{path_prob_eq}
&\hskip-1cm P(x\lra y\mbox{ in }Q_1,Q_2,\D)\\
&=\sum_{k=1}^n P(\cO_k)\nonumber\\
&=\sum_{k=1}^n P\bigl(\D\cap[\ev(Q_1)\cap\ev(Q_2)\cap \pi_k] = \es\bigr)
P(\wtilde\cO_k)\nonumber\\
&=\sum_{k=1}^n 
\exp\bigl\{-4\d\bigl(\ev(Q_1)\cap\ev(Q_2)\cap\pi_k\bigr)\bigr\}
P(\wtilde\cO_k),
\nonumber
\end{align}
where $\wtilde\cO_k = \wtilde\cO_k(Q_1,Q_2,\D)$ is the 
event (that is, subset of $\cF$) that each of  
$\pi_1,\dotsc,\pi_{k-1}$ is rendered non-open in $(Q_1,Q_2,\D)$
through the presence of elements of $\D$ lying in $K\sm\pi_k$.
In the second line of \eqref{path_prob_eq}, we have used the independence
of $\D \cap \pi_k$ and $\D \cap(K\sm \pi_k)$. 

Let $(R_1^k,R_2^k) =f_{\pi_k}(Q_1,Q_2)$.  
Since $R_i^k \equiv Q_i$ on $K \sm \pi_k$, we have that
$\wtilde\cO_k(Q_1,Q_2,\D) = \wtilde\cO_k(R_1^k,R_2^k,\D)$. 
By
\eqref{switched_weights_eq} and~\eqref{path_prob_eq}, the
summand in \eqref{sw_cond_eq} equals
\begin{align*}
& \sum_{k=1}^n \partial Q_1\partial Q_2
\exp\bigl\{-4\d\bigl(\ev(Q_1)\cap\ev(Q_2)\cap\pi_k\bigr)\bigr\}P(\wtilde\cO_k)\\
&\hskip1cm=\sum_{k=1}^n \partial R_1^k\partial R_2^k
\exp\bigl\{-4\d\bigl(\ev(R_1^k)\cap\ev(R_2^k)\cap\pi_k\bigr)\bigr\}
P(\wtilde\cO_k)\\
&\hskip1cm=\sum_{k=1}^n \partial R_1^k
\partial R_2^k \,P(\cO_k(R_1^k,R_2^k,\D)).
\end{align*}

Summing the above over $\Qab$, and remembering that each $f_{\pi_k}$
is
a bijection between $\Qab$ and $\Qabxy$, \eqref{sw_cond_eq} becomes
\begin{align*}
\frac1{2^{|Q|+2r}} \sum_{k=1}^n\, 
& \sum_{(R_1,R_2)\in\Qabxy}\partial R_1\partial R_2\,
P(\cO_k(R_1,R_2,\D))\\
&= \frac1{2^{|Q|+2r}} \sum_{\Qabxy}\partial R_1\partial R_2\,
P(x\lra y\mbox{ in }R_1,R_2,\D).
\end{align*}
By the argument leading to \eqref{sw_cond_eq}, this equals the right
side of \eqref{sw_eq_3}, and the claim is proved
when $F\equiv 1$.  

Consider now the case of general connectivity functions $F$ in \eqref{sw_eq_1}.
In~\eqref{sw_cond_eq}, the factor 
$P(x\lra y\mbox{ in }Q_1,Q_2,\D)$ is replaced by 
$$
P\bigl(F(Q_1,Q_2,\D)\cdot 1\{x\lra y\mbox{ in }Q_1,Q_2,\D\}\bigr),
$$
where $P$ denotes expectation with respect to $\D$.
In the calculation~\eqref{path_prob_eq}, we use the fact that
$$
P(F\cdot 1_{\cO_k})=P(F\mid\cO_k)P(\cO_k)
$$
and we deal with the factor $P(\cO_k)$ as before.  The result follows on noting
that, for each $k$,
$$
P\bigl(F(Q_1,Q_2,\D)\bigmid \cO_k(Q_1,Q_2,\D)\bigr)=
P\bigl(F(R_1^k,R_2^k,\D)\bigmid \cO_k(R_1^k,R_2^k,\D)\bigr).
$$
This holds because: (i) the configurations $(Q_1,Q_2,\D)$ and $(R_1^k,R_2^k,\D)$ are
identical off $\pi_k$, and (ii) in each, all points along $\pi_k$ are
connected. Thus the connectivities are identical in the two configurations.
\end{proof}

\subsection{Applications of switching}\label{sw_appl_sec}

In this section are presented a number of inequalities and identities
proved using the random-parity representation and 
the switching lemma.  With some exceptions (most notably~\eqref{dd_bound_eq}) the
proofs are adaptations of 
the proofs for the discrete Ising model that may be found in \cite{abf,grimmett_RCM}. 

For functions $f,g: K \to \RR$, we write $f\le g$ if $f(x) \le g(x)$ for all $x \in K$.

\begin{lemma}[\gks\ inequality]\label{gks_lem}
Let $A,B\subseteq \ol K$ be finite sets of sources, not necessarily disjoint.  Then
\begin{equation}\label{gks_1_eq}
\el\s_A\er\geq 0,
\end{equation}
and
\begin{equation}\label{gks_2_eq}
\el\s_A;\s_B\er := \el\s_A\s_B\er - \el \s_A\er\el\s_B\er \geq 0.
\end{equation}
\end{lemma}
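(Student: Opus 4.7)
The first inequality \eqref{gks_1_eq} is immediate from the random-parity representation of Theorem \ref{rcr_thm}. By \eqref{def-wt}, $\pd\psi \ge 0$ pointwise: it vanishes on the failed colouring $\#$ and equals $e^{2\d(\ev(\psi))}>0$ on every valid colouring. Hence both numerator and denominator of $\el\s_A\er = E(\pd\psi^A)/E(\pd\psi^\es)$ are non-negative. The denominator is in fact strictly positive, because on the positive-probability event $\{B=\es,\,G=\es\}$ the unique valid colouring is everywhere even and carries weight $e^{2\d(K)}$.

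For \eqref{gks_2_eq}, the plan is to reduce to an inequality on expectations of parity weights and then appeal to the switching lemma. Using $\s_A\s_B = \s_{A\sd B}$ (as $\s_x^2=1$), Theorem \ref{rcr_thm}, and two independent copies $\psi_1,\psi_2$ of the random-parity configuration together with the independent cut process $\D$, and writing $Z := E(\pd\psi^\es)$, one has
\begin{equation*}
Z^2\bigl(\el\s_A\s_B\er - \el\s_A\er\el\s_B\er\bigr) = E(\pd\psi_1^{A\sd B}\pd\psi_2^\es) - E(\pd\psi_1^A\pd\psi_2^B),
\end{equation*}
so it suffices to prove $E(\pd\psi_1^A\pd\psi_2^B) \le E(\pd\psi_1^{A\sd B}\pd\psi_2^\es)$. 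The strategy is to apply Theorem \ref{sl} iteratively, transferring sources from $\psi_2$ into $\psi_1$ one backbone pair at a time. Whenever $\pd\psi_2^B > 0$, the backbone $\xi(\psi_2^B)$ decomposes $B$ (with $\Gh$ adjoined, per Remark \ref{gGremark}, when $|B|$ is odd) into pairs of sources, each joined by an odd lattice-path in $\psi_2^B$ and therefore automatically satisfying $\{x\lra y\}$ in the augmented configuration $(\psi_1^A,\psi_2^B,\D)$. Conditioning on $\xi(\psi_2^B)$ and applying \eqref{sw_eq_1} with $F\equiv 1$ to each such pair in turn transforms the source-sets into $(A\sd B,\es)$ at the cost of a conjunction $\cC$ of these (automatic) connectivity indicators, yielding
\begin{equation*}
E(\pd\psi_1^A\pd\psi_2^B) = E\bigl(\pd\psi_1^{A\sd B}\pd\psi_2^\es\cdot 1_\cC\bigr) \le E\bigl(\pd\psi_1^{A\sd B}\pd\psi_2^\es\bigr),
\end{equation*}
on dropping $1_\cC \le 1$.

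The main obstacle is the configuration-dependence of the backbone pairing: \eqref{sw_eq_1} requires a deterministic pair $(x,y)$, but the pair the argument wishes to switch is whatever the backbone of $\psi_2^B$ happens to produce. This can be handled either by partitioning $\{\pd\psi_2^B > 0\}$ according to the finitely many possible backbone pairings on $B\cup\{\Gh\}$ and applying \eqref{sw_eq_1} on each piece, or by running the iteration one configuration at a time using Lemma \ref{backb2} to factorize the weight along each switched path. One must also check that after each switching step the remaining backbone pairs of the updated $\psi_2$ remain connected in the new combined configuration; this holds because switching modifies the configurations only along the switched path and preserves all other open-path connectivities.
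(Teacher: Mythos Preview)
The paper does not actually prove this lemma: it states that the {\gks} inequalities ``may be shown using conventional inequalities of spin-correlation-type'' or ``more easily using the {\fkg}-inequality for the associated \rc\ model'', and then omits the argument. So you are attempting a genuinely different route from the one the authors recommend, and doing more than they do.

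Your argument for \eqref{gks_1_eq} via the random-parity representation is clean and correct.

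Your argument for \eqref{gks_2_eq} has a real gap in the iteration step. Theorem~\ref{sl} requires the pair $(x,y)$ to be deterministic and the auxiliary factor $F$ to be a \emph{connectivity function}, i.e.\ measurable with respect to the open-path equivalence relation on $K^\Gh$ in the triple $(\psi_1,\psi_2,\D)$. The event ``the backbone of $\psi_2^B$ induces the pairing $P$'' is \emph{not} a connectivity function: it depends on the geometry and ordering of the odd intervals in $\psi_2^B$ alone, not merely on which points of $K^\Gh$ are joined by open paths in the combined configuration. Consequently you cannot insert $1\{\text{backbone pairing}=P\}$ as the $F$ in \eqref{sw_eq_1}, and the partitioning you propose does not interact correctly with the switching map. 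Your second suggestion (working configuration-by-configuration via Lemma~\ref{backb2}) runs into the same obstacle, since the measure-preservation in Theorem~\ref{sl} is established only at the level of the expectation, after integrating out~$\D$.

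There is a straightforward repair that stays within your framework. Instead of partitioning on the backbone pairing, partition on the \emph{open-connectivity partition} that $(\psi_1^A,\psi_2^B,\D)$ induces on the finite set $B$ (together with $\Gh$ when $|B|$ is odd). This \emph{is} a connectivity function. On $\{\pd\psi_2^B>0\}$ every class of this partition has even cardinality, since each source is joined to its backbone partner by an odd path of $\psi_2^B$. On the event $F_\pi$ that the partition equals a fixed $\pi$ with all-even blocks, choose any deterministic perfect matching of each block; every matched pair $(x,y)$ satisfies $\{x\lra y\}$ by definition of $F_\pi$, and $F_\pi$ is preserved under each application of \eqref{sw_eq_1}. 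Iterating over all matched pairs transfers the full source set $B$ to $\psi_1$, giving
\[
E\bigl(\pd\psi_1^A\pd\psi_2^B\cdot F_\pi\bigr)=E\bigl(\pd\psi_1^{A\sd B}\pd\psi_2^\es\cdot F_\pi\bigr),
\]
and summing over even $\pi$ yields $E(\pd\psi_1^A\pd\psi_2^B)\le E(\pd\psi_1^{A\sd B}\pd\psi_2^\es)$ as you wanted. Alternatively, one may simply follow the paper's hint and invoke the {\fkg} inequality for the continuum \rc\ representation, which is shorter.
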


\begin{lemma}\label{cor_mon_lem}
Let $A\subseteq \ol K$ be a finite set of sources.  Then $\el\s_A\er$ is increasing in
$\l$ and $\g$ and decreasing in $\d$.  Moreover, if $R\subseteq K$ is measurable,
\begin{equation}\label{cor_mon_eq}
\el\s_A\er_{K\setminus R}\leq \el\s_A\er_K.
\end{equation}
\end{lemma}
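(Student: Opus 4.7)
The plan is to reduce the lemma to the discrete GKS inequality (Lemma~\ref{gks_lem}) via the same conditional-on-deaths decomposition used in the proof of Theorem~\ref{rcr_thm}: given the Poisson death process $\oD\sim\mu_\d$, the space--time Ising measure reduces to a classical discrete Ising model on the finite graph $G(\oD)$ with ferromagnetic pair couplings $J_{\bar e}$ and external fields $h_{\bar v}$ as in \eqref{o10}.

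For the monotonicity in $\l$ and $\g$, the key observation is that these two parameters enter \eqref{o12} only through the Boltzmann weight, not through the reference measure $\mu_\d$. A direct differentiation of the ratio representation of $\el\s_A\er$ therefore yields the identities
\[
\frac{\partial\el\s_A\er}{\partial\l(e,t)} = \el\s_A;\s_{(u,t)}\s_{(v,t)}\er, \qquad
\frac{\partial\el\s_A\er}{\partial\g(v,t)} = \el\s_A;\s_{(v,t)}\er,
\]
both of which are nonnegative by Lemma~\ref{gks_lem}. For the restriction inequality \eqref{cor_mon_eq}, I take $A\subseteq K\sm R$ and compare $\el\s_A\er_K$ with the modified measure on $K$ in which $\l\equiv 0$ is imposed on every edge of $F$ incident to $R$ and $\g\equiv 0$ on $R$. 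Under this modification the Boltzmann weight in \eqref{o12} no longer involves the spins on $R$, so the $R$-part factors out of both numerator and denominator and the modified expectation collapses to $\el\s_A\er_{K\sm R}$. Restoring $\l,\g$ to their original values on $R$ and its incident edges can only increase $\el\s_A\er$ by the first step, yielding $\el\s_A\er_{K\sm R}\le\el\s_A\er_K$.

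The monotonicity in $\d$ is the main obstacle. My strategy is Poisson superposition: if $\d\le\d'$, couple $\oD\sim\mu_\d$ with an independent $\oD''\sim\mu_{\d'-\d}$ so that $\oD'=\oD\cup\oD''\sim\mu_{\d'}$. For each realization, $G(\oD')$ is obtained from $G(\oD)$ by splitting vertex-intervals at the extra deaths, replacing each affected $\bar v$ by a pair $\bar v_1,\bar v_2$ whose fields and couplings redistribute additively. Equivalently, the classical Ising model on $G(\oD)$ is recovered from that on $G(\oD')$ by imposing $\s_{\bar v_1}=\s_{\bar v_2}$ for each split, i.e.\ by sending an extra ferromagnetic coupling $J_{\bar v_1\bar v_2}$ to $+\infty$; classical GKS-II applied to finite approximations yields the pointwise comparison $\el\s_A\er_{G(\oD')}\le\el\s_A\er_{G(\oD)}$. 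The subtle step I expect to wrestle with is transferring this pointwise comparison of conditional expectations to the unconditional one, since the outer expectation over $\oD$ is weighted by the conditional partition function and this weight itself shifts with $\d$. To handle this I would work instead in the random-parity representation $\el\s_A\er=E(\pd\psi^A)/E(\pd\psi^\es)$, in which $\d$ enters only through $\pd\psi=e^{2\d(\ev(\psi))}$: a direct computation reduces $\pd_{\d(y)}\el\s_A\er\le 0$ to showing that any point $y\in K$ is at least as likely to lie in the odd portion of the colouring under the $A$-weighted measure as under the sourceless measure, and this one-point parity comparison I would establish with the switching lemma applied to two independent copies $\psi_1^A$ and $\psi_2^\es$ together with an appropriate connectivity function encoding the relative parity at $y$.
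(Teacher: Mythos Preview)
The paper omits this proof entirely, noting only that it follows either from ``conventional inequalities of spin-correlation-type'' or, more easily, from the \fkg\ inequality for the associated continuum \rc\ model. Your approach to the $\l$ and $\g$ monotonicity via differentiation and \gks\ is correct and matches the first of these suggestions.

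There is, however, a genuine gap in your argument for the restriction inequality \eqref{cor_mon_eq}. Setting $\l\equiv 0$ on edges incident to $R$ and $\g\equiv 0$ on $R$ does \emph{not} cause the $R$-part to factor out of \eqref{o12}. The death process on $R$ is still present, and it mediates residual coupling between pieces of $K\sm R$ that abut $R$ on the same line $v\times K_v$: if $\oD\cap R$ happens to miss some component of $R$, the spins on the two adjacent pieces of $K\sm R$ are forced to agree. Concretely, if $K_v=[0,1]$ and $R\cap(v\times K_v)=v\times[\tfrac13,\tfrac23]$, your modified measure still ties $\s_{(v,1/3-)}$ to $\s_{(v,2/3+)}$ on the event $\{\oD\cap(v\times[\tfrac13,\tfrac23])=\es\}$, whereas in the genuine $K\sm R$ system these are independent boundary spins. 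To fully decouple you would also need to send $\d\to\oo$ on $R$, which presupposes the very $\d$-monotonicity you have not yet established.

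For the $\d$-monotonicity itself, you correctly identify the obstacle in the superposition argument, and your plan~(b) via the random-parity representation is on the right track: the calculation you describe is exactly the one carried out in the proof of \eqref{dd_bound_eq} for the special case $A=\{0,\Gh\}$. But note that the paper's argument there does \emph{not} encode the parity at $y$ as a connectivity function; it instead exploits the freedom to choose the ordering of paths in the switching lemma so that paths avoiding $y$ come first. Generalising this to arbitrary $A$ is plausible but is real work you have not done. The cleaner route, and the one the paper recommends, is the \rc\ representation: $\el\s_A\er$ is an increasing functional of the \rc\ configuration, the \rc\ measures are stochastically increasing in $\l,\g$ and decreasing in $\d$ by \fkg, and removing $R$ only deletes connections. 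This handles all three monotonicities and \eqref{cor_mon_eq} uniformly.
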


We interpret $\el\s_A\er_{K\sm R}$ as $0$ when $A$ is not a source-set for $K\sm R$.

Lemmas~\ref{gks_lem} and~\ref{cor_mon_lem} may be shown using conventional
inequalities of spin-correlation-type.
They may be proved more easily
using the \fkg-inequality for the associated \rc\ model (using, for
example, the methods of \cite{grimmett_gks}).  We omit these proofs, full details
of which may be found in~\cite{bjo_phd}.

For $R\subseteq K$ a finite union of intervals, let
\begin{equation*}
\wtilde R:=\{(uv,t)\in F: \mbox{either } (u,t)\in R\mbox{ or }(v,t)\in R\mbox{ or both}\}.
\end{equation*}
Recall that $W=W(K)=\{v\in V: K_v=\SS\}$, and $N=N(K)$ is the total number of 
(maximal) intervals constituting $K$.

\begin{lemma}\label{rw_mon_lem}
Let $R\subseteq K$ be a finite union of intervals, 
and let $\nu\in\Xi$ be such that $\nu\cap
R=\es$.  If $A \subseteq \ol{K\sm R}$ is a finite source-set for both $K$ and
$K\sm R$,
and  $A\sim\nu$, then
\begin{equation}
\wt^A(\nu)\leq 2^{r(\nu)-r'(\nu)}\wt^A_{K\setminus R}(\nu),
\end{equation}
where 
\begin{align*}
r(\nu) &= r(\nu,K) := |\{w\in W: \nu\cap (w\times K_w) \ne \es\}|,\\
r'(\nu) &= r(\nu,K\sm R).
\end{align*}
\end{lemma}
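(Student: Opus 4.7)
The plan is to expand the weights via definition~\eqref{ihp16} and reduce the claim to a comparison of the four partition functions $Z_K, Z_{K\sm\nu}, Z_{K\sm R}, Z_{K\sm(R\cup\nu)}$, then convert these to space--time Ising partition functions via Lemma~\ref{Z'}. Since $A\subseteq\ol{K\sm R}$ and $\nu\cap R=\es$, the condition $A\sim\nu$ holds identically in $K$ and in $K\sm R$, so $\wt^A_K(\nu)=Z_{K\sm\nu}/Z_K$ and $\wt^A_{K\sm R}(\nu)=Z_{K\sm(R\cup\nu)}/Z_{K\sm R}$, reducing the claim to
$$Z_{K\sm\nu}\,Z_{K\sm R}\leq 2^{r(\nu)-r'(\nu)}\,Z_K\,Z_{K\sm(R\cup\nu)}.$$

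Applying Lemma~\ref{Z'} in the form $Z_X=2^{-N(X)}e^{-A(X)}Z'_X$ with $A(X):=\lambda(F(X))+\gamma(X)-\delta(X)$, this ratio factors as $2^{\Delta N}\,e^{\Delta A}\,\rho$, where $\Delta N$ and $\Delta A$ are mixed second differences of $N$ and $A$, and $\rho:=Z'_{K\sm\nu}Z'_{K\sm R}/(Z'_K Z'_{K\sm(R\cup\nu)})$. I would compute $\Delta N$ vertex by vertex. For $v\notin W(K)$, the interval-splittings from removing $\nu$ commute with those from removing $R$ (using $\nu\cap R=\es$ and Remark~\ref{rem-as} to ensure that the various boundaries are a.s.\ distinct), so the contribution vanishes. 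For $v\in W(K)$, writing $\nu\cap(v\times\SS)$ as a union of $k$ open arcs and $R\cap(v\times\SS)$ as a union of $l$ open arcs, the elementary counts $m(v;K)=1$, $m(v;K\sm\nu)=k\vee 1$, $m(v;K\sm R)=l\vee 1$, $m(v;K\sm(R\cup\nu))=(k+l)\vee 1$ yield a net contribution of $+1$ precisely when $k\geq 1$ and $l\geq 1$, which are exactly the $v$ counted by $r(\nu)$ but not by $r'(\nu)$. Summing gives $\Delta N=r(\nu)-r'(\nu)$. For $\Delta A$, the $\gamma$ and $\delta$ contributions cancel (since $\nu\cap R=\es$), and one is left with $\Delta A=\lambda_{\nu,R}\geq 0$, the $\lambda$-mass of bridge-positions with one endpoint in $\nu_K$ and the other in $R$.

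With these substitutions the claim reduces to the Ising inequality $\rho\leq e^{-\lambda_{\nu,R}}$, a log-supermodularity statement for the space--time Ising partition function with a correction from the $\nu$--$R$ cross-bridges. The cleanest route passes through the continuum random-cluster representation mentioned in Remark~\ref{rc_unique}: $Z'_X$ is a Poisson expectation of $2^{k(\omega)}$ over configurations $\omega$ of bridges and ghost-bonds on $X$, where $k(\omega)$ counts clusters (with the ghost-site identified). The factor $e^{-\lambda_{\nu,R}}$ exactly compensates the Poisson density of the cross-bridges present in $F(K)$ but absent in the other three edge sets, while the residual cluster-count comparison follows from the fact that each cross-bridge can decrease $k(\omega)$ by at most one, combined with subadditivity of $2^{k(\omega)}$ under domain restriction. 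The main obstacle is this final random-cluster step, which requires a careful coupling and a subadditivity estimate for cluster counts across the four domains; the $\Delta N$ computation is technically delicate due to the periodic boundaries at $v\in W$ but is conceptually routine once the vertex case analysis is in place.
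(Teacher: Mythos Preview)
Your reduction to the four partition functions and your computation of $\Delta N = r(\nu)-r'(\nu)$ are correct and match the paper's approach; the vertex-by-vertex case analysis you outline is exactly how the paper obtains $U = r(\nu)-r'(\nu)$ at the end of its proof. Your identification of the cross-bridge term $\lambda_{\nu,R}$ in $\Delta A$ is also correct and is a careful observation.

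Where you diverge from the paper is in the proof of the key inequality on the $Z'$ ratios (the paper's inequality \eqref{r05}). You propose to pass to the random-cluster representation and argue via a subadditivity estimate for $2^{k(\omega)}$, but, as you yourself acknowledge, this final step is the main obstacle and is not carried out. The sketched cluster-count argument is not obviously correct: ``each cross-bridge can decrease $k(\omega)$ by at most one'' is true, but the needed log-supermodularity of the random-cluster partition function across the four domains does not follow from this alone, and a careful coupling would still have to be constructed.

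The paper's route is entirely different and avoids the random-cluster representation. It writes $Z'_K/Z'_{K\sm\nu}$ as an Ising expectation $\langle Z'_\nu(\sigma')\rangle_{K\sm\nu}$, where $Z'_\nu(\sigma')$ is the partial partition function on $\nu$ with boundary condition $\sigma'$ inherited from $K\sm\nu$. After expanding the exponentials in a power series, every term has the form of a spin correlation $\langle\sigma'_S\rangle_{K\sm\nu}$ with a nonnegative coefficient, and the domain-monotonicity inequality \eqref{cor_mon_eq} of Lemma~\ref{cor_mon_lem} gives $\langle\sigma'_S\rangle_{K\sm\nu}\geq\langle\sigma'_S\rangle_{K\sm(R\cup\nu)}$ for each $S$. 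Terms with $S$ meeting $R$ are set to zero by convention, and these are exactly the terms involving at least one cross-bridge; reassembling the surviving series yields $Z'_{K\sm R}/Z'_{K\sm(R\cup\nu)}$, so the $\lambda_{\nu,R}$ contribution is absorbed automatically. The key tool you are missing is thus the \gks-type monotonicity of Ising correlations in the domain, not a random-cluster coupling.
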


\begin{proof}
By \eqref{ihp16} and  Lemma~\ref{Z'},
\begin{align}\label{r0}
\wt^A(\nu)&=\frac{Z_{K\setminus\nu}}{Z_K}\\
&=2^{N(K)-N(K\sm\nu)}
e^{\l(\wtilde\nu)+\g(\nu)-\d(\nu)}\frac{Z'_{K\sm\nu}}{Z'_K}.
\nonumber
\end{align}
We claim that
\begin{equation}\label{r05}
\frac{Z'_{K\sm\nu}}{Z'_K}\leq\frac{Z'_{K\sm(R\cup\nu)}}{Z'_{K\sm R}},
\end{equation}
and the proof of this follows.

Recall the formula~\eqref{o12} for $Z'_K$ in terms of an integral over the
Poisson process $D$.  The set $D$ is the union of
independent Poisson processes $D'$ and $D''$, restricted respectively to $K\sm\nu$ and $\nu$.
We write $P'$ (\resp, $P''$) for the probability measure (and, on occasion,
expectation operator) governing $D'$ (\resp, $D''$).
Let $\S(D')$ denote the set of spin configurations on $K\sm\nu$
that are permitted by $D'$.  By \eqref{o12},
\begin{equation}
\label{Z'_split_eq}
Z'_K= P'\left( \sum_{\s'\in\S(D')}Z_\nu'(\s') 
\exp\left\{\int_{F\sm\wtilde\nu}\l(e)\s'_e\,de+\int_{K\sm\nu}\g(x)\s'_x\,dx\right\}\right),
\end{equation}
where
$$
Z_\nu'(\s') = P''\left(\sum_{\s''\in\wtilde\S(D'')}
\exp\left\{\int_{\wtilde\nu}\l(e)\s_e\,de+\int_{\nu}\g(x)\s_x\,dx\right\}\cdot 1_C(\s')\right)
$$
is the partition function on $\nu$ with boundary condition $\s'$,
and where $\s$, $\wtilde\S(D'')$, and $C=C(D'')$ are given as follows.

The set $D''$ divides $\nu$, in the usual way, into a collection 
$V_\nu(D'')$ of intervals.  From the set of endpoints of
such intervals, we distinguish the subset $\cE$ that: (i) lie in $K$, and (ii) 
are endpoints of some interval of $K\sm \nu$. For $x\in\cE$,
let $\s'_x =\lim_{y\to x} \s'_y$, where the limit is taken over $y\in K\sm\nu$.
Let $\wtilde V_\nu(D'')$ be the subset of $V_\nu(D'')$ containing
those intervals with no endpoint in $\cE$,
and let $\wtilde\S(D'') =\{-1,+1\}^{\wtilde V_\nu(D'')}$.

Let $\s'\in \S(D')$, and 
let $\cI$ be the set of maximal sub-intervals $I$ of $\nu$
having both endpoints in $\cE$, and such that $I \cap D''=\es$.
Let $C=C(D'')$ be the set of $\s'\in\S(D')$ such that, for all $I\in\cI$,
the endpoints of $I$ have equal spins under $\s'$.
Note that
\begin{equation}
\label{o30}
1_C(\s') = \prod_{I\in\cI} \tfrac12(\s'_{x(I)}\s'_{y(I)} + 1),
\end{equation}
where $x(I)$, $y(I)$ denote the endpoints of $I$.

Let $\s''\in \wtilde\S(D'')$. The conjunction $\s$ of $\s'$ and $\s''$ is defined
except on sub-intervals of $\nu$ lying in $V_\nu(D'')\sm \wtilde V_\nu(D'')$. 
On any such sub-interval with exactly one endpoint $x$ in $\cE$, we set
$\s\equiv \s'_x$. On the event $C$, an interval of $\nu$
with both endpoints $x(I)$, $y(I)$ in $\cE$
receives the spin $\s\equiv \s_{x(I)}' = \s_{y(I)}'$.
Thus, $\s\in \S(D'\cup D'')$ is well defined for $\s'\in C$.

By \eqref{Z'_split_eq}, 
\begin{equation*}
\frac{Z'_K}{Z'_{K\sm\nu}}=\el Z'_\nu(\s')\er_{K\sm\nu}.
\end{equation*}
Taking the expectation $\el\cdot\er_{K\sm\nu}$ inside the integral, the last expression becomes
\begin{equation*}
P''\left(\sum_{\s''\in\wtilde\S(D'')}\left\el
\exp\left\{\int_{\wtilde\nu}\l(e)\s_e\,de\right\}
\exp\left\{\int_{\nu}\g(x)\s_x\,dx\right\}
\cdot 1_C(\s')\right\er_{K\sm\nu}\right)
\end{equation*}
The inner expectation may be expressed as a sum over $k,l\geq 0$ 
(with non-negative coefficients) of iterated integrals of the form
\begin{equation}
\frac1{k!}\,\frac1{l!}\,\iint\limits_{\wtilde\nu^k\times\nu^l}\l(\mathbf e)\g(\mathbf x)
\el\s_{e_1}\cdots\s_{e_k}\s_{x_1}\cdots\s_{x_l}\cdot 1_C(\s')\er_{K\sm\nu}
\,d\mathbf e \,d\mathbf x,
\label{o32}
\end{equation}
where we have written $\mathbf e=(e_1,\dotsc,e_k)$, and $\l(\mathbf e)$
for $\l(e_1)\dotsb\l(e_k)$ (and similarly for $\mathbf x$ and $\g(\mathbf x)$). 
We may write
\begin{equation*}
\el\s_{e_1}\cdots\s_{e_k}\s_{x_1}\cdots\s_{x_l}\cdot 1_C\er_{K\sm\nu}
=\el\s'_S\s''_T\cdot 1_C\er_{K\sm\nu}=\s''_T\el\s_S'\cdot 1_C\er_{K\sm\nu},
\end{equation*}
for sets $S\subseteq \ol{K\sm\nu}$, $T\subseteq \nu$
 determined by $e_1,\dotsc,e_k,x_1,\dotsc,x_l$ and $D''$ only.
We now bring the sum over $\s''$ inside the integral of \eqref{o32}.  For $T\neq\es$,
\begin{equation*}
\sum_{\s''\in\wtilde\S(D'')}\s''_T\el\s_S'\cdot 1_C\er_{K\sm\nu}=0,
\end{equation*}
so any non-zero term is of the form
\begin{equation}\label{r1}
\el\s_S'\cdot 1_C\er_{K\sm\nu}.
\end{equation}

By \eqref{o30}, \eqref{r1} 
may be expressed in the form 
\begin{equation}
\sum_{i=1}^s2^{-a_i}\el\s'_{S_i}\er_{K\sm\nu}
\label{o35}
\end{equation}
for appropriate sets $S_i$ and integers $a_i$.  By Lemma~\ref{cor_mon_lem},
\begin{equation*}
\el\s'_{S_i}\er_{K\sm\nu}\geq\el\s'_{S_i}\er_{K\sm(R\cup\nu)}.
\end{equation*}
On working backwards, we obtain \eqref{r05}.

By \eqref{r0}--\eqref{r05},
\begin{equation*}
\wt^A(\nu)\leq 2^U\wt^A_{K\setminus R}(\nu),
\end{equation*}
where 
\begin{align*}
U&=\bigl[N(K)-N(K\sm\nu)\bigr]- \bigl[N(K\sm R)-N(K\sm (R\cup\nu) )\bigr]\\
&=r(\nu)-r'(\nu)
\end{align*}
as required.  
\end{proof}

For distinct $x,y,z\in K^\Gh$, let
\begin{align*}
\el\s_x;\s_y;\s_z\er &:=
\el\s_{xyz}\er -\el\s_{x}\er\el\s_{yz}\er\\
&\hskip1.5cm -\el\s_{y}\er\el\s_{xz}\er
-\el\s_{z}\er\el\s_{xy}\er
+2\el\s_{x}\er\el\s_{y}\er\el\s_{z}\er.
\end{align*}

\begin{lemma}[\ghs\ inequality]\label{ghs_lem}
For distinct $x,y,z\in K^\Gh$, 
\begin{equation}\label{ghs_1_eq}
\el\s_x;\s_y;\s_z\er\leq 0.
\end{equation}
Moreover, $\el\s_x\er$ is concave in $\g$ in the sense that, for
bounded, measurable functions $\g_1,\g_2: K\to\RRp$ satisfying
$\g_1\le\g_2$, and
$\theta\in[0,1]$, 
\begin{equation}
\theta\el\s_x\er_{\g_1}+(1-\theta)\el\s_x\er_{\g_2}\leq
\el\s_x\er_{\theta\g_1+(1-\theta)\g_2}.
\end{equation}
\end{lemma}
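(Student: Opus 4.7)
The plan is to adapt the proof of the \ghs\ inequality for the discrete Ising model (as in~\cite{abf} or~\cite[Thm.~9.26]{grimmett_RCM}) to the random-parity representation. First I would reduce~\eqref{ghs_1_eq} to the equivalent form
\begin{equation*}
\el\s_x;\s_y\s_z\er \leq \el\s_y\er\el\s_x;\s_z\er + \el\s_z\er\el\s_x;\s_y\er,
\end{equation*}
which, by direct expansion, differs from \eqref{ghs_1_eq} only by algebra. By Theorem~\ref{rcr_thm} and the independence of two copies $\psi_1,\psi_2$ of the colouring, together with the cut process $\D$ of intensity $4\d$ from Section~\ref{ssec-switching}, each truncated correlation can be written as a joint expectation of products $\pd\psi_1^{A_1}\pd\psi_2^{A_2}$ after multiplying through by $E(\pd\psi_1^\es\pd\psi_2^\es)$.

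Next I would apply the switching lemma (Theorem~\ref{sl}) to the difference
\begin{equation*}
E(\pd\psi_1^{xyz}\pd\psi_2^\es)-E(\pd\psi_1^x\pd\psi_2^{yz})
\end{equation*}
with switch pair $(y,z)$. The $\{y\lra z\}$-part of the first term exactly cancels the $\{y\lra z\}$-part of the second, leaving
\begin{equation*}
\el\s_x;\s_y\s_z\er \, E(\pd\psi_1^\es\pd\psi_2^\es) = E\bigl(\pd\psi_1^{xyz}\pd\psi_2^\es\cdot 1\{y\nlra z\}\bigr)-E\bigl(\pd\psi_1^x\pd\psi_2^{yz}\cdot 1\{y\nlra z\}\bigr).
\end{equation*}
A further switching with pairs $(x,y)$ and then $(x,z)$ rewrites the right-hand side of the desired inequality in an analogous joint form. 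On the event $\{y\nlra z\}$, the comparison reduces to the pointwise domination
\begin{equation*}
1\{x\lra\{y,z\}\} \leq 1\{x\lra y\}+1\{x\lra z\}
\end{equation*}
of indicators in $(\psi_1,\psi_2,\D)$, which is immediate.

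For the concavity statement, functional differentiation of~\eqref{o12} gives
\begin{equation*}
\frac{\d\el\s_x\er}{\d\g(w)}=\el\s_x;\s_w\er,\qq
\frac{\d^2\el\s_x\er}{\d\g(w)\d\g(w')}=\el\s_x;\s_w;\s_{w'}\er,
\end{equation*}
so~\eqref{ghs_1_eq} forces $t\mapsto\el\s_x\er_{(1-t)\g_1+t\g_2}$ to be concave on $[0,1]$, yielding the stated inequality at $t=1-\theta$.

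The main obstacle is the careful bookkeeping in the two-copy switching argument: tracking how source sets transform under successive applications of Theorem~\ref{sl}, how the various connectivity indicators combine, and verifying that the signed sum of indicators is everywhere non-positive. The argument closely parallels the discrete case of~\cite{abf}; the continuous nature of $\L$ causes no fundamental difficulty because Theorem~\ref{sl} is already established in this setting, Remark~\ref{rem-as} handles the almost-sure absence of coincidences among bridges, deaths, and sources, and the randomization over valid colourings at periodic sites $w\in W$ has already been absorbed into the switching lemma.
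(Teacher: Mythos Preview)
Your sketch has a genuine gap. The right side of your reformulated inequality contains terms such as $\el\s_y\er\el\s_x;\s_z\er$, which after expansion involve triple products $\el\s_x\er\el\s_y\er\el\s_z\er$. A two-copy framework, multiplied through by $E(\pd\psi_1^\es\pd\psi_2^\es)=Z^2$, does not absorb these; you need three copies, as the paper in fact uses (see inequality~\eqref{ghs_2_eq} with $\psi_1,\psi_2,\psi_3$). Your phrase ``a further switching with pairs $(x,y)$ and then $(x,z)$ rewrites the right-hand side in an analogous joint form'' does not explain how the extraneous factor $\el\s_y\er$ is brought inside a two-copy expectation, and I do not see a way to do it.

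More seriously, your claimed endgame---that the comparison reduces to the pointwise bound $1\{x\lra\{y,z\}\}\le 1\{x\lra y\}+1\{x\lra z\}$---is not what makes the inequality go through. That indicator bound is trivially true (indeed, an identity on $\{y\nlra z\}$) and produces no inequality between the two sides. The actual mechanism, which the paper highlights as the point of including this proof, is \emph{conditioning on the cluster} $C_z$ of $z$ in $(\psi_2,\psi_3,\D)$: on $\{w\nlra z\}$ one factorizes the colouring across $C_z$ and its complement, and then invokes the domain monotonicity $\el\s_{wx}\er_{K\sm C_z}\le\el\s_{wx}\er_K$ of Lemma~\ref{cor_mon_lem}. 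This monotonicity step is where the inequality (as opposed to an identity) enters, and it is absent from your plan.

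The concavity paragraph is fine and matches the paper's argument.
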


\begin{proof}
The proof of this follows very closely the corresponding proof for the
classical Ising model~\cite{ghs}. We include it here because it allows us
to develop the technique of `conditioning on clusters', which will be 
useful later.

We prove~\eqref{ghs_1_eq} via the following more general result.
Let $(B_i,G_i)$, $i=1,2,3$, be independent sets of bridges/ghost-bonds,
and write $\psi_i$, $i=1,2,3$, for corresponding colourings (with sources
to be specified through their superscripts).
We claim that, for any four points $w, x, y, z\in K^\Gh$,
\begin{equation}
\label{ghs_2_eq}
\begin{split}
&E\bigl(\partial\psi_1^\es\partial\psi_2^\es \partial\psi_3^{wxyz}\bigr)-
E\bigl(\partial\psi_1^\es\partial\psi_2^{wz}\partial\psi_3^{xy}\bigr)
\\
&\quad\leq E(\partial\psi_1^\es \partial\psi_2^{wx} \partial\psi_3^{yz})
+E(\partial\psi_1^\es\partial\psi_2^{wy}\partial\psi_3^{xz})
-2E(\partial\psi_1^{wx}\partial\psi_2^{wy}\partial\psi_3^{wz}).
\end{split}
\end{equation}
Inequality \eqref{ghs_1_eq} follows by Theorem \ref{rcr_thm} on letting
$w=\Gh$. 

The left side of \eqref{ghs_2_eq} is
\begin{align*}
&E(\partial\psi_1^\es)\bigl[
E(\partial\psi_2^\es\partial\psi_3^{wxyz})-
E(\partial\psi_2^{wz}\partial\psi_3^{xy})\bigr]\\
&\hskip3cm =
Z\, E\bigl(\partial\psi_2^\es\partial\psi_3^{wxyz}
\cdot1\{w\nlra z\}\bigr),
\end{align*}
by the switching lemma \ref{sl}.  When $\partial\psi_3^{wxyz}$  is
non-zero, parity constraints imply that at least one of  
$\{w\lra x\}\cap \{y\lra z\}$ and $\{w\lra y\}\cap 
\{x\lra z\}$ occurs, but that, in the presence of the indicator function 
they cannot both occur.  Therefore,
\begin{align}
\label{ghs_pf_1_eq}
&E(\partial\psi_2^\es\partial\psi_3^{wxyz}
\cdot1\{w\nlra z\})\\
&\hskip1cm
=E\bigl(\partial\psi_2^\es\partial\psi_3^{wxyz}
\cdot1\{w\nlra z\}
\cdot1\{w\lra x\}\bigr)\nonumber\\
&\hskip3cm +
E\bigl(\partial\psi_2^\es\partial\psi_3^{wxyz}
\cdot1\{w\nlra z\}
\cdot1\{w\lra y\}\bigr).
\nonumber
\end{align}
Consider the first term. By the switching lemma,
\begin{equation}
E\bigl(\partial\psi_2^\es\partial\psi_3^{wxyz}
\cdot1\{w\nlra z\}
\cdot1\{w\lra x\}\bigr)=
E\bigl(\partial\psi_2^{wx}\partial\psi_3^{yz}
\cdot1\{w\nlra z\}\bigr).
\label{o17}
\end{equation}

We next `condition on a cluster'.  Let 
$C_z=C_z(\psi_2^{wx},\psi_3^{yz},\D)$ be the set of all points of
$K$ that are connected by open paths to $z$.  Conditional on 
$C_z$, define new independent colourings
$\mu_2^\es$, $\mu_3^{yz}$ on the domain $M=C_z$.  Similarly,
let $\nu_2^{wx}$, $\nu_3^\es$ be independent colourings on the domain
$N=K\setminus C_z$, that are also independent of the $\mu_i$.  It is
not hard to see that, if $w\nlra z$ in $(\psi_2^{wx},\psi_3^{yz},\D)$, then, conditional on $C_z$,
the law of $\psi_2^{wx}$ equals 
that of the superposition of $\mu_2^\es$ and $\nu_2^{wx}$;
similarly the conditional law of $\psi_3^{yz}$ is the same as that of the
superposition of $\mu_3^{yz}$ and $\nu_3^\es$. 
Therefore, almost surely on the event $\{w \nlra z\}$,
\begin{align}
E(\partial\psi_2^{wx}\partial\psi_3^{yz}\mid C_z)&=
E'(\pd\mu_2^\es)E'(\pd\nu_2^{wx})E'(\pd\mu_3^{yz})E'(\pd\nu_3^\es)\label{o19}\\
&=\el\s_{wx}\er_N E'(\pd\mu_2^\es)E'(\pd\nu_2^\es)
E'(\pd\mu_3^{yz})E'(\pd\nu_3^\es)\nonumber\\
&\leq
\el\s_{wx}\er_KE(\partial\psi_2^\es\partial\psi_3^{yz}\mid C_z),
\nonumber\end{align}
where $E'$ denotes expectation conditional on $C_z$,
and we have used Lemma \ref{cor_mon_lem}. 
Returning to \eqref{ghs_pf_1_eq}--\eqref{o17},
\begin{align*}
&E\bigl(\partial\psi_2^\es\partial\psi_3^{wxyz}
\cdot1\{w\nlra z\}
\cdot1\{w\lra x\}\bigr)\\
&\hskip2cm \leq
\el\s_{wx}\er E(\partial\psi_2^\es\partial\psi_3^{yz}
\cdot1\{w\nlra z\}).
\end{align*}
The other term in \eqref{ghs_pf_1_eq} satisfies the same inequality
with $x$ and $y$ interchanged.
Inequality \eqref{ghs_2_eq} follows on applying the switching lemma to the right sides
of these two last inequalities, and adding them.

The concavity of $\el\s_x\er$ follows from the fact that, if 
\begin{equation}
T=\sum_{k=1}^n a_k1_{A_k} 
\end{equation}
is a step function on $K$ with $a_k\ge 0$ for all $k$,
and $\g(\cdot)=\g_1(\cdot)+\a T(\cdot)$, then 
\begin{equation}
\frac{\partial^2}{\partial \a^2}\el\s_x\er
=\sum_{k,l=1}^na_ka_l\iint_{A_k\times A_l} dy\,dz\, \el\s_x;\s_y;\s_z\er\leq 0.
\end{equation}
Thus, the claim holds whenever $\g_2-\g_1$ is a step function.  The general
claim follows by approximating $\g_2-\g_1$ by step functions, and applyng the
dominated convergence theorem. 
\end{proof}

For the next lemma we assume for simplicity that $\g\equiv 0$ (although
similar results can easily be proved for $\g\not\equiv 0$).  We let $\bar\d\in\RR$ be
an upper bound for $\d$, thus $\d(x)\leq\bar\d<\oo$ for all $x\in K$.
Let $a,b\in K$ be two distinct points.  A closed set $T\subseteq K$
is said to \emph{separate} $a$ from 
$b$ if every lattice path from $a$ to $b$ (whatever the set of bridges) intersects $T$.  
Moreover, if $\eps>0$
and $T$ separates $a$ from $b$, we say that $T$ is an \emph{$\eps$-fat
separating set} if every point in $T$ lies in a closed sub-interval of $T$ of length at
least $\eps$.

\begin{lemma}[Simon inequality]\label{simon_lem}
Let $\g\equiv 0$.
If $\eps>0$ and $T$ is an $\eps$-fat separating set for $a,b\in K$,
\begin{equation}
\el\s_a\s_b\er\leq\frac{1}{\eps}\exp(8\eps\bar\d)
\int_T\el\s_a\s_x\er\el\s_x\s_b\er\, dx.
\end{equation}
\end{lemma}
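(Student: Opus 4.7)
The plan is to express both sides of the asserted inequality via the random-parity representation and to compare them through a single application of the switching lemma, exploiting the fact that, under $\g\equiv 0$, any valid colouring with sources $\{a,b\}$ contains an odd backbone path from $a$ to $b$ that automatically survives cuts.

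First, I would handle the integrand. For each $x\in T$, apply the switching lemma (Theorem~\ref{sl}) with $A=\{a,x\}$, $B=\{x,b\}$, swap pair $(a,x)$, and $F\equiv 1$; note that $A\sd\{a,x\}=\es$ and $B\sd\{a,x\}=\{a,b\}$. Since $\g\equiv 0$, any valid $\psi_1^{ax}$ contains an odd backbone from $a$ to $x$ in $\psi_1$, so the event $\{a\lra x\}$ holds deterministically in $(\psi_1^{ax},\psi_2^{xb},\D)$ whenever $\pd\psi_1^{ax}>0$. The switching identity therefore reads
\[
E(\pd\psi_1^{ax}\pd\psi_2^{xb})=E\bigl(\pd\psi_1^\es\pd\psi_2^{ab}\cdot 1\{a\lra x\text{ in }\psi_1^\es,\psi_2^{ab},\D\}\bigr).
\]
Dividing by $Z^2$, applying Theorem~\ref{rcr_thm}, and integrating over $x\in T$ via Fubini yields
\[
\int_T \el\s_a\s_x\er\el\s_x\s_b\er\,dx=\frac{1}{Z^2}E\bigl(\pd\psi_1^\es\pd\psi_2^{ab}\cdot |T\cap C_a|\bigr),
\]
where $C_a:=\{y\in K:a\lra y\text{ in }\psi_1^\es,\psi_2^{ab},\D\}$. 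A direct application of Theorem~\ref{rcr_thm} also gives $Z^2\el\s_a\s_b\er=E(\pd\psi_1^\es\pd\psi_2^{ab})$. Hence it suffices to show that
\[
E_\D\bigl[|T\cap C_a|\bigmid \psi_1,\psi_2\bigr]\geq \eps e^{-8\eps\bar\d}
\]
holds almost surely on $\{\pd\psi_1^\es\pd\psi_2^{ab}>0\}$, where $E_\D$ denotes expectation over the cut process $\D$ alone, keeping the colourings fixed.

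For this final bound, fix a pair $(\psi_1^\es,\psi_2^{ab})$ with positive weight. Because $\g\equiv 0$, the backbone of $\psi_2^{ab}$ is a single odd path $\pi:a\to b$ in $\psi_2$. Being odd in $\psi_2$, no sub-interval of $\pi$ lies in $\ev(\psi_1)\cap\ev(\psi_2)$, so $\pi$ is open in $(\psi_1^\es,\psi_2^{ab},\D)$ for every realization of $\D$, and $\pi\subseteq C_a$ deterministically. Since $T$ separates $a$ from $b$, $\pi\cap T\neq\es$; select, measurably with respect to the total order on $K$, a point $p^*\in\pi\cap T$. By $\eps$-fatness, $p^*$ lies in a closed sub-interval of $T$ of length at least $\eps$, and one may choose a sub-interval $J^*$ on the single line $v\times K_v$ through $p^*$ with $J^*\ni p^*$, $J^*\subseteq T$, and $|J^*|=\eps$. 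On the event $\{\D\cap J^*=\es\}$, no cut lies in $J^*$, hence every sub-interval of $J^*$ emanating from $p^*$ is open, and $J^*\subseteq C_a$. Thus
\[
|T\cap C_a|\geq \eps\cdot 1\{\D\cap J^*=\es\}.
\]
Taking expectation over $\D$ (Poisson of intensity $4\d(\cdot)\leq 4\bar\d$) gives $E_\D[|T\cap C_a|\mid \psi_1,\psi_2]\geq \eps e^{-4\bar\d\eps}\geq \eps e^{-8\eps\bar\d}$. Combining and rearranging yields the Simon inequality.

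The main obstacle is this final step. The substance lies in recognising that the odd backbone of $\psi_2^{ab}$ provides an essentially deterministic crossing of $T$ that is immune to cuts, so that $\eps$-fatness of $T$ forces $C_a$ to contain an $\eps$-long open segment which $\D$ destroys only with small probability. Care is needed in the measurable selection $p^*\mapsto J^*$ and in verifying that neither deaths in the colourings nor bridges can locally disconnect the cluster inside $J^*$ (they can only extend or redirect it elsewhere in the configuration).
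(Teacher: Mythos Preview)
Your proof is correct and follows essentially the same route as the paper's: switch to express $\el\s_a\s_x\er\el\s_x\s_b\er$ as $Z^{-2}E(\pd\psi_1^\es\pd\psi_2^{ab}\cdot 1\{a\lra x\})$, integrate to obtain the expected measure of $T\cap C_a$, and then exploit that the odd backbone of $\psi_2^{ab}$ is a cut-immune lattice path crossing $T$, so that $\eps$-fatness plus a local vacancy of $\D$ forces $|T\cap C_a|\ge\eps$. The paper does the same, with two cosmetic differences: it fixes the crossing point concretely as the first exit of the backbone from the $a$-side of $T$ (rather than appealing to an abstract measurable selection), and it clears $\D$ from the symmetric interval of length $2\eps$ about that point instead of selecting a one-sided $J^*\subseteq T$ of length $\eps$; this is why the paper's constant is $e^{-8\eps\bar\d}$ whereas your argument actually yields the sharper $e^{-4\eps\bar\d}$ before you relax it.
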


\begin{proof}
By Theorems \ref{rcr_thm} and \ref{sl},
\begin{equation}
\el\s_a\s_x\er\el\s_x\s_b\er=
\frac{1}{Z^2}E(\partial\psi_1^\es\partial\psi_2^{ab}
\cdot1\{a\lra x\}),
\end{equation}
and, by Fubini's theorem,
\begin{equation}
\int_T\el\s_a\s_x\er\el\s_x\s_b\er\;dx=
\frac{1}{Z^2}E(\partial\psi_1^\es\partial\psi_2^{ab}
\cdot|\what T|),
\end{equation}
where $\what T=\{x\in T:a\lra x\}$ and $|\cdot|$ denotes Lebesgue
measure.  Since $\g\equiv 0$, the backbone $\xi = \xi(\psi_2^{ab})$ 
consists of a single (lattice-) path from $a$ to $b$
passing through $T$.  Let $U$ denote the set of points in $K$
that are separated from $b$ by $T$, and let $X$ be the point at which $\xi$
exits $U$ for the first time.
  Since $T$ is
assumed closed, $X\in T$.
See Figure \ref{simon_fig}.  

\begin{figure}[tbp]
\includegraphics{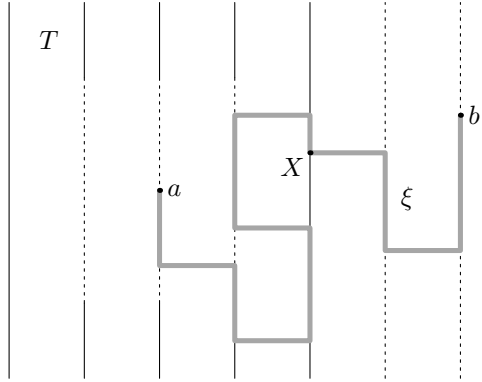}
\caption{The Simon inequality.  The separating set $T$ is drawn with solid
  black lines, and the backbone $\xi$ with a grey line.}
\label{simon_fig}
\end{figure}

For  
$x\in T$, let $A_x$ be the event that there is no element of $\D$ within
the interval of length $2\eps$ centred at $x$.  Thus,
$P(A_x)\ge\exp(-8\eps\bar\d)$.
  On the
event $A_X$, we have that $|\what T|\geq\eps$, whence
\begin{align}\label{simon_eq1_eq}
E(\partial\psi_1^\es\partial\psi_2^{ab}
\cdot|\what T|)&\geq
E(\partial\psi_1^\es\partial\psi_2^{ab}
\cdot|\what T|\cdot1\{A_X\})\\
&\geq\eps
E(\partial\psi_1^\es\partial\psi_2^{ab}
\cdot1\{A_X\}).
\nonumber
\end{align}
Conditional on $X$, the event $A_X$ is independent of $\psi_1^\es$ and
$\psi_2^{ab}$, so that
\begin{equation}
E(\partial\psi_1^\es\partial\psi_2^{ab}
\cdot|\what T|)\geq\eps\exp(-8\eps\bar\d)
E(\partial\psi_1^\es\partial\psi_2^{ab}),
\label{o18}
\end{equation}
and the proof is complete.
\end{proof}

Just as for the classical Ising model, only a small amount of extra work is
required to obtain the following improvement of Lemma~\ref{simon_lem}.

\begin{lemma}[Lieb inequality]\label{lieb_lem}
Under the assumptions of Lemma \ref{simon_lem},
\begin{equation}
\el\s_a\s_b\er\leq\frac{1}{\eps}\exp(8\eps\bar\d)
\int_T\el\s_a\s_x\er_{U}\,\el\s_x\s_b\er\;dx,
\end{equation}
where $\el\cdot\er_{U}$ denotes expectation with respect to the
measure restricted to $U$.
\end{lemma}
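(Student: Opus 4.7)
My plan is to refine the proof of Simon's inequality (Lemma \ref{simon_lem}) by using the backbone representation to localise the first factor to $U$. The essential additional tools are Lemma \ref{backb2}(b), which cuts the backbone at any chosen interior point, together with Lemmas \ref{rw_mon_lem} and \ref{cor_mon_lem}, which give domain monotonicities for the backbone weight and the two-point function respectively.

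Starting from the backbone identity $\el\s_a\s_b\er = Z^{-1}E(\wt^{ab}(\xi)\cdot 1\{\xi:a\to b\})$ and using the fact that $T$ separates $a$ from $b$, the backbone $\xi$ of $\psi^{ab}$ must cross $T$. Let $X\in T$ denote its first exit from $U$, and cut $\xi=\ol\xi^1\circ\ol\xi^2$ at $X$, so that $\ol\xi^1:a\to X$ is contained in the closure of $U$. Lemma \ref{backb2}(b) gives
$$
\wt^{ab}(\xi) = \wt^{aX}(\ol\xi^1)\,\wt^{Xb}_{K\sm\ol\xi^1}(\ol\xi^2).
$$
Apply Lemma \ref{rw_mon_lem} with $R=K\sm U$ to bound $\wt^{aX}(\ol\xi^1)\le \wt^{aX}_U(\ol\xi^1)$, absorbing any power-of-2 factor into the constants. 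Integrating $\ol\xi^2$ over backbones from $X$ to $b$ in $K\sm\ol\xi^1$ and using Lemma \ref{cor_mon_lem} yields $\el\s_X\s_b\er_{K\sm\ol\xi^1}\le \el\s_X\s_b\er_K$. Integrating $\ol\xi^1$ over backbones from $a$ to $X$ inside $U$ recovers $\el\s_a\s_X\er_U$ via the backbone representation applied on the sub-domain $U$.

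To convert the random first exit $X$ into the integration variable $y\in T$, I follow Simon's argument: introduce the cut-process $\D$ of intensity $4\bar\d$ and the event $A_X$ that $\D$ has no points in the interval of length $2\eps$ centred at $X$, so that $P(A_X)\ge e^{-8\eps\bar\d}$. Since $T\subseteq \ol U$ is $\eps$-fat, on $A_X$ the $\eps$-segment of the sub-interval of $T$ containing $X$ is unbroken, so the endpoint $X$ of $\ol\xi^1$ may be smeared to any $y$ in that segment while remaining a valid backbone endpoint in $U$. Integrating over $y\in T$ supplies the factor $\eps$ and the desired integrand $\el\s_a\s_y\er_U\,\el\s_y\s_b\er$, yielding $\int_T \el\s_a\s_y\er_U\,\el\s_y\s_b\er\,dy\ge \eps\,e^{-8\eps\bar\d}\,\el\s_a\s_b\er$, which rearranges to the claim.

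The main obstacle is this smearing step: whereas in Simon's proof the integrand $\el\s_a\s_y\er\el\s_y\s_b\er$ emerges directly from a single application of the switching lemma on the full domain $K$, here the backbone must first be cut at the random point $X$ to isolate the sub-domain $U$, and only then can $X$ be smeared to a generic $y\in T$. This reordering requires careful handling of the Poisson measures around $X$ so that the localised factor $\el\s_a\s_X\er_U$ (rather than $\el\s_a\s_X\er_K$) survives to the final integrand, and it also requires verifying that the power-of-2 constant from Lemma \ref{rw_mon_lem} is harmless under the present boundary convention.
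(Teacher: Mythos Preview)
The paper takes a different, shorter route that avoids the backbone machinery entirely. It establishes a mixed-domain switching identity
\[
E(\partial\ol\psi_1^{ax}\,\partial\psi_2^{xb})
= E\bigl(\partial\ol\psi_1^\es\,\partial\psi_2^{ab}\cdot 1\{a\lra x\text{ in }U\}\bigr),
\]
where $\ol\psi_1$ is a colouring of $U$ and $\psi_2$ a colouring of $K$. This is the ordinary switching argument with two modifications: any bridge lying outside $U$ is deterministically allocated to $\psi_2$, and only paths from $a$ to $x$ lying inside $U$ are used for the switch. Integrating over $x\in T$ then gives
$\int_T\el\s_a\s_x\er_U\,\el\s_x\s_b\er\,dx=(Z_UZ)^{-1}E(\partial\ol\psi_1^\es\partial\psi_2^{ab}\cdot|\what T|)$
with $\what T=\{x\in T:a\lra x\text{ in }U\}$, after which the proof concludes verbatim as for Simon.

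Your backbone route has a genuine gap precisely at the step you flag as the obstacle. After cutting at the random first exit $X$ and applying the monotonicity lemmas, you obtain a bound of the shape $\el\s_a\s_b\er\le E_\xi[g(\ol\xi^1,X)]$ with $X=X(\xi)$ \emph{random}. To reach $\int_T\el\s_a\s_y\er_U\,\el\s_y\s_b\er\,dy$ you must convert an expectation at a random point into a Lebesgue integral over $T$. The auxiliary process $\D$ you introduce is independent of everything in the single-colouring backbone picture, so conditioning on $A_X$ only divides by a constant; it cannot manufacture an integration variable. In Simon's (and the paper's) argument the logic runs the opposite way: the integral $\int_T$ is present from the outset via Fubini applied after switching, and the backbone together with $\D$ serves only to bound the random set $|\what T|$ from below. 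Reversing that flow would require an inequality of the type $E[g(X)]\le C\int_T g(y)\,dy$, which fails without control on the law of $X$. Your sentence that ``the endpoint $X$ may be smeared to any $y$ in that segment while remaining a valid backbone endpoint'' is a statement about geometry, not about weights; it does not yield the pointwise comparison $w^{aX}_U(\ol\xi^1)\,\el\s_X\s_b\er\le\el\s_a\s_y\er_U\,\el\s_y\s_b\er$ for nearby $y$ that you would need. A smaller issue: the factor $2^{r-r'}$ from Lemma~\ref{rw_mon_lem} cannot simply be ``absorbed into the constants'', since the Lieb statement carries exactly the same constant as Simon; the mechanism (used in Section~\ref{pf_sec}) is that the randomisation factor $(\tfrac12)^r$ in the backbone law cancels it to $(\tfrac12)^{r'}$, and you would have to invoke this explicitly.
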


\begin{proof}
Let $x \in T$, let $\ol\psi_1^{ax}$ denote a colouring on the restricted region
$U$, and let $\psi_2^{xb}$ denote a colouring on the
full region $K$ as before.  We claim that
\begin{equation}\label{lieb_sw_eq}
E(\partial\ol\psi_1^{ax}\partial\psi_2^{xb})=
E\bigl(\partial\ol\psi_1^\es\partial\psi_2^{ab}
\cdot 1\{a\lra x\mbox{ in }U\}\bigr).
\end{equation}
The use of the letter $E$ is an abuse of notation,
since the $\ol\psi$ are colourings of $U$ only.

Equation \eqref{lieb_sw_eq} may be established using a slight variation in the proof of the
switching lemma.  We follow the proof of that lemma, first
conditioning on the set $Q$ of all bridges and ghost-bonds in the two
colourings taken together, and then allocating them to the colourings $Q_1$ and $Q_2$,
uniformly at random.  We then order the paths $\pi$ of $Q$ from $a$ to
$x$, and add the earliest open path to both $Q_1$ and $Q_2$ `modulo 2'.  There
are two differences here:  firstly, any element of $Q$ that is not
contained in $U$ will be allocated to $Q_2$, and secondly, we only consider paths $\pi$ that lie 
inside $U$.  Subject to these two changes, we follow the argument of
the switching lemma to arrive at~\eqref{lieb_sw_eq}.  

Integrating \eqref{lieb_sw_eq} over $x \in T$,
\begin{equation}
\int_T\el\s_a\s_x\er_{U}\,\el\s_x\s_b\er\;dx=
\frac{1}{Z_{U}Z}E(\partial\ol\psi_1^\es\partial\psi_2^{ab}
\cdot|\what T|),
\end{equation}
where this time 
$\what T=\{x\in T:a\lra x\mbox{ in }U\}$.  
The proof is completed as in \eqref{simon_eq1_eq}--\eqref{o18}.
\end{proof}

For the next lemma we specialize to the situation that is the main focus of
this article, namely the following. Similar results are valid for other lattices and for
summable translation-invariant interactions.

\begin{assumption}\label{periodic_assump}\hspace{1cm}
\begin{itemize}
\item The graph $L=[-n,n]^d \subseteq \ZZ^d$ where $d \ge 1$,
with periodic boundary condition.
\item The parameters $\l$, $\d$, $\g$ are non-negative constants.
\item The set $K_v=\SS$ for every $v\in V$.
\end{itemize}
\end{assumption}
Under the periodic boundary condition, two vertices of $L$ are 
joined by an edge whenever there exists $i\in\{1,2,\dots,d\}$
such that their $i$-coordinates differ by exactly $2n$, and all other coordinates
are equal.  

Under Assumption~\ref{periodic_assump}, 
the process is invariant under automorphisms of $L$ and, furthermore,
the quantity
$\el\s_x\er$ does not depend on the choice of $x$.  
Let $0$ denote some fixed but arbitrary point of $K$, and let
$M=M(\l,\d,\g)=\el\s_0\er$ denote the common value of the $\el\s_x\er$.  

For $x,y\in K$, we write $x\sim y$ if $x=(u,t)$ and
$y=(v,t)$ for some $t \ge 0$ and $u,v$ adjacent in $L$.  We write 
$\{x\lrao z y\}$ for the complement of the event that there exists
an open path from $x$ to $y$ not containing $z$.
Thus, $x\lrao z y$ if: either $x \nlra y$, or $x \lra y$ and every
open path from $x$ to $y$ passes through $z$.

\begin{theorem}\label{three_ineq_lem}
Under Assumption~\ref{periodic_assump}, the following hold. 
\begin{align}\label{dg_bound_eq}
\frac{\partial M}{\partial\g}&=\frac{1}{Z^2}\int_K dx\;
E\bigl(\partial\psi_1^{0x}\partial\psi_2^\es
\cdot 1\{0\nlra \Gh\}\bigr)
\leq \frac{M}{\g}.\\
\label{dl_bound_eq}
\frac{\partial M}{\partial\l}&=\frac{1}{2Z^2}\int_K dx
\sum_{y\sim x} E\bigl(\partial\psi_1^{0xy\Gh}\partial\psi_2^\es
\cdot 1\{0\nlra \Gh\}\bigr)
\leq 2dM\frac{\partial M}{\partial\g}.\\
\label{dd_bound_eq}
-\frac{\partial M}{\partial\d}&=
\frac{2}{Z^2}\int_K dx \: E\bigl(\partial\psi_1^{0\Gh}\partial\psi_2^\es
\cdot 1\{0\overset{x}{\lra} \Gh\}\bigr)
\leq \frac{2M}{1-M^2}\frac{\partial M}{\partial\g}.
\end{align}
\end{theorem}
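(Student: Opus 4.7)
My plan is to address each of the three parts in turn, each comprising (i) an exact identity derived from the random-parity representation via the switching lemma, and (ii) an upper bound obtained from the inequalities of Section~\ref{sw_appl_sec}. Parts~\eqref{dg_bound_eq} and~\eqref{dl_bound_eq} follow a common template; part~\eqref{dd_bound_eq} is substantially more delicate.

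For~\eqref{dg_bound_eq}, differentiating $M=\el\s_0\er$ under the integral in~\eqref{o12} yields $\partial M/\partial\g = \int_K\el\s_0;\s_x\er\,dx$. By Theorem~\ref{rcr_thm}, $\el\s_0\s_x\er = E(\pd\psi_1^{0x}\pd\psi_2^\es)/Z^2$, while $\el\s_0\er\el\s_x\er = E(\pd\psi_1^{0\Gh}\pd\psi_2^{x\Gh})/Z^2$ by independence of the two colourings. Applying the switching lemma with switching points $0,\Gh$ to the latter---and noting that $\pd\psi_1^{0\Gh}>0$ forces $0\lra\Gh$ almost surely, since the backbone supplies an odd $0$-$\Gh$ path---yields $E(\pd\psi_1^{0\Gh}\pd\psi_2^{x\Gh}) = E(\pd\psi_1^{0x}\pd\psi_2^\es\cdot 1\{0\lra\Gh\})$ after interchanging the two independent colourings. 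Subtraction produces the indicator $1\{0\nlra\Gh\}$ as claimed. For the bound, the periodic boundary condition in Assumption~\ref{periodic_assump} and spin-flip symmetry force $M(\l,\d,0)=0$, and the concavity of $M$ in $\g$ from Lemma~\ref{ghs_lem} then gives $\partial M/\partial\g\le M/\g$ by the fact that a concave $f$ with $f(0)=0$ satisfies $f'(\g)\le f(\g)/\g$.

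For~\eqref{dl_bound_eq}, $\l$-differentiation of~\eqref{o12} yields $\partial M/\partial\l = \tfrac12\int_K dx\sum_{y\sim x}\el\s_0;\s_x\s_y\er$, the factor $\tfrac12$ compensating for the unordered pair in each edge $uv$. The identical switching manipulation, now with the four-point source set $\{0,x,y,\Gh\}$ on $\psi_1$ (the ghost forced by the odd cardinality of $\{0,x,y\}$), identifies the integrand with $E(\pd\psi_1^{0xy\Gh}\pd\psi_2^\es\cdot 1\{0\nlra\Gh\})/Z^2$. For the bound, the {\ghs} inequality $\el\s_0;\s_x;\s_y\er\le 0$ of Lemma~\ref{ghs_lem} rearranges, using $\el\s_x\er=\el\s_y\er=M$ by translation invariance, to $\el\s_0;\s_x\s_y\er \le M[\el\s_0;\s_x\er + \el\s_0;\s_y\er]$; integration over $x$, summation over the $2d$ neighbours $y\sim x$, and a further application of translation invariance to the second term give $\partial M/\partial\l\le 2dM\chi$.

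For~\eqref{dd_bound_eq}, I would differentiate $M = E(\pd\psi_1^{0\Gh}\pd\psi_2^\es)/Z^2$ in $\d$ directly; the weights $\pd\psi = e^{2\d|\ev(\psi)|}$ are the only $\d$-dependent quantities, and the derivative brings down a factor $2|\ev(\psi)| = 2|K| - 2\int_K 1\{x\in\odd(\psi)\}\,dx$. The bulk $|K|$-contributions cancel between numerator and denominator, leaving an $x$-integral of combinations of events of the form $\{x\in\odd(\psi_i)\}$; a combined use of the backbone decomposition (Lemma~\ref{backb2}) and the switching lemma is then needed to re-express this as $2E(\pd\psi_1^{0\Gh}\pd\psi_2^\es\cdot 1\{0\lrao{x}\Gh\})/Z^2$, the factor $2$ arising from the parity contributions of the two independent colourings. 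For the upper bound, I would condition on the open cluster $C_x$ of $x$ in $(\psi_1^{0\Gh},\psi_2^\es,\D)$ exactly as in the proof of Lemma~\ref{ghs_lem}: on $\{0\lrao{x}\Gh\}$ both $0,\Gh\in C_x$, and conditional on $C_x$ the restrictions to $C_x$ and $K\sm C_x$ decouple; Lemma~\ref{cor_mon_lem} bounds the $K\sm C_x$-correlation by the full one, while a resummation of the alternative connections through $x$ within $C_x$ produces the factor $1/(1-M^2)$. The main obstacle is this last step: both the identification of the $\d$-derivative with the pivotality event $\{0\lrao{x}\Gh\}$ and the emergence of the denominator $1-M^2$ require careful bookkeeping of backbone and eddie structure that is specific to the continuous space-time Ising model.
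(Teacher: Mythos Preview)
Your treatment of \eqref{dg_bound_eq} and \eqref{dl_bound_eq} is correct and is exactly the route the paper has in mind (it omits these proofs as standard adaptations of \cite{abf}). The switching manipulations and the use of concavity/{\ghs} for the bounds are right.

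For \eqref{dd_bound_eq} your starting point is correct, but the two crucial steps are not the ones you sketch. For the identity, the paper does \emph{not} use the backbone decomposition of Lemma~\ref{backb2}. After writing $\partial M/\partial\d$ as the $x$-integral of $E\bigl(\partial\psi_1^{0\Gh}\partial\psi_2^\es\cdot[1\{x\in\odd(\psi_2^\es)\}-1\{x\in\odd(\psi_1^{0\Gh})\}]\bigr)$, the key observations are: (i) if $x\in\odd(\psi_2^\es)$ then $x$ lies in an odd \emph{cycle}, so any open $0$--$\Gh$ path can be rerouted around it and $\{0\lrao{x}\Gh\}$ fails; (ii) conversely, $\{0\lrao{x}\Gh\}$ together with $\partial\psi_1^{0\Gh}\ne 0$ forces $x\in\odd(\psi_1^{0\Gh})$. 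To cancel the remaining terms one applies the switching lemma with a \emph{specific choice of path ordering}, namely one in which paths avoiding $x$ take precedence; this guarantees that on $\{0\lrao{x}\Gh\}^\tc$ the switched path misses $x$, so the event $\{x\in\odd(\psi_1)\}$ survives the switch. This ordering trick is the heart of the identity and is not visible from the backbone decomposition.

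For the bound, conditioning on $C_x$ (the full cluster of $x$) will not work cleanly, because on $\{0\lrao{x}\Gh\}$ both sources $0$ and $\Gh$ lie in $C_x$ and nothing decouples. The paper instead conditions on $C_0^x$, the set reachable from $0$ by open paths \emph{not containing} $x$: on $\{0\lrao{x}\Gh\}$ one has $\Gh\notin C_0^x$, so $\psi_1^{0\Gh}$ splits as a colouring with sources $0,x$ on $C_0^x$ and one with sources $x,\Gh$ on the complement, yielding a factor $\el\s_x\er_{K\sm C_0^x}\le M$. One then splits on $\{x\lra\Gh\}$; the piece $\{x\nlra\Gh\}$ contributes to $\partial M/\partial\g$ (via translation invariance), while on $\{x\lra\Gh\}$ a switch followed by conditioning on $C_\Gh^x$ extracts a second factor of $M$ and \emph{returns the original quantity}. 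The result is the self-referential inequality $-\partial M/\partial\d \le 2M\,\partial M/\partial\g + M^2(-\partial M/\partial\d)$, from which the factor $1/(1-M^2)$ emerges by rearrangement rather than by any resummation.
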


\begin{proof}
With the exception of~\eqref{dd_bound_eq}, the proofs mimic those 
of \cite{abf} for
the classical Ising model, and are therefore omitted. See \cite{bjo_phd}.

Here is the proof of \eqref{dd_bound_eq}. Let $|\cdot|$ denote Lebesgue
measure as usual.  By differentiating 
\begin{equation}
M=\frac{E(\partial\psi^{0\Gh})}{E(\partial\psi^\es)}=
\frac{E(\exp(2\d|\ev(\psi^{0\Gh})|))}{E(\exp(2\d|\ev(\psi^\es)|))}
\end{equation}
with respect to $\d$, we obtain that
\begin{align}
\frac{\partial M}{\partial \d}&=
\frac{2}{Z^2}E\bigl(\partial\psi_1^{0\Gh}\partial\psi_2^\es\cdot
\bigl[|\ev(\psi_1^{0\Gh})|-|\ev(\psi_2^\es)|\bigr]\bigr)\label{ihp17}\\
&=\frac{2}{Z^2}\int dx\,
E\bigl(\partial\psi_1^{0\Gh}\partial\psi_2^\es\cdot
\bigl[1\{x\in\odd(\psi_2^\es)\}-1\{x\in\odd(\psi_1^{0\Gh})\}\bigr]\bigr).
\nonumber
\end{align}

Consider the integrand in \eqref{ihp17}.
Since $\psi_2^\es$ has no sources, all odd routes in $\psi_2^\es$ are
necessarily cycles.  If $x\in\odd(\psi_2^\es)$, then $x$ lies in an odd
cycle.  We may assume that $x$ is not the endpoint of a bridge, since this
event has probability 0.  It follows that, on the event $\{0\lra \Gh\}$,
there exists an open path from $0$ to $\Gh$ that avoids $x$
(since any path can be re-routed around the odd cycle of $\psi_2^\es$ containing
$x$). 
Therefore, the event $\{0\lrao{x} \Gh\}$ does not occur, and hence
\begin{align} 
&E\bigl(\partial\psi_1^{0\Gh}\partial\psi_2^\es\cdot
1\{x\in\odd(\psi_2^\es)\}\bigr)\label{dd_pf_1_eq}\\
&\hskip2cm =E\bigl(\partial\psi_1^{0\Gh}\partial\psi_2^\es\cdot
1\{x\in\odd(\psi_2^\es)\}\cdot 
1\{0\lrao{x} \Gh\}^\tc\bigr).
\nonumber
\end{align}
  
If $\partial\psi_1^{0\Gh}\ne 0$ and
$0\lrao x\Gh$, then necessarily
$x\in\odd(\psi_1^{0\Gh})$. Hence,
\begin{align}\label{dd_pf_2_eq}
&E\bigl(\partial\psi_1^{0\Gh}\partial\psi_2^\es\cdot
1\{x\in\odd(\psi_1^{0\Gh})\}\bigr)\\
&\hskip1cm =E\bigl(\partial\psi_1^{0\Gh}\partial\psi_2^\es\cdot
1\{x\in\odd(\psi_1^{0\Gh})\}\cdot 1\{0\overset{x}{\lra} \Gh\}^\tc\bigr)\nonumber\\
&\hskip5cm +E\bigl(\partial\psi_1^{0\Gh}\partial\psi_2^\es\cdot 
1\{0\overset{x}{\lra} \Gh\}\bigr).
\nonumber
\end{align}
We wish to switch the sources $0\Gh$ from $\psi_1$ to $\psi_2$ in the right
side of \eqref{dd_pf_2_eq}.
For this we need to adapt some details of the proof of the 
switching lemma to this situation.  
The first step in the proof of that lemma
was to condition on the union $Q$ of the bridges and
ghost-bonds of the two colourings;  then, the paths from $0$ to $\Gh$ in $Q$ were listed
in a fixed \emph{but arbitrary} order. 
We are free to choose this
ordering in such a way that paths not containing $x$ have precedence, and
we assume henceforth that the ordering is thus chosen.  The next step is
to find the earliest open path $\pi$, and `add $\pi$ modulo 2' to both
$\psi_1^{0\Gh}$ and $\psi_2^\es$.  On the event 
$\{0\overset{x}{\lra} \Gh\}^\tc$, this earliest path $\pi$ does not
contain $x$, by our choice of ordering.  Hence, in the new colouring
$\psi_1^\es$,  $x$ continues to lie in an `odd' interval (recall
that, outside $\pi$, the colourings are unchanged by the switching procedure).
Therefore,
\begin{align}
&E\bigl(\partial\psi_1^{0\Gh}\partial\psi_2^\es\cdot
1\{x\in\odd(\psi_1^{0\Gh})\}\cdot 1\{0\overset{x}{\lra} \Gh\}^\tc\bigr)\\
&\hskip 2cm =E\bigl(\partial\psi_1^\es\partial\psi_2^{0\Gh}\cdot
1\{x\in\odd(\psi_1^\es)\}\cdot 1\{0\overset{x}{\lra} \Gh\}^\tc\bigr).
\nonumber
\end{align}
Relabelling, putting this into~\eqref{dd_pf_2_eq}, and
subtracting~\eqref{dd_pf_2_eq} from~\eqref{dd_pf_1_eq}, we obtain
\begin{equation}\label{dd_pf_3_eq}
\frac{\partial M}{\partial \d}=
-\frac{2}{Z^2}\int dx\:
E\bigl(\partial\psi_1^{0\Gh}\partial\psi_2^\es\cdot 
1\{0\overset{x}{\lra} \Gh\}\bigr)
\end{equation}
as required.

Turning to the inequality, let  $C^x_z$ denote the set of
points that can be reached from $z$ along open paths 
\emph{not containing $x$}.  When calculating the conditional expectation of 
$\partial\psi_1^{0\Gh}\partial\psi_2^\es\cdot 
1\{0\overset{x}{\lra} \Gh\}$ given $C^x_0$, as in the proof of the
\ghs\ inequality, we find that $\psi_1^{0\Gh}$ is a combination of two independent
colourings, one inside $C^x_0$ with sources $0x$, and one outside $C^x_0$
with sources $x\Gh$.  As in \eqref{o19}, using Lemma \ref{cor_mon_lem} as there,
\begin{align}
\label{dd_pf_4_eq}
E\bigl(\partial\psi_1^{0\Gh}\partial\psi_2^\es
\cdot 1\{0\overset{x}{\lra} \Gh\}\bigr)&=
E\bigl(\partial\psi_1^{0x}\partial\psi_2^\es\el\s_x\er_{K\setminus C^x_0}
\cdot 1\{0\overset{x}{\lra} \Gh\}\bigr)\\ 
&\leq M\cdot E\bigl(\partial\psi_1^{0x}\partial\psi_2^\es
\cdot 1\{0\overset{x}{\lra} \Gh\}\bigr).
\nonumber
\end{align}
We split the expectation on the right side according to whether or not
$x\lra \Gh$.  Clearly,
\begin{equation}\label{dd_pf_5_eq}
E\bigl(\partial\psi_1^{0x}\partial\psi_2^\es
\cdot 1\{0\overset{x}{\lra} \Gh\}
\cdot 1\{x\nlra \Gh\}\bigr)\leq
E\bigl(\partial\psi_1^{0x}\partial\psi_2^\es
\cdot 1\{x\nlra \Gh\}\bigr).
\end{equation}
By the switching lemma \ref{sl}, the other term satisfies
\begin{equation}
E\bigl(\partial\psi_1^{0x}\partial\psi_2^\es
\cdot 1\{0\overset{x}{\lra} \Gh\}
\cdot 1\{x\lra \Gh\}\bigr)=
E\bigl(\partial\psi_1^{0\Gh}\partial\psi_2^{x\Gh}
\cdot 1\{0\overset{x}{\lra} \Gh\}\bigr).
\end{equation}
We again condition on a cluster, this time $C^x_\Gh$, to obtain as in
\eqref{dd_pf_4_eq} that
\begin{equation}\label{dd_pf_6_eq}
E\bigl(\partial\psi_1^{0\Gh}\partial\psi_2^{x\Gh}
\cdot 1\{0\overset{x}{\lra} \Gh\}\bigr)\leq 
M\cdot E\bigl(\partial\psi_1^{0\Gh}\partial\psi_2^\es
\cdot 1\{0\overset{x}{\lra} \Gh\}\bigr).
\end{equation}
Combining \eqref{dd_pf_4_eq}, \eqref{dd_pf_5_eq}, \eqref{dd_pf_6_eq} with \eqref{dd_pf_3_eq}, 
we obtain by \eqref{dg_bound_eq} that 
\begin{equation}
-\frac{\partial M}{\partial \d}\leq 2M\frac{\partial M}{\partial \g}+
M^2\Big(-\frac{\partial M}{\partial \d}\Big),
\end{equation}
as required.
\end{proof}

\section{Proof of Theorem \ref{main_pdi_thm}}\label{pf_sec}

In this section we will prove the differential inequality \eqref{ihp18} which, in
combination with the inequalities of the previous section, will
yield information about the critical behaviour of the space--time Ising model.  The
proof proceeds roughly as follows.  In the
random-parity representation of $M=\el\s_0\er$, there is a backbone from $0$
to $\Gh$ (that is, to some point $g \in G$).  
We introduce two new  sourceless configurations;
depending on how the backbone interacts with these configurations, the
switching lemma allows a decomposition into a combination of
other configurations which, via Theorem \ref{three_ineq_lem}, may be 
expressed in terms of derivatives of the magnetization.

Throughout this section we work under Assumption~\ref{periodic_assump},
  that is, \emph{we work with a translation-invariant nearest-neighbour model on 
a cube in the $d$-dimensional
  lattice}, while noting that our conclusions are valid
for more general interactions with similar symmetries.
The arguments in this section borrow heavily from~\cite{abf}.
As in Theorem \ref{three_ineq_lem}, the main novelty in the proof
concerns connectivity in the `vertical' direction 
(the term $R_v$ in \eqref{ihp19}--\eqref{ihp20}
below). 

By Theorem \ref{rcr_thm},
\begin{equation}
M=\frac{1}{Z}E(\partial\psi_1^{0\Gh})
=\frac{1}{Z^3}E(\partial\psi_1^{0\Gh}\partial\psi_2^\es
\partial\psi_3^\es).
\end{equation}
We shall consider the backbone
$\xi=\xi(\psi_1^{0\Gh})$ and the open cluster $C_\Gh$ of $\Gh$ in 
$(\psi_2^\es,\psi_3^\es,\D)$.
All connectivities will refer
to the triple $(\psi_2^\es,\psi_3^\es,\D)$.  Note that $\xi$ consists of
a single path with endpoints $0$ and $\Gh$.
There are four possibilities, illustrated in Figure~\ref{decomp_fig}, for the way in which $\xi$,
viewed as a directed path from $0$ to $\Gh$, interacts with $C_\Gh$:  
\begin{romlist}
\item $\xi\cap C_\Gh$ is empty, 
\item $0 \in \xi\cap C_\Gh$, 
\item $0 \notin \xi\cap C_\Gh$, and $\xi$ first meets $C_\Gh$ immediately
  after a bridge, 
\item $0 \notin \xi\cap C_\Gh$, and $\xi$ first meets $C_\Gh$ at a cut, which
necessarily belongs to $\ev(\psi_2^\es)\cap\ev(\psi_3^\es)$.  
\end{romlist}

Thus,
\begin{equation}
M=T+R_0+R_h+R_v,
\label{ihp19}
\end{equation}
where
\begin{equation}
\label{ihp20}
\begin{split}
T&=\frac{1}{Z^3}E\bigl(\partial\psi_1^{0\Gh}\partial\psi_2^\es
\partial\psi_3^\es\cdot 1\{\xi\cap C_\Gh=\es\}\bigr),\\
R_0&=\frac{1}{Z^3}E\bigl(\partial\psi_1^{0\Gh}\partial\psi_2^\es
\partial\psi_3^\es\cdot 1\{0\lra \Gh\}\bigr),\\
R_h&=\frac{1}{Z^3}E\bigl(\partial\psi_1^{0\Gh}\partial\psi_2^\es
\partial\psi_3^\es\cdot 
1\{\mbox{first point of $\xi\cap C_\Gh$ is at a bridge of $\xi$}\}\bigr),\\
R_v&=\frac{1}{Z^3}E\bigl(\partial\psi_1^{0\Gh}\partial\psi_2^\es
\partial\psi_3^\es\cdot 
1\{\mbox{first point of $\xi\cap C_\Gh$ is a cut}\}\bigr).
\end{split}
\end{equation}
We will bound each of these terms in turn.

\begin{figure}[tbp]
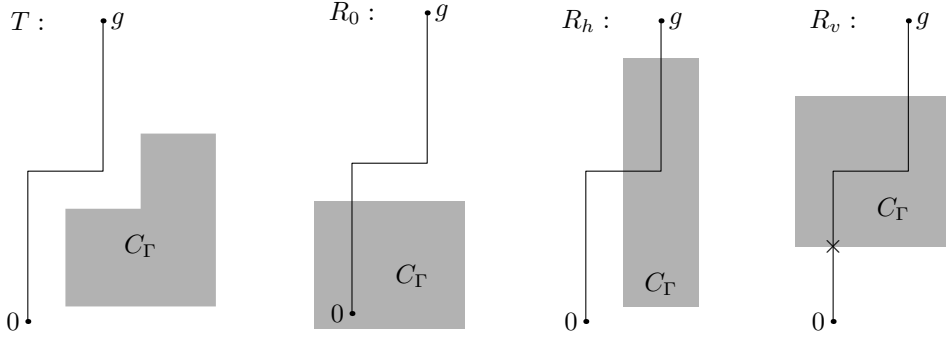

\includegraphics{sharptransition.14}
\hspace{1cm}
\includegraphics{sharptransition.15}
\hspace{1cm}
\includegraphics{sharptransition.16}
\hspace{1cm}
\includegraphics{sharptransition.17}
\caption{Illustrations of the four possibilities for $\xi\cap C_\Gh$.  
Ghost-bonds in $\psi^{0\Gh}$ are labelled $g$. The backbone $\xi$ is drawn as a
  solid black line, and $C_\Gh$ as a grey rectangle.}
\label{decomp_fig}
\end{figure}

By the switching lemma,
\begin{align}
R_0&=\frac{1}{Z^3}E\bigl(\partial\psi_1^{0\Gh}\partial\psi_2^\es
\partial\psi_3^\es\cdot 1\{0\lra \Gh\}\bigr)\label{ihp21}\\
&=\frac{1}{Z^3}E\bigl(\partial\psi_1^{0\Gh}\partial\psi_2^{0\Gh}
\partial\psi_3^{0\Gh}\bigr)=M^3.
\nonumber
\end{align}

Next, we bound $T$. The letter $\xi$ will always denote the backbone
of the first colouring $\psi_1$, with corresponding sources. 
Let $X$ denote the location of the ghost-bond that ends
$\xi$.  By conditioning on $X$, 
\begin{equation}
\begin{split}
T&=\frac{1}{Z^3}\int P(X\in dx)\,E\bigl(\partial\psi_1^{0\Gh}\partial\psi_2^\es
\partial\psi_3^\es\cdot 1\{\xi\cap C_\Gh=\es\}\bigmid X=x\bigr)\\
&\leq \frac{\g}{Z^3}\int dx\, E\bigl(\partial\psi_1^{0x}\partial\psi_2^\es
\partial\psi_3^\es\cdot 1\{\xi\cap C_\Gh=\es\}\bigr).
\end{split}
\label{o51}
\end{equation}
We study the last expectation by conditioning on $C_\Gh$ and bringing one of
the factors $1/Z$ inside.  By \eqref{backb_cond_eq}--\eqref{ihp16} and conditional
expectation, 
\begin{align}
&\frac{1}{Z}E\bigl(\partial\psi_1^{0x}\cdot 1\{\xi\cap C_\Gh=\es\}\bigmid C_\Gh\bigr)\label{o23}\\
&\hskip3cm =E\Bigl( Z^{-1}E(\pd\psi_1^{0x} \mid \xi, C_\Gh)1\{\xi\cap C_\Gh=\es\} \Bigmid C_\Gh\Bigr)
\nonumber\\
&\hskip3cm =E\bigl(\wt^{0x}(\xi)\cdot 1\{\xi\cap C_\Gh=\es\}\bigmid C_\Gh\bigr).
\nonumber
\end{align}
By Lemma \ref{rw_mon_lem},
\begin{equation}
\wt^{0x}(\xi) \le 2^{r(\xi)-r'(\xi)}
\wt_{K\setminus C_\Gh}^{0x}(\xi)\quad\mbox{on}\quad \{\xi\cap C_\Gh = \es\},
\label{o22}
\end{equation}
where
\begin{equation*}
r(\xi) = r(\xi,K),\qquad 
r'(\xi) = r(\xi,K\sm C_\Gh).
\end{equation*}
Using~\eqref{special2} and~\eqref{backbone_rep_eq}, we have
\begin{align}
&E\bigl(\wt^{0x}(\xi)
\cdot 1\{\xi\cap C_\Gh=\es\}\bigmid C_\Gh\bigr)\label{o24} \\
&\hskip3cm \leq E\bigl( 2^{r(\xi)-r'(\xi)}\wt_{K\setminus C_\Gh}^{0x} (\xi)\cdot
1\{\xi\cap C_\Gh=\es\}\bigmid C_\Gh\bigr)\nonumber \\
&\hskip3cm  \le\el\s_0\s_x\er_{K\setminus C_\Gh}.
\nonumber\end{align}

The last equation merits explanation. 
Recall that $\xi=\xi(\psi_1^{0x})$, and assume $\xi\cap C_\Gh=\es$.
Apart from the randomization that takes place when $\psi_1^{0x}$ is one of
several valid colourings,  the law of $\xi$, $P(\xi\in d\nu)$,
 is a function of the positions of
bridges and ghost-bonds along $\nu$ only, that is, the existence of
bridges where needed, and the non-existence of ghost-bonds along $\nu$.
By \eqref{o22} and Lemma~\ref{rw_mon_lem},
with $\Xi_{K\sm C} := \{\nu\in \Xi: \nu\cap C = \es\}$ and $P$
the law of $\xi$,
\begin{align*}
&E\bigl(\wt^{0x}(\xi)\cdot 1\{\xi\cap C_\Gh=\es\}\bigmid C_\Gh\bigr)\\
&\hskip3cm =\int_{\Xi_{K\sm C_\Gh}} w^{0x}(\nu)\, P(d\nu)\\
&\hskip3cm \le \int_{\Xi_{K\sm C_\Gh}}2^{r(\nu)-r'(\nu)} w^{0x}_{K\sm C_\Gh}(\nu) 
\left(\tfrac12\right)^{r(\nu)} \mu(d\nu)  
\end{align*}
for some measure $\mu$, where the factor $(\frac12)^{r(\nu)}$ arises from the
possible existence of more than one valid colouring.  
Now, $\mu$ is a measure on paths which, by the remark above, depends only locally
on $\nu$, in the sense that $\mu(d\nu)$ depends only on the bridge- and
ghost-bond configurations along $\nu$.  In particular, the same measure $\mu$
governs also the law of the backbone in the \emph{smaller} region 
$K\setminus C_\Gh$.  More explicitly, by \eqref{backbone_rep_eq} with
$P_{K\sm C_\Gh}$ the law of the backbone of the colouring
$\psi_{K\sm  C_\Gh}^{0x}$ defined on $K\sm C_\Gh$, we have
\begin{align*}
\el\s_0\s_x\er_{K\setminus C_\Gh}
&=\int_{\Xi_{K\sm C_\Gh}} w^{0x}_{K\sm C_\Gh}(\nu)\, P_{K\sm C_\Gh}(d\nu)\\
&= \int_{\Xi_{K\sm C_\Gh}} w^{0x}_{K\sm C_\Gh}(\nu) 
\left(\tfrac12\right)^{r'(\nu)}\, \mu(d\nu).
\end{align*}
Thus \eqref{o24} follows. 

Therefore, by \eqref{o51}--\eqref{o24},
\begin{align}
T&\leq \frac{\g}{Z^2}\int dx\: E\bigl(\partial\psi_2^\es
\partial\psi_3^\es\el\s_0\s_x\er_{K\setminus C_\Gh}\cdot
1\{0\nlra \Gh\}\bigr)\\
&=\g\int dx\: \frac{1}{Z^2}E\bigl(\partial\psi_2^{0x}
\partial\psi_3^\es\cdot 1\{0\nlra \Gh\}\bigr)\nonumber\\
&=\g\frac{\partial M}{\partial \g},
\nonumber
\end{align}
by `conditioning on the cluster' $C_\Gh$
and Theorem~\ref{three_ineq_lem}.

Next, we bound $R_h$.  Suppose that the bridge bringing $\xi$
into 
$C_\Gh$ has endpoints $X$ and $Y$, where we take $X$ to be the endpoint not in
$C_\Gh$.  When the bridge $XY$ is removed, the backbone $\xi$ consists of two
paths:  $\zeta^1:0\rightarrow X$ and $\zeta^2:Y\rightarrow \Gh$.  Therefore,
\begin{align*}
R_h&= \frac{1}{Z^3}\int P(X\in dx)\,E\bigl(\partial\psi_1^{0\Gh}
\partial\psi_2^\es\partial\psi_3^\es\bigmid X=x\bigr)\\
&\le\frac{\l}{Z^3}\int dx \, \sum_{y\sim x}E\bigl(\pd\psi_1^{0xy\Gh}
\partial\psi_2^\es\partial\psi_3^\es\cdot1\{0\nlra\Gh,\,y\lra\Gh\}\cdot 1\{J_\xi\}\bigr),
\nonumber
\end{align*}
where $\xi=\xi(\psi_1^{0xy\Gh})$ and
$$
J_\xi=\bigl\{\xi=\zeta^1\circ\zeta^2,
\,\zeta^1:0\rightarrow x,\,\zeta^2:y\rightarrow \Gh,\,
\zeta^1\cap C_\Gh=\es\bigr\}.
$$
As in \eqref{o23},
\begin{equation}
R_h
\leq \frac{\l}{Z^2}\int dx\:\sum_{y\sim x}
E\bigl(\partial\psi_2^\es\partial\psi_3^\es
\cdot 1\{0\nlra \Gh,\, y\lra \Gh\}\cdot
\wt^{0xy\Gh}(\xi)\cdot1\{J_\xi\}\bigr).
\label{ihp23}
\end{equation}

By Lemmas \ref{backb2}(a) and \ref{rw_mon_lem}, on the event $J_\xi$,
\begin{align*}
\wt^{0xy\Gh}(\xi) &= \wt^{0x}(\zeta^1) \wt^{y\Gh}_{K\sm \zeta^1}(\zeta^2)\\
&\le 2^{r-r'}\wt^{0x}_{K\sm C_\Gh}(\zeta^1)\wt^{y\Gh}_{K\sm \zeta^1}(\zeta^2),
\end{align*}
where $r = r(\zeta^1,K)$ and $r'= r(\zeta^1, K\sm C_\Gh)$.
By Lemma~\ref{cor_mon_lem} and the reasoning after~\eqref{o24},
\begin{align*}
E\bigl(\wt^{0xy\Gh}(\xi) \cdot 1\{J_\xi\}\bigmid \zeta^1, C_\Gh\bigr) 
&\leq 2^{r-r'}
\wt_{K\setminus C_\Gh}^{0x}(\zeta^1)\cdot \el\s_y\er_{K\sm \zeta^1}\\
&\leq M\cdot 2^{r-r'} \wt_{K\setminus C_\Gh}^{0x}(\zeta^1),
\end{align*}
so that, similarly, 
\begin{equation}
E\bigl(\wt^{0xy\Gh}(\xi)\cdot 1\{J_\xi\}\bigmid C_\Gh\bigr) \le 
M \cdot \el\s_0\s_x\er_{K\setminus C_\Gh}.
\label{o25}
\end{equation}
We substitute into the summand in \eqref{ihp23}, using the switching
lemma, conditioning on the cluster $C_\Gh$, and the bound 
$\el\s_y\er_{C_\Gh}\leq M$, to obtain the upper bound
\begin{align}
&M\cdot E\bigl(\partial\psi_2^\es\partial\psi_3^\es\cdot 
1\{0\nlra \Gh,\, y\lra \Gh\}\cdot
\el\s_0\s_x\er_{K\setminus C_\Gh}\bigr)\\
&\hskip2.5cm =M\cdot E\bigl(\partial\psi_2^{y\Gh}\partial\psi_3^{y\Gh}
\cdot 1\{0\nlra \Gh\}\cdot \el\s_0\s_x\er_{K\setminus C_\Gh}\bigr)\nonumber\\
&\hskip2.5cm =M\cdot E\bigl(\partial\psi_2^{0xy\Gh}\partial\psi_3^\es
\el\s_y\er_{C_\Gh} \cdot 1\{0\nlra \Gh\}\bigr)\nonumber\\
&\hskip2.5cm \leq M^2\cdot E\bigl(\partial\psi_2^{0xy\Gh}\partial\psi_3^\es
\cdot 1\{0\nlra \Gh\}\bigr).
\nonumber
\end{align}
Hence, by \eqref{dl_bound_eq},
\begin{align*}
R_h &\leq \l M^2\frac{1}{Z^2}\int dx\,\sum_{y\sim x}
E\bigl(\partial\psi_2^{0xy\Gh}\partial\psi_3^\es
1\{0\nlra \Gh\}\bigr)\\
&=2\l M^2 \frac{\partial M}{\partial \l}.
\end{align*}

Finally, we bound $R_v$.  Let 
$X\in\D\cap\ev(\psi_2^\es)\cap\ev(\psi_3^\es)$ be the first
point of $\xi$ in $C_\Gh$.  In a manner similar to that used for $R_h$ 
at \eqref{ihp23} above,
and by cutting the backbone $\xi$ at the point $x$,
\begin{equation}
R_v\le\frac{1}{Z^2}\int P(X\in dx)\,
E\bigl(\partial\psi_2^\es\partial\psi_3^\es
\cdot 1\{0\nlra \Gh,\,x\lra \Gh\}\cdot
\wt^{0\Gh}(\xi)\cdot 1\{J_\xi\}\bigr),
\label{ihp22}
\end{equation}
where 
$$
J_\xi=  1\bigl\{\xi=\ol\zeta^1\circ\ol\zeta^2,\, \ol\zeta^1:0\rightarrow x,\,\ol\zeta^2:x\rightarrow \Gh,
\, \zeta^1\cap C_\Gh=\es\bigr\}.
$$
As in \eqref{o25},
\begin{align*}
E(\wt^{0\Gh}(\xi)\cdot 1\{J_\xi\} \mid C_\Gh)
&= E\bigl(E(\wt^{0\Gh}(\xi)\cdot 1\{J_\xi\}\mid \ol\zeta^1,C_\Gh)\bigmid C_\Gh\bigr)\\
&\leq E\bigl(\el\s_0\s_x\er_{K\setminus C_\Gh}\cdot \el\s_x\er_{K\setminus\zeta^1}\bigmid C_\Gh\bigr)\\
&\leq \el\s_0\s_x\er_{K\setminus C_\Gh}\cdot M.
\end{align*}
By \eqref{ihp22} therefore,
\begin{equation*}
R_v\leq M\frac{1}{Z^2}\int P(X\in dx)\,
E\bigl(\partial\psi_2^\es\partial\psi_3^\es
\cdot 1\{0\nlra \Gh,\,x\lra \Gh\}
\el\s_0\s_x\er_{K\setminus C_\Gh}\bigr).
\end{equation*}
By removing the cut at $x$, the origin $0$ becomes connected to $\Gh$, but only
via $x$.  Thus,
\begin{equation*}
R_v\leq 4\d M\frac{1}{Z^2}\int dx\:
E\bigl(\partial\psi_2^\es\partial\psi_3^\es
\cdot 1\{0\overset{x}{\lra} \Gh,\,x\lra \Gh\}
\el\s_0\s_x\er_{K\setminus C^x_\Gh}\bigr),
\end{equation*}
where $C^x_\Gh$ is the set of points reached from $\Gh$ along open
paths not containing $x$.  By the switching lemma, and conditioning
twice on the cluster $C_\Gh^x$,
\begin{align*}
R_v&\leq4\d M\frac{1}{Z^2}\int dx\:
E\bigl(\partial\psi_2^{x\Gh}\partial\psi_3^{x\Gh}
\cdot 1\{0\overset{x}{\lra} \Gh\}
\el\s_0\s_x\er_{K\setminus C^x_\Gh}\bigr)\\
&=4\d M\frac{1}{Z^2}\int dx\, E\bigl(\partial\psi_2^{0\Gh}\partial\psi_3^{x\Gh}
\cdot 1\{0\overset{x}{\lra} \Gh\}\bigr)\\
&=4\d M\frac{1}{Z^2}\int dx\, E\bigl(\partial\psi_2^{0\Gh}\partial\psi_3^\es
\cdot 1\{0\overset{x}{\lra} \Gh\}\el\s_x\er_{C^x_\Gh}\bigr)\\
&\leq 4\d M^2\frac{1}{Z^2}\int dx\,
E\bigl(\partial\psi_2^{0\Gh}\partial\psi_3^\es
\cdot 1\{0\overset{x}{\lra} \Gh\}\bigr)\\
&=-2\d M^2 \frac{\partial M}{\partial \d},
\end{align*}
by \eqref{dd_bound_eq}, as required.

\section{Consequences of the inequalities}\label{cons_sec}

In this section we formulate our principal results,
and we indicate how the differential inequalities of
Theorems \ref{main_pdi_thm} and \ref{three_ineq_lem} may be used to prove them.
The arguments used are
relatively straightforward adaptations of arguments developed for the classical
Ising model, many of which are
summarized in \cite{ellis85:LD}.  In the interests of brevity, we shall omit many steps, 
and we hope that readers
familiar with the literature will be able to complete the gaps. Full details for the current
model may be found in~\cite{bjo_phd}.  
We work under Assumption~\ref{periodic_assump} throughout this section, unless
otherwise stated.  It is sometimes inconvenient to use periodic boundary
conditions, and we revert to the free condition where necessary.

We shall consider the infinite-volume limit as $L \uparrow \ZZ^d$;
the ground state is obtained by letting $\b \to\oo$ also.
Let $n$ be a positive integer, and set $L_n = [-n,n]^d$ with
periodic boundary condition. It is convenient (and equivalent)
to work instead on the translated space $\L_n^\b := [-n,n]^d \times[-\frac12 \b, \frac12\b]$,
and we assume this henceforth. By this device, the limit
process as $n,\b\to\oo$ inhabits $\ZZ^d \times \RR$ rather than $\ZZ^d \times \RRp$.
The symbol $\b$ will appear as superscript in the following; 
the superscript $\oo$ is to be interpreted
as the ground state. Let $0=(0,0)$ and
$$
M^\b_{n}(\l,\d,\g) =\el\s_0\er_{L_{n}}^\b
$$  
be the magnetization in $\L_{n}^\b$, noting that $M_n^\b\equiv 0$ when $\g=0$.

By convexity-of-pressure  arguments,
as developed in~\cite{lebowitz_martin-lof}, the limits
\begin{equation}\label{m_lim_eq}
M^\b := \lim_{n\to\oo} M^\b_n,\quad M^\oo := \lim_{n\to\oo}\lim_{\b\to\oo} M^\b_n,
\end{equation}
exist for Lebesgue-a.e.\ $\g \ge 0$.  Moreover, using the \ghs\ inequality as
in~\cite{preston_ghs} (which implies the differentiability of the pressure
function in $\g$ whenever $\g>0$) and the results
of~\cite{lebowitz_martin-lof}, we find that the
limits \eqref{m_lim_eq} exist for all $\g >0$, and are independent of
the order of the limits.  Note that this argument does not
rely on a Lee--Yang theorem.  We have that $M^\b(\l,\d,0)=0$.  

By a standard re-scaling argument, $M^\oo$
depends only on the ratios $\l/\d$ and $\g/\d$,
and thus we shall set $\d=1$, $\rho=\l/\d$,
and write
$$
M^\b(\rho,\g)= M^\b(\rho,1,\g),\qquad \b\in(0,\oo],
$$
with a similar notation
for other functions.  

As in \cite{lebowitz_martin-lof}, when $\g > 0$, there 
exists a unique equilibrium state at $(\rho,\g)$.
That is, the limits 
$$
\el\s_A\er^\b:=\lim_{n\to\oo}\el\s_A\er^\b_n, \quad \el\s_A\er^\oo:=\lim_{n,\b\to\oo}\el\s_A\er^\b_n,
$$
exist for all $A$, where $\el\cdot\er_n := \el\cdot\er_{L_n}$, and the limits are independent
of the choice of boundary condition.
It follows that the infinite-volume probability measure exists
(this is a standard exercise using the Skorohod topology, see 
\cite{bezuidenhout_grimmett,ethier_kurtz}).
A phase transition is manifested by non-uniqueness of the equilibrium state,
and this can therefore occur only when $\g=0$. Let $\el\cdot \er^\b_+$ be
the limiting state of $\el\cdot\er^\b$ as $\g \downarrow 0$, and
$$
M^\b_+(\rho):=\lim_{\g\downarrow 0}M^\b(\rho,\g).
$$
As in \cite{lebowitz_martin-lof}, there is non-uniqueness
at $(\rho,0)$ if and only if $M^\b_+(\rho)>0$, and this
motivates the definition
\begin{equation}
\bc^\b:=\inf\{\rho>0:M^\b_+(\rho)>0\},
\label{crit_val_defs_eq}
\end{equation}
see also \eqref{o1} and \eqref{critvals}. We shall have need later for
the infinite-volume limit $\el\cdot\er^{\rf,\b}$,
as $n\to\oo$, with \emph{free} boundary condition in the
$\ZZ^d$ direction.
Note that 
\begin{equation}
\el\cdot\er^{\rf,\b}_{\g=0} = \el\cdot\er^\b_{\g=0}=\el\cdot \er^\b_+ \quad \text{ if } \quad M^\b_+(\rho)=0.
\label{o40}
\end{equation}
The superscript `f' shall always indicate this free boundary condition.

\begin{remark}\label{rc_unique}
It is sometimes convenient to work with the \rc\ (or \fk) representation
of the space--time Ising model, as in \cite{akn,grimmett_stp,GOS}. For $\b\in(0,\oo)$,
let $\fr^{b,\b}$, $b=0,1$, be the $q=2$ \rc\ measures arising
as the limit as $n\to\oo$ of the continuum \rc\ measure on $L_n\times[-\frac12\b,\frac12\b]$
with respectively free/wired boundary condition in the spatial direction.
(There are no ghost-bonds, in that $\g=0$.)
We define $\fr^{b,\oo}$ similarly. 
As discussed in \cite{akn,GOS}, and in \cite{grimmett_RCM} for discrete lattices,
these limits exist, and are equal for all but countably many
values of $\rho$. (They are presumably equal for all
$\rho\ne\bc$, using arguments of \cite{ACCN,bod06,grimmett_RCM}, 
but we do not pursue this further here.)
Furthermore, they are non-decreasing
in $\rho$, and, in particular, 
\begin{equation}
\label{mel61}
\fr^{1,\b} \le \frr^{0,\b}, \qquad \rho<\rho',
\end{equation}
where $\le$ denotes stochastic ordering (see \cite{GOS}). 
In the usual manner, for $\b\in(0,\oo]$,
\begin{equation}
\fr^{1,\b}(x\lra y) = \el\s_x\s_y\er^\b_+, \quad \fr^{1,\b}(0\lra\oo)
=M_+(\rho),
\label{mel60}
\end{equation}
where $\lra$ denotes an open connection in the \rc\ model.
It may be seen as in  
\cite[Thms 4.19, 4.23]{grimmett_RCM} that the $\fr^{b,\b}$ 
have trivial tail $\s$-fields, and are thus mixing and ergodic.
Therefore, the $\fr^{b,\b}$
possess (a.s.) no more than one unbounded cluster, by the 
Burton--Keane argument, \cite{burton_keane,grimmett_RCM}.
By \eqref{mel60}, the
{\fkg} inequality, and the uniqueness of any unbounded cluster,
\begin{equation}
\el\s_x \s_y\er^\b_+ \ge \fr^{1,\b}(x\lra\oo)\fr^{1,\b}(y\lra\oo)
= M^\b_+(\rho)^2.
\label{o50}
\end{equation}
\end{remark}

Let $\b\in(0,\oo)$. Using the convexity of Lemma \ref{ghs_lem}
as in \cite{ellis85:LD}, the derivative $\pd M^\b/\pd\g$ exists for almost every $\g\in(0,\infty)$,
and, when this holds,
\begin{equation}\label{chi_lim_eq}
\chi^\b_n(\rho,\g):=\frac{\partial M^\b_n}{\partial\g}\rightarrow
\chi(\rho,\g):=\frac{\partial M^\b}{\partial\g}< \infty.
\end{equation}
The corresponding conclusion holds also as $n,\b\to\oo$.  
Furthermore, the limits
$$
\chi^\b_+(\rho) := \lim_{\g\downarrow0} \chi^\b(\rho,\g), \qquad \b\in(0,\oo],
$$
exist when taken along suitable sequences. 

The limit
\begin{align}
\chi^{\rf,\b}(\rho,0) &:= \lim_{n\to\oo}\left(\left.\frac{\pd M^{\rf,\b}_n}{\pd \gamma}\right|_{\g=0}\right)
\label{mel65}\\
&=\lim_{n\to\oo}\int_{\L_n^\b} \el\s_0\s_x\er_{n,\g=0}^{\rf,\b}\,dx
=\int\el\s_0\s_x\er^{\rf,\b}_{\g=0}\,dx
\nonumber
\end{align} 
exists by monotone convergence, see Lemma \ref{cor_mon_lem}.
By Lemma \ref{ghs_lem}, 
\begin{equation}
\chi_+^\b(\rho) \ge \chi^{\rf,\b}(\rho,0)\ \text{whenever}\ M_+^\b(\rho)=0,
\qquad \b\in(0,\oo].
\label{mel63}
\end{equation}
Let
\begin{equation}
\bs^\b :=\inf\{\rho>0:\chi^{\rf,\b}(\rho,0)=\infty\},\qquad \b\in(0,\oo].
\label{o27}
\end{equation}
By \eqref{mel61}--\eqref{mel60} and the monotonicity of $\chi^{\rf,\b}(\rho,0)$, 
\begin{equation}
\bs^\b \le \bc^\b.
\label{mel64}
\end{equation}
By the discussion around \eqref{crit_val_defs_eq}--\eqref{o40}, 
there is a unique equilibrium state when $\g=0$ and $\rho<\bc^\b$.  
We shall see in Theorem \ref{0mass} that $\chi^{\rf,\b}(\bs^\b,0)=\oo$.

For $x\in\ZZ^d\times\RR$, let $\|x\|$ denote the supremum norm of $x$.

\begin{theorem}\label{exp_decay_cor}
Let $\b\in(0,\oo]$ and $\rho <\bs^\b$.  There exists $\a=\a^\b(\rho)>0$ such that
\begin{equation}
\el\s_0\s_x\er^\b_+ \leq e^{-\a\|x\|},\qquad x\in\ZZ^d\times\RR.
\end{equation}
\end{theorem}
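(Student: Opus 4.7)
The plan is to iterate the Simon inequality (Lemma~\ref{simon_lem}) using that the susceptibility is finite when $\rho<\bs^\b$.

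First I would observe that, since $\rho<\bs^\b\le\bc^\b$ by \eqref{mel64}, we have $M_+^\b(\rho)=0$, so by \eqref{o40} the state $\el\cdot\er_+^\b$ coincides with the free-boundary zero-field state $\el\cdot\er_{\g=0}^{\rf,\b}$. In particular,
$$
\int_{\ZZ^d\times\RR}\el\s_0\s_x\er_+^\b\,dx=\chi^{\rf,\b}(\rho,0)<\oo.
$$

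Next, for each integer $N\ge 1$, I would take $\Lambda_N=[-N,N]^d\times[-N,N]\subseteq\ZZ^d\times\RR$ (intersected with the temporal domain when $\b<\oo$). The topological boundary $\pd\Lambda_N$ consists of vertical side faces $\{u\}\times[-N,N]$ for $u$ in the spatial boundary of $[-N,N]^d$, together with horizontal top and bottom caps lying in $[-N,N]^d\times\{\pm N\}$. The side faces are $1$-fat; by thickening each cap point into an interval of length $1$ in the time direction (lying just outside $\Lambda_N$), one obtains a closed $1$-fat set $T_N$ that separates $0$ from every point outside $\Lambda_{N+1}$. The sets $T_N$ are pairwise disjoint (up to measure-zero overlaps), so
$$
\sum_{N=1}^\oo\int_{T_N}\el\s_0\s_x\er_+^\b\,dx\le\chi^{\rf,\b}(\rho,0)<\oo,
$$
forcing $\int_{T_N}\el\s_0\s_x\er_+^\b\,dx\to 0$ as $N\to\oo$. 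I would then fix $N_0$ so large that
$$
c:=e^{8\bar\d}\int_{T_{N_0}}\el\s_0\s_x\er_+^\b\,dx<1.
$$

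Then I would apply Lemma~\ref{simon_lem} with $\eps=1$ (first in finite volume, then passed to the $+$ state by dominated convergence, using $|\el\s_A\er|\le 1$) to obtain, for every $y$ with $\|y\|>N_0+1$,
$$
\el\s_0\s_y\er_+^\b\le e^{8\bar\d}\int_{T_{N_0}}\el\s_0\s_x\er_+^\b\,\el\s_x\s_y\er_+^\b\,dx\le c\sup_{x\in T_{N_0}}\el\s_x\s_y\er_+^\b.
$$
By translation invariance of $\el\cdot\er_+^\b$, iterating this bound $k$ times (shifting the origin to the intermediate point at each step) yields $\el\s_0\s_y\er_+^\b\le c^k$ whenever $\|y\|\ge(k+1)(N_0+1)$. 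Setting $\a=(\log(1/c))/(N_0+1)>0$ then gives the claim.

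The main technical obstacle is the construction of $T_N$: one must verify that the thickened set is closed and $1$-fat while still separating $0$ from $y$ in the continuous $\ZZ^d\times\RR$ geometry, so that every lattice path from $0$ to $y$ (including those using bridges to leave and re-enter $\Lambda_N$) necessarily intersects $T_N$. A secondary point is the transfer of Lemma~\ref{simon_lem} from its finite-volume formulation to the infinite-volume $+$ state; this is routine via $\el\cdot\er_n^\b\to\el\cdot\er_+^\b$ and monotone/dominated convergence.
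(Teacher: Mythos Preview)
Your argument is correct and follows essentially the same route as the paper: identify the $+$ state with the free zero-field state via \eqref{o40}, use finiteness of $\chi^{\rf,\b}(\rho,0)$ to find a separating set whose weighted integral is small, and iterate the Simon inequality (Lemma~\ref{simon_lem}).

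The only noteworthy difference is in the choice of separating sets. The paper simply takes the annuli $C_k^\b=\L_k^\b\setminus\L_{k-1}^\b$; for $\b<\oo$ these are spatial annuli times the full circle $[-\tfrac12\b,\tfrac12\b]$, hence automatically closed, $1$-fat, and separating, with no caps to thicken. For $\b=\oo$ the same annuli (now in $\ZZ^d\times\RR$) are already $1$-fat because the time-slab $\{v\}\times([k-1,k]\cup[-k,-(k-1)])$ has length $1$. Thus your ``main technical obstacle'' is largely self-imposed: working with full annuli rather than thickened boundaries avoids the cap construction entirely and makes the disjointness and fatness immediate. Otherwise the two proofs coincide.
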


\begin{proof}
Fix $\b\in(0,\oo)$ and $\g=0$, and let $\rho<\bs^\b$, so that \eqref{o40} applies. 
Therefore,
\begin{equation}
\chi^{\rf,\b}(\rho,0)=\int_{\ZZ^d\times[-\frac12\b,\frac12\b]}\el\s_0\s_x\er^\b \;dx=
\sum_{k\geq1}\int_{C_k^\b}\el\s_0\s_x\er^\b \;dx,
\end{equation}
where $C_k^\b:=\L^\b_k\setminus\L^\b_{k-1}$.
Since $\rho<\bs^\b$, the last 
summation converges, whence, for sufficiently large $k$,
\begin{equation}\label{exp_cond_eq}
\int_{C_k^\b}\el\s_0\s_x\er^\b \, dx<e^{-8}.
\end{equation}
The result now follows in the usual manner
by the Simon inequality, Lemma~\ref{simon_lem}, with the 1-fat separating sets $C_k^\b$.
A similar argument holds when $\b=\oo$.
Further discussion of the method may be found at \cite[Corollary~9.38]{grimmett_RCM}. 
\end{proof}

Let $\b\in(0,\oo]$, $\g=0$ and define the \emph{mass}
\begin{equation}
m^\b(\rho):=\liminf_{|x|\rightarrow\infty}
\left(-\frac{1}{\|x\|}\log\el\s_0\s_x\er^\b_\rho\right)
\end{equation}
By Theorem \ref{exp_decay_cor} and \eqref{o50},
\begin{equation}
m^\b(\rho) 
\begin{cases} >0 &\text{if  }\rho<\bs^\b,\\
=0 &\text{if } \rho>\bc^\b.
\end{cases}
\label{mel66}
\end{equation}

\begin{theorem}\label{0mass}
Except when $d=1$ and $\b<\oo$, $m^\b(\bs^\b)=0$ and $\chi^{\rf,\b}(\bs^\b,0)=\oo$. 
\end{theorem}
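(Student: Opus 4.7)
The approach is a finite-volume continuity argument based on the Lieb inequality (Lemma \ref{lieb_lem}), following the strategy introduced by Simon and Lieb for the classical Ising model. Throughout we work under Assumption \ref{periodic_assump}, so $\bar\d = 1$. Fix $\eps\in(0,1)$ and, for each integer $L\ge 1$, let $T_L = \{x\in\L_L^\b : L-\eps\le\|x\|_\oo\le L\}$, which is an $\eps$-fat set separating $0$ from the complement of $\L_L^\b$. Define the finite-volume quantity
\begin{equation*}
f(L,\rho) := \frac{e^{8\eps}}{\eps}\int_{T_L}\el\s_0\s_y\er^{\rf,\b}_{\L_L^\b}\,dy.
\end{equation*}
Because $f(L,\rho)$ is a ratio of two finite iterated integrals polynomial in $\rho$ with strictly positive denominator (from the Poisson expansion of \eqref{o12}), it is real-analytic, and in particular continuous, in $\rho\in(0,\oo)$.

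The heart of the proof is the equivalence
\begin{equation*}
\chi^{\rf,\b}(\rho,0)<\oo \quad\iff\quad \exists L\ge 1:\ f(L,\rho)<1. \qquad(\ast)
\end{equation*}
For the $\Leftarrow$ direction, assume $f(L,\rho)<1$ for some $L$. Apply Lemma \ref{lieb_lem} (passed to the infinite-volume free-boundary state by monotone limits using Lemma \ref{cor_mon_lem}) with $U=\L_L^\b$, $T=T_L$, $a=0$, and an arbitrary $b$ outside $\L_L^\b$, then iterate along a chain $y_1\in T_L,\ y_2\in y_1+T_L,\dots,y_k\in y_{k-1}+T_L$ advancing toward $b=x$, invoking translation invariance of $\el\cdot\er^{\rf,\b}$ on $\ZZ^d\times\SS_\b$ to identify each $\el\s_{y_i}\s_{y_{i+1}}\er^{\rf,\b}_{y_i+\L_L^\b}$ with $\el\s_0\s_{y_{i+1}-y_i}\er^{\rf,\b}_{\L_L^\b}$. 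A change of variables then collects $k$ independent copies of $f(L,\rho)$, yielding
$$
\el\s_0\s_x\er^{\rf,\b}\le f(L,\rho)^{\lfloor\|x\|/L\rfloor}
$$
and, after integration, $\chi^{\rf,\b}(\rho,0)<\oo$. For the $\Rightarrow$ direction, if $\chi^{\rf,\b}(\rho,0)<\oo$ then $\int_{C_k^\b}\el\s_0\s_y\er^{\rf,\b}\,dy\to 0$ as $k\to\oo$ (cf.\ the proof of Theorem \ref{exp_decay_cor}); since $T_L\subseteq C_L^\b$ and $\el\s_0\s_y\er^{\rf,\b}_{\L_L^\b}\le\el\s_0\s_y\er^{\rf,\b}$ by Lemma \ref{cor_mon_lem}, this forces $f(L,\rho)\to 0$ and so $f(L,\rho)<1$ for all sufficiently large $L$.

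Given $(\ast)$ and the continuity of each $f(L,\cdot)$, the set $\cA:=\{\rho>0:\chi^{\rf,\b}(\rho,0)<\oo\}$ is a union of open sets and hence open; by monotonicity of $\chi^{\rf,\b}(\cdot,0)$ in $\rho$ (via {\gks}, Lemma \ref{gks_lem}) it is an initial interval. Under our exclusion, $\bs^\b<\oo$, and the definition \eqref{o27} combined with the openness of $\cA$ forces $\cA=(0,\bs^\b)$, whence $\chi^{\rf,\b}(\bs^\b,0)=\oo$. For $m^\b(\bs^\b)=0$: suppose instead $m^\b(\bs^\b)>0$. If $M^\b_+(\bs^\b)>0$ then \eqref{o50} gives $\el\s_0\s_x\er^\b_+\ge M^\b_+(\bs^\b)^2>0$ uniformly, which already forces $m^\b(\bs^\b)=0$; so $M^\b_+(\bs^\b)=0$, and by \eqref{o40} the states $\el\cdot\er^\b_+$ and $\el\cdot\er^{\rf,\b}$ at $\g=0$ coincide at $\rho=\bs^\b$. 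An exponential bound $\el\s_0\s_x\er^\b_+\le Ce^{-\a\|x\|}$ is then integrable over $\ZZ^d\times\SS_\b$, giving $\chi^{\rf,\b}(\bs^\b,0)<\oo$ and contradicting what was just proved.

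The main obstacle is the iteration of the Lieb inequality in the continuum setting: one must pass Lemma \ref{lieb_lem} (stated in finite volume) to $\el\cdot\er^{\rf,\b}$ by monotone convergence, align the translated boxes $y_i+\L_L^\b$ along a straight path from $0$ toward $x$ so that at every step $T_L(y_i)$ genuinely separates $y_i$ from $x$, and exploit translation invariance of the infinite-volume state to extract a clean factor of $f(L,\rho)$ at each iteration. The finite-volume continuity of $f(L,\cdot)$ is what upgrades the Simon-type exponential decay of Theorem \ref{exp_decay_cor} into the sharp conclusion that $\bs^\b$ itself lies outside $\cA$.
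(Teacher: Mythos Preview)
Your proposal is correct and follows essentially the same Simon--Lieb strategy as the paper: both arguments hinge on a finite-volume shell integral of the free-boundary two-point function, use the Lieb inequality to show that smallness of this quantity forces exponential decay (hence $\chi^{\rf,\b}<\oo$), and then invoke continuity of the finite-volume quantity in $\rho$ to conclude that the set $\{\rho:\chi^{\rf,\b}(\rho,0)<\oo\}$ is open. The only notable difference is how continuity is obtained: the paper computes the $\rho$-derivative explicitly via $\tfrac12\int\sum_{z\sim y}\el\s_0\s_x;\s_y\s_z\er^{\rf,\b}_{n,\rho}\le d\b(2n+1)^d$, giving a Lipschitz bound, whereas you appeal to analyticity of the finite-volume partition functions; both suffice, and your handling of the state-identification issue (splitting into the cases $M^\b_+(\bs^\b)>0$ and $=0$ before invoking \eqref{o40}) is arguably cleaner than the paper's somewhat terse treatment of the same point.
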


\begin{remark}\label{Lebowitz-ineq}
The manner of the divergence of the susceptibility $\chi$ may be studied via the so-called
Lebowitz inequalities of \cite{leb74}. Such
inequalities are easily proved for the quantum Ising model using the switching lemma.
\end{remark}

\begin{proof}
Let $d \ge 2$, $\g=0$, and fix $\b\in(0,\oo)$.
We use the Lieb inequality,
Lemma~\ref{lieb_lem}, and the argument of~\cite{lieb80,simon80},
see also \cite[Corollary 9.46]{grimmett_RCM}.
It is necessary and sufficient for $m^\b(\rho)>0$ that
\begin{equation}\label{exp_cond_eq_2}
\int_{C_n^\b}\el\s_0\s_x\er^{\rf,\b}_{n,\rho}\, dx<e^{-8}\quad\mbox{for some }n.
\end{equation}
Necessity holds because the integrand is no greater than $\el\s_0\s_x\er^\b$.
Sufficiency follows from Lemma \ref{lieb_lem}, as in the proof of Theorem \ref{exp_decay_cor}.

By \eqref{st_Ising_eq},
\begin{align*}
\frac{\partial}{\partial\rho}\el\s_0\s_x\er^{\rf,\b}_{n,\rho}
&=
\tfrac12 \int_{\L_n^\b}dy\,\sum_{z\sim y}\el\s_0\s_x;\s_y\s_z\er^{\rf,\b}_{n,\rho}\\
&\leq d\b(2n+1)^d.
\end{align*}
Therefore, if $\rho'>\rho$,
\begin{equation}
\int_{C_n^\b}\el\s_0\s_x\er^{\rf,\b}_{n,\rho'}\,dx\leq d[\b(2n+1)^d]^2(\rho'-\rho)
+\int_{C_n^\b}\el\s_0\s_x\er^{\rf,\b}_{n,\rho}\,dx.
\end{equation}
Hence, if \eqref{exp_cond_eq_2} holds for some $\rho$, then it holds for $\rho'$
when $\rho'-\rho>0$ is sufficiently
small.  

Suppose $m^\b(\bs^\b)>0$. Then $m^\b(\rho')>0$ for some $\rho'>\bs^\b$,
which contradicts $\chi^{\rf,\b}(\rho',0)=\oo$, and the first claim of the theorem follows.
A similar argument holds when $d\ge 1$ and $\b=\oo$.
The second claim follows similarly: if $\chi^{\rf,\b}(\bs^\b,0)<\oo$, then
\eqref{exp_cond_eq_2} holds with $\rho=\bs^\b$, whence $m^\b(\rho')>0$
and $\chi^{\rf,\b}(\rho',0)<\oo$
for some $\rho'>\bs^\b$, a contradiction. (See also \cite{aizenman_tree-graph}.)
\end{proof}

We are now ready to state the main results.
The inequalities of Theorems \ref{three_ineq_lem} and \ref{main_pdi_thm}
may be combined to obtain
\begin{equation}
M^\b_n\leq (M^\b_n)^3+\chi^\b_n\cdot
\left(\g+4d\l (M^\b_n)^3+4\d \frac{(M^\b_n)^3}{1-(M^\b_n)^2}\right).
\end{equation}
Using these inequalities and the facts stated above, it is straightforward to adapt the
arguments of~\cite[Lemmas~4.1, 5.1]{ab} (see also \cite{abf,grimmett_perc})
to prove the following. We omit the proofs.

\begin{theorem}\label{ab_thm}
There are constants $c_1$, $c_2>0$ such that, for $\b\in(0,\oo]$, 
\begin{align}\label{ab_1_eq}
M^\b(\bs,\g)&\geq c_1\g^{1/3},\\
\label{ab_2_eq}
M^\b_+(\rho)&\geq c_2(\rho-\bs^\b)^{1/2},
\end{align}
for small positive $\g$ and $\rho-\bs^\b$, \resp. 
\end{theorem}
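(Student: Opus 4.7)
The plan is to pass the combined finite-volume inequality
\begin{equation*}
M^\b_n\le (M^\b_n)^3+\chi^\b_n\cdot\Bigl(\g+4d\l(M^\b_n)^3+\frac{4\d(M^\b_n)^3}{1-(M^\b_n)^2}\Bigr)
\end{equation*}
to the infinite-volume limit, using \eqref{m_lim_eq} and \eqref{chi_lim_eq} together with the concavity of $M$ in $\g$ from Lemma~\ref{ghs_lem}, and then to adapt the integration scheme of \cite[Lemmas~4.1, 5.1]{ab}. Write $M=M^\b(\rho,\g)$ and $\chi=\pd M/\pd\g$. On the regime $M\le\tfrac12$, the inequality simplifies to
\begin{equation*}
\tfrac12 M\le \chi\bigl(\g+b M^3\bigr),
\end{equation*}
where $b=b(\l,\d,d)$ is bounded on compact parameter ranges, giving the cleaner ODE-type bound $\pd M/\pd\g\ge M/\{2(\g+b M^3)\}$.

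For the cube-root bound \eqref{ab_1_eq}, I would fix $\rho=\bs^\b$ and study $M(\g):=M^\b(\bs^\b,\g)$, which is continuous and monotone in $\g>0$ and satisfies $M(0+)=M^\b_+(\bs^\b)=0$ by Theorem~\ref{0mass} together with the definition \eqref{o27}. The classical Aizenman--Barsky bootstrap is: if $M(\g)\le c\g^{1/3}$ on an interval with $bc^3<\tfrac12$, then $\pd M/\pd\g\ge M/\{2(1+bc^3)\g\}$, and integrating gives $M(\g)\ge M(\g')(\g/\g')^{\a}$ for an exponent $\a=1/\{2(1+bc^3)\}>1/3$; pushing $\g'\downarrow 0$ and using $M(0+)=0$ forces a contradiction unless in fact $M(\g)\ge c_1\g^{1/3}$ for small $\g$, with $c_1$ extractable from $(2b)^{-1/3}$.

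For the square-root bound \eqref{ab_2_eq}, I would follow the two-variable argument of \cite[Lemma~5.1]{ab}, using \eqref{dl_bound_eq} to convert the $\chi$-factor in the combined inequality into a bound involving $\pd M/\pd\rho$. On $\{M\le\tfrac12\}$ this yields a schematic inequality of the form
\begin{equation*}
M\le M^3+\g\chi+c'\,\frac{\pd M}{\pd\rho},
\end{equation*}
which we integrate over $\rho\in[\bs^\b,\rho_0]$ at fixed small $\g>0$; using part (a) to control the boundary term $M(\bs^\b,\g)$ and the concavity/monotonicity of $M$ in $\g$ to let $\g\downarrow 0$ along suitable sequences produces $M^\b_+(\rho_0)^2\ge c_2^2\,(\rho_0-\bs^\b)$. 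The main obstacle, shared with the classical case, is managing the two cut-offs simultaneously: one must ensure that the region in $(\rho,\g)$-space over which the integration takes place stays inside $\{M\le\tfrac12\}$ so that the constants $b,c'$ remain uniformly bounded, and that the $\g\chi$ contribution can be absorbed uniformly. This is handled by a continuity-plus-bootstrap argument, invoking the continuity of $M^\b(\rho,\g)$ in both variables for $\g>0$ (a consequence of Lemma~\ref{ghs_lem} as in \cite{ellis85:LD,lebowitz_martin-lof}) and the cube-root bound from part (a) to keep the $\g$-dependence quantitatively under control.
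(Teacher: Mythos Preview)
Your overall approach matches the paper's exactly: the paper gives no proof at all, simply stating that ``it is straightforward to adapt the arguments of~\cite[Lemmas~4.1, 5.1]{ab}'' from the combined inequality displayed just before the theorem, and omitting all details. So at the level of strategy there is nothing to compare; you are filling in what the paper leaves out.

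That said, your sketch of part~\eqref{ab_1_eq} has a genuine gap. You assume $M(\g)\le c\g^{1/3}$ on an interval, derive $\pd_\g(\log M)\ge\a/\g$ with $\a>1/3$, integrate to obtain $M(\g)\ge M(\g')(\g/\g')^\a$, and then claim that sending $\g'\downarrow 0$ together with $M(0+)=0$ yields a contradiction. It does not: the right-hand side $M(\g')(\g/\g')^\a$ is an indeterminate $0\cdot\infty$, and under your standing hypothesis $M(\g')\le c\g'^{1/3}$ one only gets $M(\g')(\g/\g')^\a\le c\g^\a\g'^{1/3-\a}$, which is an \emph{upper} bound tending to $\infty$, hence vacuous. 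Iterating your bootstrap improves the assumed upper bound on $M$ to exponent $\a$, then (since $3\a>1$) to $\tfrac12-o(1)$, and stabilises there rather than producing a contradiction. The actual Aizenman--Barsky argument for the cube-root bound is more delicate: it uses the divergence $\chi^\b_+(\bs^\b)=\infty$ from Theorem~\ref{0mass} not merely to conclude $M(0+)=0$, but quantitatively, and typically proceeds via a two-variable integration in the $(\rho,\g)$-plane (this is the content of \cite[Lemma~4.1]{ab}), or equivalently by combining the combined inequality with the bound $\g\chi\le M$ from~\eqref{dg_bound_eq} and the inequality of Theorem~\ref{main_pdi_thm} in its original form retaining the $\pd M/\pd\l$ term. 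Your sketch for part~\eqref{ab_2_eq} is closer to the mark, though note that to get the $\pd M/\pd\rho$ term you should return to Theorem~\ref{main_pdi_thm} itself rather than the displayed combined inequality, since~\eqref{dl_bound_eq} goes the wrong way to recover $\pd_\rho M$ from $\chi$.
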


This is vacuous when $d=1$ and $\b<\oo$; see \eqref{critvals}.
The exponents in the above inequalities are presumably sharp
in the corresponding mean-field model (see \cite{abf,af} and Remark \ref{remark_mf}).
It is standard that a number of important results follow from
Theorem~\ref{ab_thm}, some of which we state here.

\begin{theorem}\label{eq_cor}
For $d \ge 1$ and $\b\in(0,\oo]$, we have that $\bc^\b=\bs^\b$.
\end{theorem}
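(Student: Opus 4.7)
The plan is to observe that this equality is an immediate consequence of the sharp lower bound \eqref{ab_2_eq} in Theorem \ref{ab_thm}, combined with the already-established inequality \eqref{mel64}. So one direction is free, and the other direction requires only the differential-inequality machinery developed in the preceding sections.

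First I would record the easy direction: \eqref{mel64} gives $\bs^\b\le \bc^\b$, which was a direct consequence of the stochastic monotonicity of the random-cluster measures and the identification of $\chi^{\rf,\b}(\rho,0)$ with an integral of two-point functions. So the only task is to establish $\bc^\b\le \bs^\b$.

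Next, suppose first that $\bs^\b<\oo$. By the critical-exponent bound \eqref{ab_2_eq}, there exists $\eps>0$ such that
\[
M^\b_+(\rho)\ge c_2(\rho-\bs^\b)^{1/2}>0,\qquad \bs^\b<\rho<\bs^\b+\eps.
\]
For every such $\rho$, the definition \eqref{crit_val_defs_eq} of $\bc^\b$ forces $\bc^\b\le\rho$. Letting $\rho\downarrow \bs^\b$ along this range yields $\bc^\b\le \bs^\b$, and together with the first step this gives $\bc^\b=\bs^\b$. If instead $\bs^\b=\oo$, then \eqref{mel64} alone gives $\bc^\b\ge \bs^\b=\oo$, so $\bc^\b=\bs^\b=\oo$ trivially, which also covers the degenerate case $d=1$, $\b<\oo$ recorded in \eqref{critvals}.

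There is essentially no obstacle here, since all the hard work has been done: the substantive content sits in Theorem \ref{ab_thm}, whose proof in turn relies on combining the key differential inequality of Theorem \ref{main_pdi_thm} with the auxiliary inequalities of Theorem \ref{three_ineq_lem} and then integrating, following the scheme of \cite{ab}. The only thing to be careful about is that \eqref{ab_2_eq} is stated only for small positive $\rho-\bs^\b$, but this suffices since one only needs some $\rho$ arbitrarily close to $\bs^\b$ (from above) for which $M^\b_+(\rho)>0$ in order to push $\bc^\b$ down to $\bs^\b$.
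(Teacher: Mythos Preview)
Your proof is correct and follows essentially the same approach as the paper's own proof, which simply notes that (except when $d=1$ and $\b<\oo$) the result is immediate from \eqref{mel64} and \eqref{ab_2_eq}, while in the remaining case $\bc^\b=\bs^\b=\oo$. You have merely spelled out the two-line argument in slightly more detail.
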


\begin{proof}
Except when $d=1$ and $\b<\oo$, this is immediate 
from \eqref{mel64} and \eqref{ab_2_eq}.
In the remaining case, $\bc^\b=\bs^\b=\oo$.
\end{proof}

\begin{remark}\label{remark_mf}
Let $\b\in(0,\oo]$. Except when $d=1$ and $\b<\oo$,
one may conjecture the existence of exponents $a=a^\b(d)$, $b=b^\b(d)$ such that
\begin{alignat}{2}
M^\b_+(\rho)&=(\rho-\bc^\b)^{(1+\o(1))a }\qquad&&\mbox{as }\rho\downarrow\bc^\b,\\
M^\b(\bc^\b,\g)&=\g^{(1+\o(1))/b}\qquad&&\mbox{as }\g\downarrow 0.
\end{alignat}
(We do not exclude the possibility that, when $\b<\oo$, the values of the exponents 
depend also on the value of $\d$.)
Theorem~\ref{ab_thm} would then imply that $a\leq \frac12$ and $b \geq 3$.
In \cite[Thm 3.2]{chayes_ioffe_curie-weiss} it is proved for the ground-state 
quantum Curie--Weiss, or mean-field, model that the corresponding
$a =\frac12$.  It may be conjectured (as proved for the classical Ising model
in \cite{af}) 
that the values $a=\frac12$ and
$b=3$ are attained for the space--time Ising model on $\ZZ^d\times [-\frac12 \b,\frac12\b]$
for $d$ sufficiently large, that is, when either $\b<\oo$ and $d \ge 4$,
or $\b=\oo$ and $d \ge 3$.
\end{remark}

Finally, a note about \eqref{o5}. The \rc\ measure corresponding to
the quantum Ising model is \emph{periodic} in
both $\ZZ^d$ and $\b$ directions, and this complicates the infinite-volume limit.
Since the periodic \rc\ measure dominates the free \rc\ measure, for $\b\in(0,\oo)$,
as in \eqref{mel61} and \eqref{o50},
\begin{alignat*}{2}
\liminf_{n\to\oo} \tau^\b_{L_n}(u,v) &\ge \el\s_{(u,0)}\s_{(v,0)}\er_{+,\rho'}^\b \qquad&&\text{for } \rho'<\rho\\
&\to M_+^\b(\rho-)^2 \qquad &&\text{as } \rho'\uparrow \rho,
\end{alignat*}
and a similar argument holds in the ground state also.

\section{In one dimension}\label{sec_1d}

The space--time version of
the quantum Ising model on $\ZZ$ is two-dimensional,
 living on
$\ZZ\times\RR$.  In the light of \eqref{critvals}, we shall study only
the ground state, and we shall suppress the superscript
$\oo$. One may adapt some of the special
arguments for two-dimensional models based on planar duality.
One consequence is the following.

\begin{theorem}\label{crit_val_cor}
Let $d=1$.  Then $\bc=2$, and the transition is
of second order in that $M_+(2)=0$.
\end{theorem}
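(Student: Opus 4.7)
The plan is to pass to the $q=2$ continuum \rc\ representation $\fr^{b,\oo}$ of Remark \ref{rc_unique}, and exploit planar duality. Since $d=1$ the underlying surface $\ZZ\times\RR$ is planar, and one may construct, on the shifted lattice $(\ZZ+\tfrac12)\times\RR$, a dual continuum \rc\ model in which each primal bridge is crossed by a dual death and each primal death by a dual bridge. A discretize-and-limit computation (as developed in \cite{grimmett_stp,GOS}) shows that for $q=2$ the dual of the measure with parameters $(\l,\d)$ is again a $q=2$ continuum \rc\ measure with parameters $(\l^*,\d^*)=(2\d,\l/2)$, so the dual density is $\rho^*=4/\rho$ and the self-dual point is $\rho=2$. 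Planar duality further provides the key dichotomy: $\fr^{1,\oo}$ has an infinite primal cluster almost surely if and only if the dual measure, which is of the form $\fr^{0,\oo}$ at density $\rho^*$, has no infinite dual cluster.

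To show $\bc^\oo=2$, I would derive both inequalities from this duality combined with the sharpness $\bc^\oo=\bs^\oo$ from Theorem \ref{eq_cor}. Suppose first that $\bc^\oo<2$, and pick $\rho\in(\bc^\oo,2)$: then $\rho>\bc^\oo$ so the primal percolates, while $\rho^*=4/\rho>2>\bc^\oo$ means the dual model (which is of the same family, with its own critical density again equal to $\bc^\oo$) also percolates, contradicting the dichotomy. Conversely, suppose $\bc^\oo>2$ and pick $\rho\in(2,\bc^\oo)$: then $\rho<\bc^\oo$ and also $\rho^*<2<\bc^\oo$, so neither primal nor dual percolates, again violating the dichotomy. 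Hence $\bc^\oo=2$.

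For the continuity claim $M_+^\oo(2)=0$, I would apply duality directly at the self-dual point. First one shows that the free and wired ground-state \rc\ measures coincide at $\rho=2$, using the countability-of-discontinuities result recalled in Remark \ref{rc_unique} together with continuity of the pressure, which for $q=2$ follows from the \ghs\ inequality, Lemma \ref{ghs_lem}. The resulting common measure at $\rho=2$ is then self-dual in distribution, modulo the lattice shift. If it percolated with positive probability, then by self-duality its dual would also percolate almost surely, contradicting the dichotomy above. Hence $\fr^{1,\oo}(0\lra\oo)=0$ at $\rho=2$, and $M_+^\oo(2)=0$ by \eqref{mel60}, which expresses the second-order character of the transition.

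The main obstacle is the planar-duality dichotomy in the continuum $q=2$ setting: one must rigorously formulate the duality transform in the continuum (most cleanly via discretization and a limit), and then combine uniqueness of the infinite \rc\ cluster (recalled in Remark \ref{rc_unique}) with a planar topological obstruction to rule out coexisting primal and dual infinite clusters on $\ZZ\times\RR$. Both ingredients are standard adaptations of the two-dimensional discrete theory but require careful setup in the continuum. Once these are in place, the combination of the sharpness result $\bc^\oo=\bs^\oo$ and the self-duality computation $\rho^*=4/\rho$ does the rest of the work essentially by bookkeeping.
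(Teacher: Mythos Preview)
Your outline has the right ingredients, but the direction $\bc\le 2$ contains a genuine gap. The ``dichotomy'' you state as an if-and-only-if is not available: Zhang's topological obstruction (together with uniqueness of the infinite cluster) rules out \emph{coexistence} of unbounded primal and dual clusters, but it does \emph{not} establish that at least one of them percolates. Indeed, at the critical point of any planar $q=2$ \rc\ model with a continuous transition, neither the wired primal nor the free dual has an infinite cluster, so the iff fails there. In your argument for $\bc>2$ you pick $\rho\in(2,\bc)$ and observe that neither $\phi_\rho^1$ nor its dual $\phi_{4/\rho}^0$ percolates; this is simply not a contradiction to anything you have proved, and your own description of the obstacle (``rule out coexisting primal and dual infinite clusters'') only yields the other implication. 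The paper proceeds differently: assuming $\bc>2$, self-duality at $\rho=2$ forces the $\phi_2^1$-probability of crossing an $n\times n$ box to be bounded below uniformly in $n$, while Theorem~\ref{exp_decay_cor} (applicable since $2<\bc=\bs$ by Theorem~\ref{eq_cor}) forces that probability to decay like $Cne^{-\a n}$. It is this exponential-decay input, not a bare dichotomy, that closes the argument.

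There is a second gap in your route to $M_+(2)=0$: you need $\phi_2^0=\phi_2^1$ at the specific value $\rho=2$, and neither countability of discontinuities (which gives equality only for almost every $\rho$) nor the \ghs\ inequality (which concerns concavity in $\g$, not differentiability of the pressure in $\l$ at a prescribed point) delivers this. The paper instead adapts an argument of Werner: starting from $\phi_2^0(0\lra\oo)=0$ (which \emph{does} follow from Zhang), one passes to the Ising measures $\pi_2$ and $\pi_2^+$, shows via Burton--Keane and a further Zhang-type step that $\pi_2$ has no unbounded $\pm$ spin-clusters, and deduces $\pi_2=\pi_2^+$ by a Markov-domain conditioning argument. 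This yields $\phi_2^1(0\lra\oo)=0$, and hence $M_+(2)=0$, without first establishing $\phi_2^0=\phi_2^1$.
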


We mention two applications of this theorem.
Consider first a `star-like' graph, comprising finitely many
copies of $\ZZ$, pairs of which may intersect at single points.  
It is shown in \cite{bjo0}, using Theorem
\ref{crit_val_cor}, that
the quantum Ising model on such a graph has critical value $\bc = 2$.

Secondly, in an account \cite{GOS} of so-called `entanglement' in the 
quantum Ising model on the subset $[-m,m]$ of $\ZZ$,
it was shown that the reduced density matrix $\nu_{m}^L$
of the block $[-L,L]$ satisfies
$$
\|\nu_m^L - \nu_n^L\| \le \min\{2, C L^{\a}e^{-cm}\},\qquad 2\le m<n<\oo,
$$ 
where $C$ and $\a$ are constants depending on $\rho=\l/\d$,
and $c=c(\rho)>0$ whenever $\rho < 1$. Using Theorems \ref{exp_decay_cor}
and \ref{crit_val_cor}, we have that $c(\rho)>0$ if and only if $\rho< \bc = 2$. 

\begin{proof}
We sketch the proof here.
It uses the random-cluster (or \fk) representation of the
equilibrium state $\el\cdot\er_+$, see Remark \ref{rc_unique}.
Writing $\fr^0$ (\resp, $\fr^1$) for the free (\resp, wired) $q=2$ \rc\ measure, we have 
as in \eqref{mel60} that
\begin{equation}
\el\s_x\s_y\er_+=\fr^1(x\lra y),\quad\el\s_x\er_+=\fr^1(x\lra\infty).
\label{o26}
\end{equation}

Planar duality is a standard tool in two-dimensional models,
and it applies to the \rc\ model on $\ZZ\times \RR$.
The details are similar to those
in related systems, and the reader is referred to
\cite{aizenman_nacht,grimmett_RCM,grimmett_stp} in this regard.
There is a standard computation that shows that,
in a certain sense that is sensitive to
the geometry of the configurations, $\phi_\rho^0$ and $\phi_{4/\rho}^1$
form a dual pair of measures.

The argument developed by Zhang for percolation (see \cite{grimmett_perc,grimmett_RCM})
may be adapted to the current setting to obtain that $\bc \ge 2$.
Roughly speaking, this is as follows. Suppose that $\bc < 2$, so that
there exists, $\phi_2^0$-almost-surely, an unbounded cluster. As in
Remark \ref{rc_unique}, for $b=0,1$, there exists, $\phi_2^b$-almost-surely,
 a unique unbounded cluster.
This implies that both the primal and dual processes at $\rho=2$ contain
unbounded clusters, a possibility that Zhang's construction shows to be contradictory.
The argument so far uses no facts proved in the current paper, and it yields
that 
\begin{equation}
\label{o52}
\phi_2^0(0\lra \oo) = 0.
\end{equation}

We show next that $\bc\le 2$, following the method developed for
percolation to be found in \cite{grimmett_perc,grimmett_RCM}. Suppose that $\bc>2$.  
By the above duality, 
one may find a box of side-length $n$ such that: the $\phi_2^1$-probability
of a crossing of this box is bounded away from $0$ uniformly in $n$.
By \eqref{o26} and Theorem \ref{exp_decay_cor}, this probability
decays to zero in the manner of $C n e^{-\a n}$ as $n\to\oo$, a contradiction.

We show finally that $M_+(2)=0$ by adapting a simple argument 
presented by Werner in \cite{WW08} for the classical Ising model on $\ZZ^2$.
Certain geometrical details are omitted.
Let $\pi_2$ be the Ising state
obtained from a realization of $\phi_2^0$ 
by labelling each open cluster $+1$ with probability
$\frac12$, and otherwise $-1$. By \eqref{o52} and a standard argument
based on the coupling with the \rc\ measure $\phi^0_2$ 
(see \cite[Ex.\ 8.14]{G-pgs}), $\pi_2$ is ergodic.
The Ising state $\pi_2^+$ is obtained similarly from the \rc\ measure
$\phi_2^1$, with the difference that any infinite cluster is invariably
assigned spin $+1$.

We adopt the harmless
convention that, for any spin-configuration $\s$ on $\ZZ\times \RR$, 
the subset labelled $+1$ is closed; the labelling is well-defined except 
at deaths, and we choose to label a death $a$ with the spin $+1$ if and only if at least one of
the intervals abutting $a$ is labelled $+1$. 

Let $\s$ be a spin-configuration on $\ZZ\times\RR$.
The binary relations $\lrao \pm$ are defined as follows.
A \emph{path} of $\ZZ\times\RR$ is a self-avoiding path of $\RR^2$ that:
traverses a finite number of line-segments of $\ZZ\times\RR$, 
and is permitted to connect them by passing between any two points of
the form $(u,t)$, $(u\pm 1,t)$. A path is called a $(+)$path (\resp, $(-)$path)
if all its elements are labelled $+1$ (\resp, $-1$).
For $x,y\in\ZZ\times\RR$, we write 
$x\lrao + y$ (\resp, $x\lrao - y$) if there exists a $(+)$path 
(\resp, $(-)$path) with endpoints $x$, $y$.
Let $N^+$ (\resp, $N^-$) be the number of unbounded $+$ (\resp, $-$)
Ising clusters with connectivity relation $\lrao +$ (\resp, $\lrao -$).
 By the Burton--Keane argument, either
$\pi_2(N^+=1) = 1$ or $\pi_2(N^+=0)=1$. The former entails
also that $\pi_2(N^-=1)=1$, and this is impossible by another use of Zhang's argument.
Therefore, 
\begin{equation}
\pi_2(N^\pm = 0) = 1.
\label{o31} 
\end{equation}

There is a standard argument for deducing $\pi_2=\pi_2^+$ from \eqref{o31},
of which the idea is roughly as follows.
(See
\cite{ACCN} or \cite[Thm 5.33]{grimmett_RCM} for examples of similar arguments
applied to the \rc\ model.) 
Let $\L_n=[-n,n]^2$, viewed as a subset of $\ZZ\times \RR$. The \emph{boundary}
$\pd \L_n$ is defined in the usual way as the intersection of $\L_n$ with
the subset $\RR^2 \sm (-n,n)^2$ of $\RR^2$.
By \eqref{o31}, for given $m$, and for $\eps>0$ and sufficiently large $n$,
the event $A_{m,n}= \{\L_{m+1} \lrao -\pd\L_n\}^\tc$ satisfies $\pi_2(A_{m,n}) > 1-\eps$.

Let $M_n$ be the subset of $\L_n$ containing all points connected to $\pd\L_n$ by $(-)$paths of
$\L_n$. Thus $M_n$ is a union of maximal intervals, and each endpoint of such
an interval either lies in $\pd\L_n$ (and is labelled $-1$), 
or lies in $\L_n\sm\pd\L_n$ (and is labelled $+1$). 
Let $\D M_n$ be the set of all points $(u,t)\in \ZZ\times \RR$ of $\L_n\sm M_n$ satisfying: 
either (i) $(u,t)\notin \pd\L_n$ and $(u,t)$ is an endpoint of
a maximal interval of $M_n$, or (ii) there exists $e\in\{-1,+1\}$ such that $(u,t+e)
\in M_n$. By the definition of $M_n$, every point in $\D M_n$ is labelled $+1$.   

Let $m<n$, and let $I_n$ be the set of all points in $\L_n$ reachable from $\L_m$ along paths 
of $\L_n \sm \D M_n$. The random set $I_n$ is given in terms of $M_n$, and therefore
$I_n$ is measurable on the spin configuration of its complement $\L_n\sm I_n$. 
Given $I_n$, the spin configuration on $I_n$ is a space--time Ising model
with $+$ boundary conditions.
By the \fkg\ inequality, 
conditional on $I_n$ (and the event $A_{m,n}$), the conditional $\pi_2$-measure
on $\L_m$ is stochastically greater than $\pi^+_2$.
By passing to a limit, we obtain that $\pi_2\ge\pi_2^+$. Since $\pi_2\le \pi_2^+$
by elementary considerations of \fkg\ type, we deduce that $\pi_2=\pi_2^+$
as claimed.

One way to conclude that $M_+(2)=0$ is to use
the \rc\ representation again.
By \eqref{o52} and the above, 
$$
\phi_2^0(0\lra \oo) = \phi_2^1(0\lra\oo)=0,
$$
whence
$M_+(2) \le \phi_2^1(0\lra \oo) = 0$.
\end{proof}

\section*{Acknowledgements}
JEB acknowledges financial support from the Royal Institute
of Technology (Sweden), Riddarhuset, Stockholm, and the Engineering
and Physical Sciences Research Council during his
PhD studentship at the University of Cambridge.
GRG thanks the Institut Henri Poincar\'e--Centre Emile Borel, Paris, for its hospitality during
the completion of this project.

\bibliographystyle{amsplain}
\bibliography{qim}

\end{document}